\documentclass[oribibl]{llncs}
\usepackage{multicol}
\usepackage{amsmath}
\usepackage{amssymb}
\usepackage{graphicx}
\usepackage{color}
\usepackage{hhline}
\usepackage{tabu}
\usepackage{smallsec}
\usepackage{enumerate}

\usepackage{wrapfig,lipsum,booktabs}


\usepackage[edges]{forest} 


\newcounter{example-counter}

\newcounter{remark-counter}

\renewenvironment{example}%
{\vskip \abovedisplayskip \refstepcounter{example-counter}%
\noindent {\bf Example \arabic{example-counter}.}}%

\renewenvironment{remark}%
{\vskip \abovedisplayskip \refstepcounter{remark-counter}%
\noindent {\bf Remark \arabic{remark-counter}.}}%

\setcounter{example-counter}{0}

\setcounter{remark-counter}{0}


\newcommand{\boxtheorem}{\hfill $\blacksquare$\\}

\newcommand{\mf}[1]{\mathfrak{ #1}}

\newcommand{\msf}[1]{\mbox{\sf #1}}

\newcommand{\atoms}{\nit{Atoms}}

\newcommand{\red}[1]{\textcolor{red}{#1}}

\newcommand{\green}[1]{\textcolor[rgb]{0.00,0.59,0.00}{#1}}





\newcommand{\In}{\msf{\footnotesize \ in}}
\newcommand{\Out}{\msf{\footnotesize \ out}}

\newcommand{\nit}[1]{{\it #1}}
\newcommand{\mc}[1]{\mathcal{ #1}}

\newcommand{\mbb}[1]{\mathbb{ #1}}

\newcommand{\comlb}[1]{{\vspace{2mm}\noindent \bf \red{COMM(LEO):}}~ #1   \hfill {\bf
		\red{END.}}\\}
\newcommand{\comfa}[1]{{\vspace{2mm}\noindent \bf \green{COMM(FEL):}}~ #1   \hfill {\bf
	\green{END.}}\\}

\newcommand{\ignore}[1]{}

\title{\bf Causality-Based Scores Alignment  in Explainable Data Management\vspace{-1mm}}

\author{{\bf Felipe Az\'ua\inst{1}} \and {\bf Leopoldo Bertossi\inst{2}\thanks{Professor Emeritus. Corresponding Author.}}}

\institute{University of Edinburgh (UK). \ f.r.i.azua@sms.ed.ac.uk
 \and Carleton University (Canada) and IMFD (Chile). bertossi@scs.carleton.ca}

\begin{document}
	\maketitle
    \pagestyle{plain}
    \thispagestyle{empty}

	\begin{abstract} \vspace{-4mm}Different attribution scores have been proposed to quantify the relevance of database tuples for query answering in databases; e.g. Causal Responsibility, the Shapley Value, the Banzhaf Power-Index, and the Causal Effect. They have been analyzed in isolation. This work is a first investigation of score alignment depending on the query and the database; i.e.  on whether they induce compatible rankings of tuples. We concentrate mostly on causality-based scores; and provide a syntactic dichotomy result for queries: on one side,  pairs of scores are always aligned, on the other, they are not always aligned.  It turns out that the presence of exogenous tuples makes a crucial difference in this regard.  \vspace{-2mm}
	\end{abstract}

	\section{Introduction}\label{sec:intro}

    Explanations for query answers in databases (DBs) can be provided in terms of {\em attribution scores}. By assigning numbers to DB tuples, they quantify their relevance or importance for a query answer at hand.
    \ Among those scores, we find {\em Causal Responsibility} \cite{QA_Causality,tocs}, {\em Resilience}  \cite{wolfgang} (closely related to the former), the {\em Shapley Value} \cite{Shapley_Tuple_Bertossi,sigRec21,sigRec23}, the {\em Banzhaf Power-Index} \cite{Shapley_Tuple_Bertossi} (related to the Shapley value), and the  {\em Causal Effect} \cite{Causal_Effect}.
     \
    These are all {\em local scores}, i.e. for a single query answer obtained from a relational DB \cite{bda22}.   \ Although some of these scores have been extended to
  probabilistic databases (PDBs) \cite{senellart24,clear25},  in this work we consider only non-probabilistic relational DBs.

    The above mentioned scores have been investigated in isolation from each other, in terms of their algorithmic, complexity-related, and general mathematical properties. However, comparisons among them in terms of the results they return have received little attention. There is no reason to expect that those  scores return the same {\em numbers}. However, we might be interested in knowing whether they {\em rank} the DB tuples in the same order. This is important in that we may want to identify the most relevant tuples for the query answer, but not necessarily  in their specific numeric scores. \ If this is the case, it is useful to know in advance that for certain classes of queries, this form of relevance is invariant under different scores.

    In this work, we address the problem of identifying conditions on queries and DBs  under which attribution scores are aligned or not (more details below in this section). \
We concentrate mainly in two of those scores, as previously defined and applied to DBs: Responsibility  and the Causal Effect, denoted with {\sf Resp}  and {\sf CES}, respectively. We unveil conditions on queries and DBs that lead to the alignment of these scores  or the lack of it. We also obtain a few  results on the alignment of those scores with the Shapley value, denoted with {\sf Shapley}.

    {\sf Resp} is based on {\em responsibility} \cite{chockler} as introduced in the context of {\em actual causality} and  {\em counterfactual interventions} \cite{HP05,halpern15,H16}. The latter can be seen as hypothetical changes performed on a causal model, to identify other changes. By doing so, one can detect cause-effect relationships.
\ {\sf CES} has its origin in  the {\em causal effect} measure of {\em statistical causality and observational studies} \cite{rubin,holland}, where one usually cannot predefine and build control groups, but they have to be recovered from the available data  \cite{gelman,Pearl_CI,roy&salimi23}.

    {\sf CES} was first  and successfully applied   as an explanation score in data management in \cite{Causal_Effect}, as an alternative to {\sf Resp}, when the latter did not provide intuitive results. However, there are cases of queries and DBs for which {\sf Resp} returns more intuitive results than {\sf CES} \cite[chap. 7]{QA_Caus_Salimi}. Accordingly, an analysis of their alignment is a relevant and interesting problem. \ In this work, {\sf CES} takes a prominent role. It has been less investigated by the data management community than the other scores.\footnote{{\sf CES} was extended in \cite{clear25} to arbitrary PDBs, as the {\em Generalized Causal-Effect Score},  and its general axiomatic properties and complexity were investigated.}

      The Shapley Value  is well-known in  \emph{Coalition Game Theory} as a measure of the contribution of individual players to a shared {\em game (or wealth) function} \cite{shapley1953original,roth1988shapley}.  In  \cite{Shapley_Tuple_Bertossi}, it was applied with the query (Boolean or aggregate) as the game function, to quantify the contribution of tuples to the answer. \ The same was done with the {Banzhaf Power-Index} \cite{banzhaf_value}.
\ Actually, in \cite{Shapley_Tuple_Bertossi},
    it was established that   {\sf CES} coincides with the Banzhaf Power-Index  as applied to DBs. We denote  this particular case of the latter with {\sf BPI}. \   The ``ordinal equivalence", i.e. induction of the same order, of general game-theoretic scores, e.g. of the Shapley Value and the Banzhaf Power-Index, has been investigated  in game-theory  \cite{freixas,freixas12}. \ However, those results  do not provide useful information in our setting.\footnote{Those results basically reduce the problem of score alignment to that of alignment of the integrals defining them.}

    \ignore{\comfa{Just to add a little bit of color here, it is not exactly that the results \textbf{do not apply} in the context, it's that the results itself do not provide any new insight in terms of characterizing the games for which the scores aligned. The papers cited here just transform the scores into another form (based on integrals), and the say that the games in which the scores are aligned are the ones for which these integrals are aligned. I'm ok the current phrasing though, it's just that is more than \textbf{not applying}, but I wouldn't know how to say it politely.}}

In actual causality, counterfactual interventions affect {\em endogenous variables}, but not  {\em exogenous variables}. In our case, this amounts to having a partition of the DB into {\em endogenous and exogenous tuples}. Interventions affecting the former can be thought as tuple insertions and deletions. When the query is monotone, only deletions matter. This kind of partition of the DB has also been considered with {\sf Resp} \cite{QA_Causality}, \ with both {\sf Shapley} and {\sf BPI} \cite{Shapley_Tuple_Bertossi}, and {\sf CES} \cite{Causal_Effect,clear25}. It turns out that exogenous tuples have an effect on the values taken by the scores, and their computational properties. \ Actually, as we show in this work, the possible presence of exogenous tuples may also have an impact on score alignment.

\ignore{++
\red{More specifically,  we investigate in conditions on DBs and Boolean Conjunctive Queries (BCQs) under which pairs of scores are aligned or not. In fact, we  unveil syntactic classes of queries for which the scores are always aligned (for every DB); and also, other classes for which  the scores are not always aligned. }
++}

In this work, our main contribution is a  full syntactic characterization of Boolean conjunctive queries (BCQs) for which
 {\sf CES} and {\sf Resp} are aligned for every DB that may contain exogenous tuples, as is commonly the case.
 \ Furthermore,
 we exhibit syntactic classes of queries for which these two scores are always aligned for every DB without exogenous tuples. \ Finally, we compare  {\sf CES} and {\sf Resp} with {\sf Shapley}, exhibiting syntactic classes of BCQs for which the scores are not aligned. These results open the ground for a more thorough analysis. To the best of our knowledge, this work is the first theoretical investigation of the alignment of attribution scores in data management. Beyond the interest in the particular scores we consider, this work also offers a sort of general template and methodology for analyzing the alignment of other possible scores in the context of DBs (that possibly contain exogenous tuples).

This paper is structured as follows. \  In Section \ref{sec:back}, we provide necessary background, including {\sf Resp}, {\sf Shapley} and {\sf BPI};   and related work. \ In Section \ref{sec:CE}, we present  {\sf CES}. \ In Section \ref{sec:alig}, we show  motivating examples of (non-)alignment, and define alignment. \ In Section \ref{sec:ces-resp}, the main section, we investigate the alignment of {\sc CES} and {\sf Resp}. \ In Section \ref{sec:ces-shap}, we bring {\sf Shapley} into the alignment analysis.
 \ In Section \ref{sec:concl}, we draw some conclusions and point to future work.  Appendix \ref{sec:comparing_scores} presents and proves some basic and unifying properties of the scores that are needed later in Appendix \ref{sec:appendix}, which contains the proofs of the main results.

\vspace{-1mm}
\section{Background} \label{sec:back}

\vspace{-1mm}
\noindent {\bf 1. \ Databases.} \   A relational schema $\mc{S}$ contains a domain of constants, $\mc{C}$,  and a set of  predicate symbols of finite arities, $\mc{P}$. \ignore{ $\mc{R}$ gives rise to a language $\mf{L}(\mc{R})$ of first-order (FO)  predicate logic.  Variables are  denoted with $x, y, z, ...$; and finite sequences thereof with $\bar{x}, ...$; and constants with $a, b, c, ...$, etc.} An {\em atom} is of the form $P(t_1, \ldots, t_n)$, with $n$-ary $P \in \mc{P}$   and $t_1, \ldots, t_n$ are constants  or variables.
  A {\em tuple} is  a {\em ground} atom, i.e. without  variables. A database (instance), $D$, for $\mc{S}$ is a finite set of ground atoms. \ignore{ that serves as an interpretation structure for  $\mf{L}(\mc{R})$.  The {\em active domain} of $D$, denoted ${\it Adom}(D)$, is the set of constants that appear in atoms of $D$. We adopt a set-semantics for DBs.}
\ A {\em conjunctive query} \ (CQ), $\mc{Q}(\bar{x})$,  is a formula of the form: \ignore{an existentially quantified conjunction of atoms:}  \ $\exists  \bar{y}\;(P_1(\bar{x}_1)\wedge \dots \wedge P_m(\bar{x}_m))$, whose
 free variables are $\bar{x} := (\bigcup \bar{x}_i) \smallsetminus \bar{y}$. If $\mc{Q}$ has $n$ free variables,  $\bar{c} \in \mc{C}^n$ \ is an {\em answer} to $\mc{Q}$ from $D$ if $D \models \mc{Q}[\bar{c}]$, i.e.  $Q[\bar{c}]$ is true in $D$  when the variables in $\bar{x}$ are componentwise replaced by the values in $\bar{c}$. $\mc{Q}[D]$ denotes the set of answers to $\mc{Q}$ from $D$. \ $\mc{Q}$ is a {\em Boolean CQ}  when $\bar{x}$ is empty. For a Boolean query (BQ),  when it is {\em true} in $D$,  $\mc{Q}[D] := \{1\}$. Otherwise, if it is {\em false}, $\mc{Q}[D] := \{0\}$. \ However, we write $\mc{Q}[D] =1$ and $\mc{Q}[D] =0$, resp.
\ This paper concentrates on {\em Monotone BQs} (MBQs),  in particular, BCQs. For MBQs it holds, by definition, $\mc{Q}[D] \leq \mc{Q}[D^\prime]$ when $D \subseteq D^\prime$.
\  For a BCQ $\mc{Q}$, $\nit{Var}(\mc{Q})$ denotes the set of its variables. For $v \in \nit{Var}(\mc{Q})$,  $\nit{Atoms}(v)$ denotes the set of atoms in $\mc{Q}$ where $v$ appears; and $\nit{Atoms}(\mc{Q})$ denotes the set of all atoms in $\mc{Q}$. \ A CQ is {\em self-join-free} (SJF) if no predicate symbol appears more than once. \ignore{++  A BCQ query $\mc{Q}$ is {\em hierarchical} if, for any two variables $x, y$ in it, one of the following holds \cite{suciu}: (a) $\nit{Atoms}(x) \subseteq \nit{Atoms}(y)$, (b) $\nit{Atoms}(y) \subseteq \nit{Atoms}(y)$ or (c) $\nit{Atoms}(x) \cap \nit{Atoms}(y) = \emptyset$. \ Otherwise, the query is called \emph{non-hierarchical} \cite{suciu}.++}

\vspace{2mm}
\noindent {\bf 2. \ Actual Causality and Responsibility in DBs.} \
 For causality purposes, some of the tuples in a DB $D$ are  {\em endogenous}, and can be subject to   {\em counterfactual interventions}, in this case, deletions, insertions, or value updates. The other tuples are {\em exogenous}, and are taken as a given, leaving them outside a specific kind of analysis, but they participate in query answering. \ The partition $D = D^\nit{en} \cup D^\nit{ex}$ is application dependent.
 \ A tuple $\tau \in D^\nit{en}$ is a {\em counterfactual cause}   for a BQ $\mc{Q}$ if: $D \models \mc{Q}$, and $(D \smallsetminus \{\tau\}) \not \models \mc{Q}$. \ A tuple $\tau \in D^\nit{en}$ is an {\em actual cause}   for $\mc{Q}$ if there is $\Gamma \subseteq D^\nit{en}$, such that:  $D \models \mc{Q}$, \ $(D \smallsetminus \Gamma) \models \mc{Q}$, and $(D \smallsetminus (\Gamma \cup \{\tau\})) \not \models \mc{Q}$ \ \cite{QA_Causality}.
\ The set $\Gamma$ is called a {\em contingency set} for $\tau$. $\nit{Cont}(D,\mc{Q},\tau)$, or simply, $\nit{Cont}(D,\tau)$, denotes the set of contingency sets for $\tau$.

The {\em responsibility} of $\tau$ as an actual cause for $\mc{Q}$ is defined by \ $\rho(D,\mc{Q},\tau) := \frac{1}{1 + |\Gamma|}$,
	where \(|\Gamma|\) is the cardinality of a minimum-size contingency set for \(\tau\) \ \cite{chockler,QA_Causality}. (We sometimes write $\rho(D,\tau)$ or $\rho(\tau)$.) \ A counterfactual cause is an actual cause with maximum responsibility,  $1$. \ If $\tau$ is not an actual cause, its responsibility is defined as $0$.
 Responsibility as used in DBs will be  denoted with {\sf Resp}. It quantifies the causal contribution of a tuple to a query answer \cite{QA_Causality,tocs,flairs17}.
 	
\vspace{2mm}
     \noindent {\bf 3. \ Shapley Value and Banzhaf Power-Index.} \  For their application in DBs, a BQ $\mc{Q}$ becomes the game function   \cite{Shapley_Tuple_Bertossi,sigRec21,sigRec23}:

    \vspace{-5mm}
    \begin{eqnarray}
    \nit{Shapley}(D,\mc{Q},\tau) \  &=& \!\!\!\! \sum_{S \subseteq (D^\nit{en} \smallsetminus \{\tau\})} \!\!\! \!\!\dfrac{|S|! \cdot (|D^\nit{en}| - |S| - 1)!}{|D^\nit{en}|!}  \times \Delta(\mc{Q},S,\tau), \label{eq:shap}\\
     \nit{BPI}(D,\mc{Q},\tau) \ &:=& \! \! \sum_{S \subseteq D^\nit{en} \smallsetminus \{\tau\}} \!\! \dfrac{1}{2^{|D^\nit{en}|-1}} \times \Delta(\mc{Q},S,\tau), \ \  \mbox{ with }\label{eq:banzhaf}\\
       \Delta(\mc{Q},S,\tau) \ &:=& \ \ \mc{Q}[S \cup D^\nit{ex} \cup \{\tau\}] \ - \  \mc{Q}[S \cup D^\nit{ex}]. \label{eq:delta}
    \end{eqnarray}

   \vspace{-1mm}Since $\mc{Q}$ is Boolean, $\mc{Q}[S] \in \{0,1\}$.
\ The definition of the Shapley value guarantees that {\em it is the only measure} of players' contributions to the coalition game's shared wealth function  that satisfies certain desirable properties \cite{shapley1953original,roth1988shapley}. Those properties provide a categorical axiomatization for the Shapley value. There is also a  categorical axiomatization for the BPI \cite{BPI_Math_Props}. \ In the following we will denote these attribution scores in DBs with {\sf Shapley} and {\sf BPI}, resp.

\vspace{2mm}
 \noindent {\bf 4. \ Probabilistic Databases.}
   \ Given a relational DB, $D$, a PDB, $D^p$ associated to $D$ can be conceived as the {\em collection $\mc{W}$ of possible worlds}, i.e. of all subinstances $W$ of  $D$; where each $W \in \mc{W}$ has an  associated probability $p(W)$. It holds  $\sum_{W\in\mc{W}}p(W) = 1$ \cite{suciu}. \ Accordingly, $D^p$  can be identified with a discrete {\em probability space} $\langle \mc{W}, p\rangle$, with $p$ probability distribution on $\mc{W}$. \ The probability of tuple $\tau$ being in $D$ (as seen through $D^p$) is  \
$P(\tau) \ := \sum_{\tau \in W \in \mc{W}} p(W)$. 
\
Since, $D$ is finite, $\mc{W}$ and the $W$ are all finite.
\ A numeric query $\mc{Q}$ on $D$ becomes a random variable on  $\mc{W}$; with Bernoulli distribution if it is a BQ,  in which case the probability of $\mc{Q}$ (being true) is \
$P(\mc{Q})  := P(\mc{Q} = 1) := \sum_{W \in \mc{W}: \ W \models \mc{Q}} p(W)$. If $\mc{Q}[\bar{x}]$ has $n$ free variables, i.e. $\bar{x}$ is not-empty, the probability of a string of constant  $P(\mc{Q}[\bar{c}])$ is defined by substituting component-wise all variables $\bar{x}$ with $\bar{c}$, and then computing $P(\mc{Q}[\bar{c}]) := \sum_{W \in \mc{W}}\!: W \models \mc{Q}[\bar{c}] p(W)$.

 \  The  particular case of a {\em tuple-independent PDB} (TID)
 is a regular relational DB whose  relations have an extra attribute  with probabilities for the tuples. Tuples are stochastically independent from each other. \ Example \ref{ex:tid} shows a TID. \ The first tuple, $\tau_1$,  is in the DB with probability $p_1$.
 \ For a possible world $W \in \mc{W}$, the corresponding (non-probabilistic) relation $R_W$ will contain only some of the tuples in $R$, and has the probability: \ $p(R_W) := \Pi_{_{\tau_i \in R_W}} p_i \ \times \ \Pi_{_{\tau_j \in (R \smallsetminus R_W)}} (1-p_j)$.

 \begin{wraptable}[7]{r}{0.25\textwidth}
 \vspace{-8mm}  \begin{center}
{\footnotesize
$\begin{tabu}{c|c|c|c||c|}\hline
R & A & B & C& p \\ \hline
\tau_1& a_1 & b_1& c_1&0.8\\
\tau_2 & a_2& b_2& c_2&0.3\\
\tau_3 & a_3& b_3& c_3&0.2\\
\tau_4 & a_4 & b_4& c_4 &0.6\\
\tau_5& a_5& b_5& c_5&1\\
\hhline{~----}
\end{tabu}$
}\vspace{-1mm}
\caption{A TID.}\label{tab:pdbs}
\end{center}
\end{wraptable}

In general, for a TID and a possible world $W$:
$$p(W) \ := \ \Pi_{_{R_W}} p(R_W).$$

\vspace{2mm}

\begin{example} \label{ex:tid} For the TID in Table \ref{tab:pdbs},
the possible world\\
$R_W = \{\tau_1, \tau_3, \tau_5\}$, has the probability:

\vspace{1mm}
 $p(R_W)$ $ = 0.8 \times (1-0.3) \times 0.2 \times (1-0.6) \times 1$. \ \ $\blacksquare$
 \end{example}

\newpage

For a  TID $D^p$ and $W \in \mc{W}$,   $W^p$ denotes the TID that contains only the tuples in $W$ with their probabilities in $D^p$. For example, $W^p$ could contain only tuples $\tau_2, \tau_4$ with probabilities $0.6$ and $0.3$, resp. We use the notations  $P_{\!D}\!(\mc{Q})$ and $P_{W}\!(\mc{Q})$. When clear from context, we simply use $P(\mc{Q})$. \ For a TID $D^p$ associated to $D$, exogenous tuples have probability $1$. Then, for every $W \in \mc{W}$ with $D^{\nit{ex}} \not \subseteq W$,  $p(W) = 0$. \ When all the (endogenous) tuples' probabilities are the same, we have a {\em uniform TID}. \ If this probability is $\frac{1}{2}$, it is a  $U(\frac{1}{2})$-TID.

\vspace{2mm}
\noindent {\bf 5. Related Work. \ }
 The  Shapley Value and the BPI  have been first used in \cite{Shapley_Tuple_Bertossi,sigRec21}, to quantify the contribution of tuples to answers for BCQs and aggregate queries. \ The authors of \cite{benny22} concentrate on computational aspects of the Shapley value applied to query answering. \ See \cite{sigRec23} for a recent review. \ In \cite{bienvenu}, the Shapley value in DBs is connected with the Generalized Model Counting Problem --- in this case, the number of subinstances of a given size that satisfy the query --- to obtain computational and complexity results for new classes of queries, including some unions of CQs. \ In \cite{kara}, the problem of computing the Shapley value of variables in Boolean circuits  is connected with query evaluation in PDBs, obtaining  dichotomy results (similar to those in \cite{dichotomy_UCQ}) for the complexity of Shapley value computation  for query answering in DBs. \ In \cite{senellart24}, the expected value of the Shapley value on TIDs is investigated.

In \cite{deutch2},  {\em rankings} induced by the Shapley values are investigated, rather than  the values themselves. In \cite{deutch3}, the interest is in the combination of {\em data provenance} and the Shapley values, for the computation of the latter.
\  In \cite{deutchBanzhaf}, research  delves more deeply into experimental results around the  BPI for query answering. \ The computational aspects and the complexity of Responsibility in data management were first investigated in   \cite{QA_Causality,QA_CausalityII,tocs}. In \cite{flairs17}, Responsibility under integrity constraints was formalized and investigated. \ Close to Responsibility in DBs, we find the notion of {\em resilience}  \cite{wolfgang}.
\ The use of the {\em causal effect} in data management was proposed  in  \cite{Causal_Effect}. Its connection with the BPI was first established in  \cite{Shapley_Tuple_Bertossi}, from which first results about its  complexity were obtained.

\ignore{+++++++++++++++++++++++++++++++++++++++++++++++++++++
\subsection{Query Lineage}\label{sec:lineage}

We introduce the notion of lineage \cite{suciu,Causal_Effect} and its notation by means of an example.

\begin{example}\label{ex:suffDB} Consider the database instance $D$ consisting only of endogenous tuples, i.e. $D^\nit{ex}= \emptyset$, \ and the Boolean CQ $\mc{Q}\!: \exists x \exists y ( S(x) \land R(x, y) \land S(y))$. \  $\mc{Q}$ is true in $D$: \ $D \models \mc{Q}$.

\begin{multicols}{2}

\begin{center}
$\begin{tabu}{l|c|}
\hhline{~-}D& R\\ \hhline{~-} & \langle c, b \rangle \\  & \langle a, d \rangle \\ & \langle b, b\rangle \\ & \langle e, f\rangle \\
\hhline{~-}
\end{tabu} \hspace{5mm}\begin{tabu}{|c|}
\hline S\\ \hline \langle a \rangle \\ \langle b \rangle \\ \langle c \rangle\\  \hline
\end{tabu}$
\end{center}

\noindent The lineage of query $\mc{Q}$ on  $D$ is the propositional formula:
\begin{eqnarray*}
\mc{L}(\mc{Q},D) &:=& (X_{S(c)} \wedge X_{R(c,b)} \wedge X_{S(b)}) \vee\\ &&(X_{S(b)} \wedge X_{R(b,b)}).
\end{eqnarray*}
\end{multicols}

Here, each $X_\tau$ is a propositional variable associated to a ground atom $\tau$ of the first-order language associated to  the database schema (that includes the finite attribute domains); i.e. a tuple that can be or not be in a database instance $D$. We could built a query lineage using all these atoms $\tau$, even those not in the instance at hand $D$.

In this example,  given that the query is monotone, we have kept only those atoms $X_\tau$ that are true  in that $\tau \in D^\prime$. So, it is the query lineage instantiated on $D$.
\boxtheorem \end{example}

More generally, the query lineage $\mc{L}(\mc{Q},D)$ is a propositional formula for which $D$ and each of its subinstances $D^\prime$ act as (or determine) a truth assignment: \ $\sigma^{\!D^\prime}\!(X_\tau) = 1 \mbox{ iff } \tau \in D^\prime$. $\mc{L}(\mc{Q},D^\prime)$ is {\em true} (under assignment $\sigma^{\!D^\prime}$) iff $D^\prime \models \mc{Q}$.  \ $\mc{L}(\mc{Q},D^\prime)$ is {\em true} under assignment $\sigma^{\!D^\prime}$ iff $D^\prime \models \mc{Q}$. \ With monotone queries we are only interested in subintances of $D$ since only they may switch the query from true to false.

When $D$ is partitioned into $D^\nit{ex}\cup D^\nit{en}$,  we will consider only subintances $D^\prime$ of $D$ that contain $D^\nit{ex}$. Then,  $\sigma^{D^\prime}\!(X_\tau) = 1$, for every $\tau \in D^\nit{ex}$.

Notice that the lineage of a Boolean CQ is always a propositional formula in MON-DNF, i.e. a formula in disjunctive normal form where all atoms appear positively.
++++++++++++++++++++++++++++++++++++++++++}


    \vspace{-1mm}
    \section{The Causal-Effect Score in Databases}
    \label{sec:CE}

     The definition of the {\em Causal Effect Score} (CES) in DBs that we give is different from -but equivalent to- that in \cite{Causal_Effect}.\footnote{The new one becomes a particular case of the {\em Generalized Causal-Effect Score}  \cite{clear25}, that is defined for arbitrary PDBs. In \cite{Causal_Effect} (see also  \cite{bda22}), for the case of DBs with associated TIDs, interventions are applied on the {\em lineage of the query}, which can be seen  as a causal model. It is common to apply interventions on variables of a causal model \cite{Pearl_CI}.\ Those interventions on the lineage act on the random propositional variables $X_\tau$ associated to tuples $\tau$, which are made {\em true} (inserted) or {\em false} (deleted) via $\nit{do}(X_\tau =1)$ or  $\nit{do}(X_\tau =0)$.}  In the following, $D$ is an instance, $\mc{W}$ is the class of all subinstances of $D$, $D^p$ is a  PDB assigning a distribution to $\mc{W}$. Each tuple $\tau \in D$ has a  probability of being true in $D$ (but $1$ if exogenous).

\ignore{\subsection{Interventions Database  Distributions} }

CES is defined through {\em  interventions} on the distribution of $D^p$. They are of the forms $\nit{do}(\tau \In)$ and $\nit{do}(\tau \Out)$, with the intuitive meaning that $\tau$ is {\em  made true} in $D$. Similarly,  $\nit{do}(\tau \Out)$ indicates that $\tau$ is {\em made false} in $D$.

 We use expressions of the form  $P(\mc{Q}=1~|~\nit{do}(\tau \In))$, where $P$ is the distribution induced by $p$ on the BQ $\mc{Q}$ treated as a random variable on $\mc{W}$. Intuitively, it means ``the probability of the query being true given that tuple $\tau$ is made true". This seeming conditionalizations on $\nit{do}(\tau \In)$ and $\nit{do}(\tau \Out)$ should be seen as perturbations of the original distribution $p$ on $\mc{W}$, as we now define.

 \begin{definition}  \label{def:DO} \em \cite{clear25} Consider an instance $D = D^\nit{en} \cup D^\nit{ex}$,   an  associated PDB $D^p=\langle \mc{W}, p\rangle$, and $\tau \in D^\nit{en}$. The interventions on  $D^p$ create the new PDBs:

 \vspace{1mm} \noindent  (a) \ $D^p(\nit{do}(\tau \In)) \ := \ \langle \mc{W},p^{+\tau}\rangle$, such that, for $W \in \mc{W}$: \\ \hspace*{3,8cm}$p^{+\tau}(W)  \ := \sum_{W^\prime \in \mc{W} \ : \ W^\prime \cup \{\tau\} = W}  p(W^\prime)$.

 \vspace{1mm} \noindent (b) \ $D^p(\nit{do}(\tau \Out)) \ := \ \langle \mc{W},p^{-\tau}\rangle$, such that,  for $W \in \mc{W}$: \\
  \hspace*{3.8cm}$p^{-\tau}(W) \ := \sum_{W^\prime \in \mc{W} \ : \ W^\prime \smallsetminus \{\tau\} = W}  p(W^\prime)$.

\vspace{1mm}
  \noindent
  (c) \ For a BQ $\mc{Q}$, and $v \in \{0,1\}$: 

  \vspace{1mm}
  $P(\mc{Q}=v~|~\nit{do}(\tau \mbox{\phantom{o}}\In)) \ := \
p^{+\tau}( \{W \in \mc{W}~|~\mc{Q}[W]=v\} )$, \ and

$
P(\mc{Q}=v~|~\nit{do}(\tau \Out)) \ := \
p^{-\tau}( \{W \in \mc{W}~|~ \mc{Q}[W] =v\})$.
\boxtheorem
\end{definition}

\vspace{-6mm}
\begin{remark} \label{rem:inouts} \ (a) \ In Definition \ref{def:DO}, the PDB may be any TID.  In Definition \ref{def:DO}(c): \ $P(\mc{Q}=1~|~\nit{do}(\tau \In))
=  \sum_{W \in \mc{W}, \ W\cup \{\tau\} \models \mc{Q}} p(W)$. \ Similarly for the other cases.

\vspace{1mm}\noindent (b) \ For  $\tau$ as a ground query: \ $P(\tau~|~\nit{do}(\tau \In)) := P(\tau = 1~|~\nit{do}(\tau \In)) = \sum_{W \in \mc{W}, \ \tau \in W\cup \{\tau\}} p(W) = 1$; \ and
$P(\tau~|~\nit{do}(\tau \Out))  = 0$, \ as intuitively expected.

\vspace{1mm} \noindent
(c) \ When \ $\tau \notin W \in \mc{W}, \ p^{+\tau}(W) = 0$; \ and, when $\tau \in W \in \mc{W}, \ p^{-\tau}(W) = 0$.

\vspace{1mm} \noindent  (d) \ For a TID $D$ and different $\tau,\tau^\prime \in D^\nit{en}$: \ $P^{+\tau}(\tau^\prime) = P^{-\tau}(\tau^\prime) = P(\tau^\prime)$, 
where $P^{+\tau}(\tau^\prime) :=\sum_{W:\tau^\prime \in W}p^{+\tau}(W)$ and $P^{-\tau}(\tau^\prime) :=\sum_{W:\tau^\prime \in W}p^{-\tau}(W)$.
    That is, an intervention $\nit{do}(\tau \In)$ ($\nit{do}(\tau \Out)$, resp.) on a TID changes the probability of $\tau$ to 1 \ (0, resp.), leaving the other probabilities unchanged.
\boxtheorem
\end{remark}

\vspace{-3mm}
\begin{example} \label{ex:pdb} \ (Ex. \ref{ex:tid} cont.) \ Assume all tuples in the  TID $D$ are endogenous. $\mc{W}$  contains all the subsets of $R$; one of them is $W = \{\tau_1,\tau_3,\tau_5\}$, for which:\vspace{-4mm}

\begin{eqnarray*}p^{+\tau_3}(W)   &:=& \sum_{W^\prime \in \mc{W} \ : \ W^\prime \cup \{\tau_3\} = W}  p(W^\prime) \ = \ p(W) + p(\{\tau_1,\tau_5\})\\ &=& p(W) \ + \ 0.8 \times (1-0.3) \times (1-0.2) \times (1-0.6) \times 1.\\
p^{-\tau_3}(W) \ &:=& \sum_{W^\prime \in \mc{W} \ : \ W^\prime \smallsetminus \{\tau_3\} = W}  p(W^\prime) \ = \ \sum_{\emptyset} p(W^\prime) \ = \ 0.
\end{eqnarray*}

\vspace{-0.8cm}
\boxtheorem
\end{example}

    \vspace{-3mm}\begin{definition} \label{def:gce} \em
        Let $D$ be a relational instance with an associated $U(\frac{1}{2})$-TID $D^p = $  $\langle \mc{W},$ $p \rangle$, and
        $\mc{Q}$ a BQ. \ The {\em causal-effect score}, denoted  {\sf CES}, of $\tau \in  D^{\nit{en}}$ on $\mc{Q}$ is:

        \vspace{2mm}
        \hspace*{2cm}$\nit{CE}(D,\mc{Q},\tau) \ := \ \mathbb{E}(\mc{Q}~|~ \nit{do}(\tau \In)) - \mathbb{E}(\mc{Q}~|~\nit{do}(\tau \Out))$. 
        \boxtheorem
    \end{definition}

\vspace{-4mm}
{\sf CES} captures the expected difference in  query value between  having and not having the tuple in the DB. Recall that every BQ becomes a random variable over the intervened possible worlds. \ For a BQ, it holds:
\begin{eqnarray*}\nit{CE}(D,\mc{Q},\tau) &=& \sum_{v \in \{0,1\}} v \times P(\mc{Q} =v~|~\nit{do}(\tau \In)) - \sum_{v \in \{0,1\}} v \times P(\mc{Q} =v~|~\nit{do}(\tau \Out))\\ &=&
P(\mc{Q} = 1~|~\nit{do}(\tau \In)) - P(\mc{Q} = 1~|~\nit{do}(\tau \Out)).
\end{eqnarray*}

\begin{example} \label{ex_paths}  \cite{Causal_Effect}
	Given instance $D$, we are interested in computing causal attributions of the tuples to making true the  query that asks if there
 is a path from $a$ to $b$, on the basis of relation $E$ showing direct connections between nodes. \ This BQ $\mc{Q}$ can be expressed in Datalog (or a union of CQs for a fixed instance).

  In order to use {\sf CES},   we build a  $U(\frac{1}{2})$-TID, as here below. It holds:

 \begin{multicols}{2}
           \begin{center}
    {\scriptsize
            $\begin{tabu}{l|c|c||c|}
				\hline
				E^p~  & ~~A~~ & ~~B~~ & p \\ \hline
				\tau_1 & a & b & \mbox{1/2}\\
				\tau_2& a & c& \mbox{1/2}\\
				\tau_3& c & b& \mbox{1/2}\\
				\tau_4& a & d& \mbox{1/2}\\
				\tau_5& d & e& \mbox{1/2}\\
				\tau_6& e & b& \mbox{1/2}\\ \cline{2-4}
			\end{tabu}$}
       \end{center}

		 \noindent $\nit{CE}(D,\mc{Q},\tau_1) =
   0.65625$, $\nit{CE}(D,\mc{Q},\tau_2)$ $= \nit{CE}(D,\mc{Q},\tau_3) = 0.21875$, and \linebreak
   $\nit{CE}(D,\mc{Q},\tau_4) = \nit{CE}(D,\mc{Q},\tau_5) = \nit{CE}(D,\mc{Q},\tau_6) = 0.09375$.
\end{multicols}

As noticed in \cite{Causal_Effect},  the responsibilities are all  $1\!/3$, despite the fact that they make the query true through paths of different lengths.
   \boxtheorem
   \end{example}

\vspace{-1mm}
  In \cite{Causal_Effect} its was established that, for a MBQ $\mc{Q}$, a database  $D$, and $\tau \in D^\nit{en}$,  $\tau$ is an \emph{actual cause} of $\mc{Q}$ in $D$ iff $\nit{CE}(D,\mc{Q},\tau) > 0$. \
 In \cite{Shapley_Tuple_Bertossi}, it was shown that the CES coincides with {\sf BPI}: \
    $\nit{CE}(D,\mc{Q},\tau) \
    = \ \nit{BPI}(D,\mc{Q},\tau)$. \ignore{++From the results in \cite{Shapley_Tuple_Bertossi} for the BPI, we obtain that: (a) For hierarchical BCQs without self-joins, computing {\sf CES} can be done in polynomial-time in data complexity, but (b) for non-hierarchical queries, computing {\sf CES} is $\#P$-hard.++}

Notice that {\sf CES} uses an auxiliary PDB, but the original DB is not probabilistic, and {\sf CES} is not a probabilistic score in that the value is not a probability. The score  in Definition \ref{def:gce} coincides with that defined in \cite{Causal_Effect} via the query lineage.
\  {\sf Resp}, {\sf Shapley} and {\sf BPI} of Section \ref{sec:back} are not probabilistic scores, and they do not use an intermediate PDB. \ {\em From now on, we will consider only $U(\frac{1}{2})$-TIDs.}

    \section{Score Alignment}\label{sec:alig}

     In this section, we define score alignment, and provide motivating examples.

    \begin{example} \ (Ex. \ref{ex_paths} cont.)
        \label{ex:paths_detailedNEW}
        \  {\sf CES}, {\sf Resp} and {\sf Shapley}  are shown in Table \ref{tab:my-labelNEW}.

        \vspace{-5.5mm}
         \begin{table}[]
            \centering
           {\footnotesize  \begin{tabular}{c|c|c|c}
                 ~~Tuples~~& ~~{\sf CES} (and BPI)~~ & ~~{\sf Resp}~~ & ~~{\sf Shapley}~~\\
                 \hline
                 $\tau_1$               & ~~0.65625~~ & 1/3 & 0.5833 \\
                 $\tau_2, \tau_3$        & ~~0.21875~~ & 1/3 & 0.1333 \\
                 ~~$\tau_4,\tau_5, \tau_6~~$ & ~~0.09375~~ & 1/3 & 0.05 \\ \hline
            \end{tabular}}\vspace{2mm}
            \caption{ \ \ {\sf CES}, {\sf Resp} and {\sf Shapley} for each tuple in $D$.}
            \label{tab:my-labelNEW}\vspace{-6mm}
    \end{table}

    \vspace{-2mm}Here, the less informative score is {\sf  Resp}, which assigns $\frac{1}{3}$ to all the tuples in $D$, despite the fact that the numbers of tuples in each path are different. \ {\sf CES} and {\sf Shapley} return different scores, but produce the same qualitative {\em rankings} for the tuples, i.e. they are equally ordered according to their scores. \ignore{\  We will see in Example \ref{ex:comp_ces_shapley}, that this may not always be the case.} \boxtheorem
    \end{example}

\newpage

Intuitively, two scores are aligned if the induced orders are compatible.

    \begin{definition} \label{def:rank_alignment} \em (score alignment)
       \ Let $D$ be a DB,  $\tau,\tau^\prime \in D^\nit{en}$, and $\mc{Q}$ a BQ.

       \vspace{1mm} \noindent   (a) \ For a  score ${\sf sc}(D,\mc{Q},\cdot)$ as a real-valued function on $D^\nit{en}$, its associated \emph{ranking} is the total preorder relation on $D^\nit{en}$ defined by: \ $\tau \preceq^{\sf sc} \tau^\prime$ \ iff \ ${\sf sc}(D,\mc{Q},\tau) \leq {\sf sc}(D,\mc{Q},\tau^\prime)$, with the strict associated relation: \  $\tau \prec^{\sf sc} \tau^\prime$ \ iff \ ${\sf sc}(D,\mc{Q},\tau) < {\sf sc}(D,\mc{Q},\tau^\prime)$, for which it holds: \ If $\tau \prec^{\sf sc} \tau^\prime$, then $\tau \preceq^{\sf sc} \tau^\prime$, but not $\tau^\prime \preceq^{\sf sc} \tau$.

\ignore{
       \comfa{Just to be explicit, this means the following:
       \begin{enumerate}
           \item Total: All tuples are comparable
           \item Preorder: Reflexive and Transitive, but not necessarily Antisymmetric.
       \end{enumerate}
       If this is the case, I'm OK with this.
       }}

        \vspace{1mm}\noindent
        (b) \ Score ${\sf sc}^\prime$ is {\em aligned} with score ${\sf sc}$ for $(\mc{Q},D)$ if, for every $\tau,\tau^\prime \in D^\nit{en}$: when ${\sf sc}(\tau) \leq {\sf sc}(\tau^\prime)$ holds, then also ${\sf sc}^\prime(\tau) \leq {\sf sc}^\prime(\tau^\prime)$ holds. 
        
       \vspace{1mm}\noindent (c) Scores ${\sf sc}, \ {\sf sc}^\prime$ are \emph{aligned} for $(\mc{Q},D)$ when at least one of them is aligned with the other. \ Scores ${\sf sc}, \ {\sf sc}^\prime$ are {\em aligned} for $\mc{Q}$ \
       when, for every instance $D$ compatible with $\mc{Q}$, they are aligned for $(\mc{Q},D)$.        

        \vspace{1mm}\noindent
        (d) \ $\preceq^{\nit{CE}}$, \ $\preceq^\rho$, and \ $\preceq^{\nit{Sh}}$ denote the preorders for {\sf CES}, {\sf Resp} and {\sf Shapley}, resp. \boxtheorem
\end{definition}

\ignore{
\comlb{The claim in green below is not correct, as pointed by a reviewer. Do we need/use it?}
\comfa{I've fixed the definition. It was unintuitive the fact that It wasn't behaving like that. Now the claim is correct}
}

  \vspace{-7mm}   Notice from (b) that if ${\sf sc}^\prime$ is constant, then it is aligned with any other score. \ Table \ref{tab:my-labelNEW} in Example \ref{ex:paths_detailedNEW} shows that all pairs of scores are aligned.  \ The  counterexamples right below show that there are pairs $(\mc{Q},D)$ for which rankings are not aligned.

   \vspace{-7mm}  \begin{table}
            \centering
            {\footnotesize
            $\begin{tabu}{l|c|}
                \hline
                R~  & ~~X~~ \\\hline
                \tau_1 & a\\
                \tau_2& b\\
                \tau_3& e\\
                \cline{2-2}
            \end{tabu}$~~~~~~
            $\begin{tabu}{l|c|c|}
                \hline
                S & ~~X~~ & ~~Y~~\\
                \hline
                \tau_4 & a & b\\
                \tau_5 & a & c\\
                \tau_6 & a & d\\
                \tau_7 & b & b\\
                \tau_8 & b & c\\
                \tau_9 & b & d\\
                \tau_{10} & e & f\\
                \cline{2-3}
            \end{tabu}$~~~~~~
            $\begin{tabu}{l|c|}
                \hline
                T & ~~Y~~\\
                \hline
                \tau_{11} & b\\
                \tau_{12} & c\\
                \tau_{13} & d\\
                \tau_{14} & f\\
                \cline{2-2}
            \end{tabu}$~~~~~~~~~~~
            $\begin{tabu}{l|c|c|}
                \hline
                \tau   & ~~{\sf CES}~~ & ~~{\sf Resp}~~\\
                \hline
                \tau_3 & ~~0.1292~~ & ~~1/3~~\\
                \tau_4 & ~~0.0829~~ & ~~1/5~~\\
                \tau_{11} & ~~0.1868~~ & ~~1/4~~\\
                \cline{2-3}
            \end{tabu}$

            }
                \vspace{1mm}
            \caption{ \ \ (a) \ Instance $D^\star$. \hspace{2.5cm} (b) \  {\sf CES} and {\sf Resp} for $(D^\star,\mc{Q}_{\sf RST})$.}
            \label{tab:ce_resp}
        \end{table}

  \vspace{-1cm}  \begin{example} \ ({\sf CES} vs {\sf Resp}) \
    \label{ex:comp_ces_resp_non_hierarchical}
        Consider instance $D^\star$ in Table \ref{tab:ce_resp}(a), with only endogenous tuples, \ and $            \mc{Q}_{\sf RST}\!: \ \exists x \exists y(R(x) \wedge S(x,y) \wedge T(y)).$
 \  {\sf CES} and {\sf Resp}  for $\tau_3$, $\tau_4$ and $\tau_{11}$ are shown in Table \ref{tab:ce_resp}(b). \  The induced orders are: \ $\tau_{4} \prec^{\nit{CE}} \tau_3 \prec^{\nit{CE}} \tau_{11}$ \ and  \ $\tau_{4} \prec^{\rho} \tau_{11} \prec^{\rho} \tau_{3}$. \  Then, {\sf Resp} and {\sf CES} are not aligned for $(\mc{Q}_{\sf RST},D^\star)$. \boxtheorem
    \end{example}

 \vspace{-7mm} \begin{table}
        \centering
        {\footnotesize
        $\begin{tabu}{l|c|c|}
            \hline
            R~  & ~~X~~ & ~~Y~~ \\\hline
            \tau_1 & a & c_1\\
            \tau_2& b & c_2\\
            \tau_3& b & c_3\\
            \cline{2-3}
        \end{tabu}$~~~~~~~~~
        $\begin{tabu}{l|c|c|}
            \hline
            S & ~~X~~ & ~~Z~~\\
            \hline
            \tau_4 & a & c_4\\
            \tau_5 & a & c_5\\
            \tau_6 & b & c_6\\
            \tau_7 & b & c_7\\
            \tau_8 & b & c_8\\
            \tau_9 & b & c_9\\
            \cline{2-3}
        \end{tabu}$~~~~~~~~~~~~
        $\begin{tabu}{l|c|c|c|}
            \hline
            \tau   & ~~{\sf CES}~~ & ~~{\sf Resp}~~ & ~~{\sf Shapley}~~\\
            \hline
            \tau_1 & ~~57/256~~ & ~~1/3~~ & ~~400/2520~~\\
            \tau_4 & ~~19/256~~ & ~~1/4~~& ~~151/2520~~\\
            \tau_6 & ~~15/256~~ & ~~1/5~~& ~~169/2520~~\\
            \cline{2-4}
        \end{tabu}$
        }
        \vspace{2mm}
        \caption{ \ \ (a) \ Instance $D^\star$. \hspace{1.5cm} (b) \ {\sf CES}, {\sf Resp} and {\sf Shapley} for $(\mc{Q}_{\sf RS},D^\star)$.}
        \label{tab:ce_shapley}
    \end{table}

\vspace{-10mm}
    \begin{example} \ ({\sf CES} and {\sf Resp} vs {\sf Shapley}) \ \label{ex:comp_ces_shapley}
      Consider instance $D^\star$  in Table \ref{tab:ce_shapley}(a), with only endogenous tuples, and $  \mc{Q_{\sf RS}} \!: \ \exists x \exists y \exists z (R(x,y) \land S(x,z))$. \
        {\sf CES}, {\sf Resp} and {\sf Shapley}  for $\tau_1$,$\tau_4, \tau_6$ are shown  in Table \ref{tab:ce_shapley}(b): \  $\tau_6 \prec^{\nit{CE}}\! \tau_4 \prec^{\nit{CE}}\! \tau_1$ \ and \ $\tau_6 \prec^{\rho}\! \tau_4 \prec^{\rho}\! \tau_1$. \
{\sf CES} and {\sf Resp} are aligned (for these tuples). However,  $\tau_4 \prec^{\nit{Sh}} \! \tau_6 \prec^{\nit{Sh}}\! \tau_1$. Then, {\sf Shapley} is not aligned with  {\sf CES} or {\sf Resp} for $(\mc{Q}_{\sf RS},D^\star)$.  \boxtheorem
    \end{example}

\section{{\sf CES} and {\sf Resp} Alignment} \label{sec:ces-resp}

Here, we investigate the alignment of  {\sf CES} and {\sf Resp}, with  a BCQ $\mc{Q}$ as our main parameter. That is,  under what conditions on $\mc{Q}$, the scores are aligned for $\mc{Q}$ \ {\em for every instance}; and whether there is an instance $D$ for which they are not aligned.  \
    We will  consider the presence and absence  of exogenous tuples.  \  In Section \ref{sub:ces-resp_noex}, we investigate alignment in the absence of exogenous tuples. Until then, we allow instances to have them.
\    We need first some  definitions.

 \begin{definition} \em \label{def:component_coinvar}
      \ (query components and coincident variables) \  Let $\mc{Q}$ be a BCQ.

      \noindent  (a) The undirected graph $G$ associated to $\mc{Q}$ has $\mc{Q}$'s atoms as nodes; and as edges, the pairs of atoms, $(A_i,A_j)$, that have at least one variable in common.

      \noindent  (b) The {\em components} of $\mc{Q}$ are the connected components of  $G$, usually denoted with $C_1,\ldots,C_n$; and
$\mc{Q}_i$ denotes the existential closure of the conjunction of atoms in  $C_i$ (we call it a subquery of $\mc{Q}$).

      \noindent  (c)  $\emptyset \neq V\subseteq  \nit{Var}(\mc{Q})$ is a \emph{coincident set of variables} if  all  $\nit{Atoms}(v)$, with $v \in V$, coincide. \ $\nit{Coin}(\mc{Q})$ denotes the set of all (maximal) coincident sets; and its elements are called {\em coincidences}. \ Singleton coincidences are said to be {\em trivial}. \boxtheorem
    \end{definition}

\vspace{-4mm}
     {\em To simplify the presentation, we will assume that queries do not have constants.} Our results still hold if they have them (see Appendix \ref{sec:appendix}).

\ignore{
     \comlb{That remark is at the very end of the appendix, floating rather out of context. Maybe it should be moved to the main body, or placed in the appendix together with the result for which it is relevant, citing here the result itself rather than the remark. In any case, citing a remark in the appendix is not common practice. }\\
\comfa{I agree. That's a natural question too: why no constants are being considered?. I'll try to incorporate it here.}
     }

 \begin{example} \label{ex:coin} \
       The graph $G$ associated to query $\mc{Q}\!:\! \exists x \exists y \exists z \exists w (R(x,y)\land S(x) \land T(z,w)\land U(z))$  has the set of nodes $\nit{Atoms}(\mc{Q}) = \{R(x,y), S(x),$ $ T(z,w), U(z)\}$. The edges are $\{R(x,y),S(x)\}$ and  $\{T(z,w),U(w)\}$. The components are: \ $C_1 = \{R(x,y),S(x)\}$ and $C_2 = \{T(z,w),U(w)\}$, with associated subqueries  $\mc{Q}_1\!: \exists x \exists y (R(x,y) \land S(x))$ and $\mc{Q}_2 \!: \exists z \exists w (T(z,w) \land U(z))$.
\  $\nit{Atom}(x) = \{R(x,y), S(x)\}$, and $\nit{Atom}(y) = \{ R(x,y)\}$. Since these two sets are different, $\{x,y\}$ is not a coincidence. Actually, \ $\nit{Coin}(\mc{Q}) = \{\{x\}, \{y\}, \{z\}, \{w\}\}$, containing only trivial coincidences. \
         $\mc{Q}^\prime\!:~\exists x \exists y\exists z(R(x,y) \wedge T(x,y,z) \wedge U(z))$ has a single component, and  \ $\nit{Coin}(\mc{Q}^\prime) = \{\{x,y\},$ $ \{z\}\}$ contains the non-trivial coincidence $\{x,y\}$.
        \boxtheorem
    \end{example}

\vspace{-2mm} Notice that: (a) $\nit{Coin}(\mc{Q})$ is a partition of $\nit{Var}(\mc{Q})$. \ (b)  If $\mc{Q}$ contains only trivial coincidences,  $|\nit{Coin}(\mc{Q})| = |\nit{Var}(\mc{Q})|$, as  in Example \ref{ex:coin}. \
    (c) Any two components of a query do not share variables. \ For $v \in \nit{Var}(\mc{Q})$, $\{v\}$ is a coincidence. 

  Components can be identified with ``independent" subqueries. Conversely, variables in a same coincidence can be treated as if they were only one and the same (see \ Proposition \ref{prop:coinc} in Section \ref{sec:multiple}). We  will  take advantage of this when dealing with queries with multiple components. \ We concentrate next on queries with a single component and only trivial coincidences. In Section \ref{sec:multiple}, results for them will be used for general BCQs.

\subsection{Single-Component BCQs and Trivial Coincidences}\label{sec:single}

The two propositions in this section will  give us particular  alignment and non-alignment results. We will use them  later on to obtain more general results.

 \begin{proposition} \em \label{proposition:ces_resp_1}  
        Consider a query of the form $\mc{Q}_{R_n}\!\!\!: \exists x (R_1(x) \land R_2(x) \land \cdots \land R_n(x))$, with
        $n \geq 1$. \ For every instance $D$ with or without exogenous tuples, {\sf CES} and {\sf Resp} are aligned for  $(\mc{Q}_{R_n},D)$. \boxtheorem
    \end{proposition}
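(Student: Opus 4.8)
The plan is to analyze the query $\mc{Q}_{R_n}$ directly by understanding which subinstances satisfy it and which tuples are ``swinging,'' then read off both scores and compare them tuple by tuple. First I would observe that a subinstance $W$ (with $D^\nit{ex} \subseteq W \subseteq D$) satisfies $\mc{Q}_{R_n}$ iff there is a constant $a$ that appears in the $R_i$-column of $W$ for \emph{every} $i = 1,\ldots,n$; call such a constant a \emph{witness}. For each constant $a$, let $\mathbf{t}_a = (t_a^1,\ldots,t_a^n)$ where $t_a^i$ is the tuple $R_i(a)$ if it is in $D$, and note $a$ can only be a witness if all $n$ of these tuples exist in $D$. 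This partitions the ``relevant'' endogenous tuples into \emph{bundles} $B_a = \{R_i(a) : i = 1,\ldots,n\} \cap D^\nit{en}$ indexed by candidate-witness constants $a$; tuples not in any such bundle (i.e. $R_i(b)$ where some $R_j(b)$ is missing) are dummy tuples, so by Proposition~\ref{prop:dummy} we may assume there are none, i.e. every constant appearing in $D$ occurs in all $n$ relations.

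Next I would compute {\sf Resp} for an endogenous tuple $\tau = R_i(a)$. A contingency set $\Gamma$ for $\tau$ must destroy all witnesses other than $a$ (so that removing $\tau$ actually falsifies the query), while keeping the query true before $\tau$ is removed. Destroying witness $b \neq a$ costs at least one tuple from bundle $B_b$, but if $B_b \cap D^\nit{ex} \neq \emptyset$ that bundle cannot be destroyed at all — so if any witness $b \neq a$ is fully exogenous, $\tau$ has \emph{no} contingency set and $\rho(\tau) = 0$; moreover in that case $\tau$ is also not an actual cause, so $\nit{CE}(\tau) = 0$ as well (using the characterization from \cite{Causal_Effect} that $\tau$ is an actual cause iff $\nit{CE}(\tau) > 0$), and the two agree trivially. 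Otherwise, let $m$ be the number of candidate witnesses $b \neq a$ that have at least one endogenous tuple in $B_b$ (the ones that \emph{can} be destroyed); a minimum contingency set picks exactly one endogenous tuple from each such $B_b$, giving $\rho(\tau) = \frac{1}{1+m}$ — crucially \emph{independent of $i$}. So {\sf Resp} only depends on the witness-constant $a$ of $\tau$, not on which relation $R_i$ it sits in; it ranks tuples purely by the quantity $m = m(a)$.

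Then I would compute {\sf CES} $= \nit{BPI}$ via the swinging-set count: $\nit{CE}(D,\mc{Q}_{R_n},\tau) = |\nit{Swin}(D,\mc{Q}_{R_n},\tau)| / 2^{|D^\nit{en}|-1}$. A set $W \ni \tau = R_i(a)$ is swinging for $\tau$ iff $a$ is a witness in $W$ (so $B_a \subseteq W$) but $a$ is \emph{not} a witness in $W \smallsetminus \{\tau\}$ (automatic, since $R_i(a)$ is then missing) \emph{and} no other constant $b$ is a witness in $W$ (i.e. for each $b \neq a$, $W$ omits at least one tuple of $B_b$). Counting such $W$: the tuples of $B_a \smallsetminus\{\tau\}$ must all be in $W$ (that contributes a fixed factor), for each other candidate constant $b$ with $k_b := |B_b \cap D^\nit{en}|$ endogenous members we may include any subset of $B_b \cap D^\nit{en}$ except the full one when $B_b$ is entirely endogenous, or any proper subset forced when $B_b$ has exogenous members that already prevent $b$ from being a witness — and all remaining endogenous tuples (in bundles of other relations/constants, but we've removed dummies so there are none besides these bundles) are free. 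The key point: this count depends only on which \emph{bundle} $B_a$ contains $\tau$, not on the index $i$ within it, because $|B_a \smallsetminus \{\tau\}|$ is the same for every $\tau \in B_a$ when the bundle is fully endogenous (it's $n-1$), and the contribution of the other bundles doesn't involve $\tau$ at all. Hence $\nit{CE}(\tau)$ also depends only on $a$, and is a strictly monotone function of the ``difficulty of destroying the other witnesses,'' which tracks $m(a)$ in the same direction as $\rho$.

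The main obstacle — and the heart of the argument — is showing that {\sf CES} and {\sf Resp}, both now functions of the witness-constant $a$ alone, are \emph{monotonically related}: i.e. $\rho(a) < \rho(a') \iff \nit{CE}(a) < \nit{CE}(a')$. Since $\rho(a) = \frac{1}{1+m(a)}$ is strictly decreasing in $m(a)$, I must show $\nit{CE}(a)$ is also strictly decreasing in the relevant comparison. For this I would write $\nit{CE}(a)$ as a product over the other bundles $B_b$ ($b \neq a$): each fully-endogenous bundle of size $n$ contributes a factor $(2^n - 1)/2^n$ (include any subset but not all of $B_b$), each bundle already ``blocked'' by an exogenous member contributes factor $1$, and the fixed $B_a$-part contributes $1/2^{|B_a \cap D^\nit{en}| - 1} \cdot$(normalization), which is the \emph{same} for the $a$ we're comparing only if $|B_a|$ is the same — this is the subtle point, since bundles could have different sizes if some members are exogenous. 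I expect to handle this by a careful case split: when $B_a$ is fully endogenous the $B_a$-factor is constant, and $\nit{CE}(a) = c \prod_{b \neq a, \text{endo-blockable}} (1 - 2^{-n}) \cdot \prod_{\text{blocked } b} 1$, so $\nit{CE}(a)$ is $c(1-2^{-n})^{m(a)}$, strictly decreasing in $m(a)$ exactly as $\rho$ is; the cases where $B_a$ or $B_b$ have mixed endogenous/exogenous composition need the observation that an exogenous member in $B_b$ both increases $m$'s ``already-blocked'' status and shrinks the $2^{|D^\nit{en}|-1}$ normalizer consistently, keeping the direction of the inequality. Assembling these cases to confirm the two orderings coincide for \emph{every} pair of endogenous tuples (including the trivial pairs within a bundle, where both scores are equal) completes the proof; I expect the bookkeeping of the mixed-composition bundles to be the only genuinely fiddly part, while the fully-endogenous case — presumably the motivating one — is clean.
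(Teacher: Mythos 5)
Your setup (the bundles $B_a$ of tuples sharing a candidate witness constant, removal of dummy tuples via Proposition~\ref{prop:dummy}, and the observation that both scores are constant on the endogenous part of each bundle) matches the paper's, but the comparison step you build on top of it does not work, and it hides the one observation that makes the proof immediate. Push your own case split one step further: if some bundle $B_b$ is fully exogenous, then \emph{every} endogenous tuple sits in some other bundle and has that same indestructible witness $b$, so $\rho$ and $\nit{CE}$ are identically $0$ and there is nothing to compare; otherwise every bundle contains an endogenous tuple, so your quantity $m(a)$ equals $s-1$ for \emph{every} candidate witness $a$ (with $s$ the number of bundles), and hence {\sf Resp} is the \emph{same constant} $\tfrac{1}{s}$ on all of $D^\nit{en}$. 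Alignment then follows at once from the convention the paper adopts right after Definition~\ref{def:rank_alignment}, namely that a score that is constant on $D^\nit{en}$ is aligned with any other score. That is essentially the paper's entire proof; no analysis of {\sf CES} is needed.

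The genuine gap is that the relation you intend to prove, $\rho(a)<\rho(a')$ iff $\nit{CE}(a)<\nit{CE}(a')$ via monotonicity of both scores in $m(a)$, is false: $\nit{CE}$ is not a function of $m$, and it can strictly separate tuples that {\sf Resp} ties. Take $n=2$ and $D=\{R_1(a),R_2(a),R_1(b),R_2(b)\}$ with $D^\nit{ex}=\{R_1(a)\}$. Then $m(a)=m(b)=1$ and $\rho\equiv\tfrac{1}{2}$, but $\nit{CE}(R_2(a))=\tfrac{3}{4}$ while $\nit{CE}(R_1(b))=\nit{CE}(R_2(b))=\tfrac{1}{4}$. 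So the biconditional you are after fails, and the proposition holds only because {\sf Resp} is constant (i.e., via the stated convention), not because the two scores track each other. Relatedly, your bookkeeping of exogenous members is inverted: an exogenous tuple of $B_b$ is forced \emph{into} every possible world, so it makes $b$ \emph{harder}, not easier, to block as a witness --- the factor contributed by $B_b$ is $(2^{k_b}-1)/2^{k_b}$ with $k_b=|B_b\cap D^\nit{en}|$, never $1$, and an exogenous member does not ``already prevent $b$ from being a witness''. Correcting that bookkeeping is precisely what produces the counterexample above, so the mixed-composition case you deferred as ``fiddly'' is in fact fatal to the monotone-tracking strategy.
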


 \vspace{-2mm} \begin{remark} \label{rem:technique} \ The particular query $\mc{Q}_{R_n}$ has a single component, and only the trivial coincidence $\{x\}$. \ To use it and other queries to obtain non-alignment results for a given class, $\mc{C}$, of queries, we will use  the following idea and technique:

\vspace{0.5mm}
  \noindent (a) \ Start with a   counterexample, i.e.  a particular query $\mc{Q}$ and a particular instance $D_0$, for which the non-alignment result holds. $\mc{Q}$ belongs to $\mc{C}$, and is, in some sense, ``contained" in the other queries in $\mc{C}$.

\vspace{0.5mm}
  \noindent  (b) \ Given an arbitrary query $\mc{Q}^\prime$ in $\mc{C}$, query $\mc{Q}$ is reconstructed as a part of $\mc{Q}^\prime$.

  \noindent  (c) \ In order to obtain an instance
    $D$ for $\mc{Q}^\prime$ (where there is no alignment),   make
     $D$ an extension of $D_0$  by adding exogenous tuples to the latter.
     \ In this way, every tuple in $D_0$ has a corresponding tuple in $D$, and tuples in $D$ will be endogenous iff the original tuples in $D_0$ are endogenous.

    \noindent (d) \ Tuples in $D$ corresponding to tuples in $D_0$, have the same {\sf CES} and {\sf Resp} scores as in $D_0$.

    This technique will be used for Proposition \ref{proposition:ces_resp_2}, with   (counter)Example \ref{ex:counter_q2}.
    \boxtheorem
    \end{remark}

\vspace{-2mm}
    \begin{example}
    \label{ex:counter_q2}
        Consider the query \ $\mc{Q}_{2}\!: \exists x \exists y (R(x,y)\land S(x))$, and $D_2$  the instance  in Table \ref{tab:counter_q2}, with $D_2^\nit{ex} =\{\tau_4\}$ (underlined).  It holds: \ $\tau_3 \prec^\nit{CE} \! \tau_1$ and $\tau_1 \prec^{\rho} \! \tau_3$. \  {\sf CES} and {\sf Resp} are not aligned for $(\mc{Q}_2,D_2)$.

        \vspace{-5mm}
        \begin{table}[h]
            \centering
            {\footnotesize
            $\begin{tabu}{l|c|c|}
                \hline
                R & ~~X~~ & ~~Y~~\\
                \hline
                \tau_{1}  & a & a\\
                \tau_{2}  & a & b\\
                \tau_{3} & b & a\\
                \cline{2-3}
            \end{tabu}$~~~~~~$\begin{tabu}{l|c|}
                \hline
                S~  & ~~X~~ \\\hline
                \underline{\tau_{4}} & a\\
                \tau_{5} & b\\
                \cline{2-2}
            \end{tabu}$~~~~~~~~~~~~~~~~~~
            $\begin{tabu}{l|c|c|}
                \hline
                \tau   & ~~{\sf CES}~~ & ~~{\sf Resp}~~\\
                \hline
                \tau_{1} & ~~0.375~~ & ~~1/3~~\\
                \tau_{3} & ~~0.125~~ & ~~1/2~~\\
                \cline{2-3}
            \end{tabu}$

            }\vspace{2mm}
            \caption{ \ \ \ \   (a) \ Instance $D_{2}$.  \hspace{1.5cm} (b) \  {\sf CES} and {\sf Resp} for $\tau_1$ and $\tau_3$.}
            \label{tab:counter_q2}\vspace{-5mm}
        \end{table}

\vspace{-5mm}$\mc{Q}_2$ has a single component, and only trivial coincidences: $\{x\}$ and $\{y\}$.
\boxtheorem\end{example}

  \vspace{-2mm}  \begin{proposition} \label{proposition:ces_resp_2} \em
        Let $\mc{Q}$ be a BCQ with a single component, without non-trivial coincidences, and $|\nit{Var}(\mc{Q})| \geq 2$. \ There is an instance $D$ with exogenous tuples, such that {\sf CES} and {\sf Resp} are not aligned for $(\mc{Q},D)$. \boxtheorem
    \end{proposition}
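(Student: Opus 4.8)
The plan is to apply the technique of Remark~\ref{rem:technique}, taking as the base counterexample the pair $(\mc{Q}_2, D_2)$ of Example~\ref{ex:counter_q2}. The argument splits into a syntactic step that locates a ``$\mc{Q}_2$-shaped'' fragment inside any query meeting the hypotheses, and a construction step that inflates $D_2$ into an instance $D$ over the schema of $\mc{Q}$ whose {\sf CES} and {\sf Resp} values, on the relevant tuples, coincide with those of Example~\ref{ex:counter_q2}; the non-alignment then transfers verbatim.

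For the syntactic step, I would first prove: if $\mc{Q}$ has a single component, no non-trivial coincidences, and at least two variables, then there are distinct variables $x,y$ and atoms $A,A'$ of $\mc{Q}$ such that $x$ and $y$ both occur in $A$, while $x$ occurs in $A'$ but $y$ does not. The proof is by contradiction. If no such configuration existed, then any two variables co-occurring in some atom would satisfy $\nit{Atoms}(x) = \nit{Atoms}(y)$, hence lie in the same coincidence; since within a single atom all variables co-occur pairwise, all variables of any one atom would then lie in a common coincidence, and since $\mc{Q}$ is connected, atoms adjacent in the graph $G$ of Definition~\ref{def:component_coinvar} share a variable, forcing their coincidences to coincide. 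Propagating along $G$ shows that $\nit{Var}(\mc{Q})$ forms a single coincidence, which — having at least two elements — is non-trivial, contradicting the hypothesis.

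Fixing such $x,y,A,A'$, I would build $D$ from $D_2$ as in Remark~\ref{rem:technique}(c). Keep the constants $a,b$ used in $D_2$ and add a fresh padding constant $c$. Put into $D$: (i) three endogenous tuples for $A$ agreeing with $\tau_1,\tau_2,\tau_3$ on the $x$- and $y$-positions (i.e.\ $(x,y)\mapsto(a,a),(a,b),(b,a)$) and equal to $c$ on every other position; (ii) two tuples for $A'$, one exogenous and one endogenous, agreeing with $\tau_4$ and $\tau_5$ on the $x$-position ($x\mapsto a$ and $x\mapsto b$) and equal to $c$ elsewhere; (iii) for every remaining atom $A''$ of $\mc{Q}$, the exogenous ``padding'' tuples obtained by sending $x$ to $a$ or $b$ (if $x$ occurs in $A''$), $y$ to $a$ or $b$ (if $y$ occurs in $A''$), and all other variables to $c$. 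Then $D^\nit{en}$ consists of exactly four tuples, matched bijectively with $D_2^\nit{en}$. A homomorphism from $\mc{Q}$ into a world $W$ with $D^\nit{ex}\subseteq W\subseteq D$ must send $x$ to $a$ or $b$, atom $A$ to one of the three tuples in (i), and atom $A'$ to one of the two tuples in (ii); reading off the induced $x,y$ values yields a homomorphism from $\mc{Q}_2$ into the corresponding world of $D_2$, and conversely any homomorphism of $\mc{Q}_2$ extends by sending all remaining variables to $c$ and using the padding tuples. Hence the correspondence between subinstances preserves query truth and endogenous membership, so it preserves all contingency sets and all the conditional probabilities of Definition~\ref{def:DO}, giving $\rho(D,\mc{Q},\cdot)=\rho(D_2,\mc{Q}_2,\cdot)$ and $\nit{CE}(D,\mc{Q},\cdot)=\nit{CE}(D_2,\mc{Q}_2,\cdot)$ on the four endogenous tuples; the non-alignment of Example~\ref{ex:counter_q2} therefore holds for $(\mc{Q},D)$.

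The step I expect to be the main obstacle is self-joins: if some atom $A''$ of $\mc{Q}$ uses the same relation symbol as $A$ or $A'$, the padding tuples contributed by $A''$ could be indistinguishable from the tuples in (i) or (ii), breaking the forward direction of the subinstance correspondence. I would resolve this by refining (iii) to use a distinct fresh padding constant $c_{A''}$ per atom on the positions occupied by variables not shared with $A$ or $A'$, so that tuples originating from different atoms over a common relation stay syntactically separated while the values forced on shared variables remain consistent; this is routine but requires careful bookkeeping, and it (together with the constant-free assumption noted just before the statement) is where the remaining care lies.
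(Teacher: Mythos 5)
Your proposal is correct and follows essentially the same route as the paper's proof: the technique of Remark~\ref{rem:technique} applied to the base counterexample $(\mc{Q}_2,D_2)$ of Example~\ref{ex:counter_q2}, locating one atom containing both $x$ and $y$ and another containing $x$ but not $y$, and padding every remaining atom with exogenous tuples so that the four endogenous tuples of $D$ inherit the {\sf CES} and {\sf Resp} values from $D_2$. If anything, your version is more careful on two points the paper's proof glosses over: you actually prove, via connectivity of $G$ and the triviality of the coincidences, that the required variable/atom configuration exists (the paper instead posits a pair with $\nit{Atoms}(y)\subsetneqq\nit{Atoms}(x)$, which need not exist, e.g.\ for $\exists x\exists y(R(x)\wedge S(x,y)\wedge T(y))$ the two atom-sets are incomparable, whereas your weaker configuration always does), and you flag the self-join complication, which the paper only rules out later, in Section~\ref{sec:single+}.
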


    \vspace{-2mm} The following example illustrates the proof of Proposition \ref{proposition:ces_resp_2}, showing  the construction of DB $D$.

   \vspace{-2mm}  \begin{table}
            {\footnotesize
          \hspace*{2mm}  $\begin{tabu}{l|c|c|c|}
                \hline
                R_1 & ~X~ & ~Y~ & ~Z~\\
                \hline
                \underline{\tau_{1}} & c^\prime & a & a\\
                \underline{\tau_{2}} & c^\prime & a & b\\
                \underline{\tau_{3}} & c^\prime & b & a\\
                \cline{2-4}
            \end{tabu}$~~~
            $\begin{tabu}{l|c|c|}
                \hline
                R_2 & ~Y~ & ~W~ \\
                \hline
                \underline{\tau_{4}} & a & c^\prime\\
                \tau_{5} & b & c^\prime\\
                \cline{2-3}
            \end{tabu}$~~
            $\begin{tabu}{l|c|c|c|}
                \hline
                R_3 & ~Y~ & ~Z~ & ~W~\\
                \hline
                \tau_{6} & a & a & c^\prime\\
                \tau_{7} & a & b & c^\prime\\
                \tau_{8} & b & a & c^\prime\\
                \cline{2-4}
            \end{tabu}$~~
            $\begin{tabu}{l|c|}
                \hline
                R_4 & ~W~\\
                \hline
                \underline{\tau_{9}} & c^\prime \\
                \cline{2-2}
            \end{tabu}$
            ~~~~
            $\begin{tabu}{l|c|c|}
                \hline
                \tau   & ~~{\sf CES}~~ & ~~{\sf Resp}~~\\
                \hline
                \tau_6 & ~~0.375~~ & ~~1/3~~\\
                \tau_8 & ~~0.125~~ & ~~1/2~~\\
                \cline{2-3}
            \end{tabu}$

            }\vspace{2mm}
            \caption{ \  (a) \ Instance $D$ built from $D_2$ (in Ex. \ref{ex:counter_q2}). \hspace{0.2cm} (b) \ {\sf CES} and {\sf Resp}.}
            \label{tab:ces_resp_q2_ex} 
        \end{table}

    \begin{example} \ \label{ex:constr}
        Consider   $
        \mc{Q}\!: \exists x \exists y \exists z \exists w (R_1(x,y,z) \land R_2(y,w) \land R_3(y,z,w) \land R_4(w))$ that satisfies the hypothesis of Proposition \ref{proposition:ces_resp_2}.
        \  Consider variables $y$ and $z$, for which $\nit{Atoms}(z) \subsetneqq \nit{Atoms}(y)$, and atoms $R_2(y,w)$ and $R_3(y,z,w)$.\footnote{The selected variables and atoms are similar  to variables $x,y$ and atoms $R_x$ and $R_y$ in the proof of Proposition \ref{proposition:ces_resp_2} in  Appendix B.}
        \   Next, build  $D$ (following the general proof in  Appendix B) with the relations in Table \ref{tab:ces_resp_q2_ex}(a), and
        exogenous tuples in $D^\nit{ex} = \{\tau_1,\tau_2,\tau_3,\tau_4,\tau_9\}$ (underlined).

         It turns out that
        {\sf CES} and {\sf Resp} are not aligned for $(\mc{Q},D)$: \  $\tau_8 \prec^\nit{CE} \! \tau_6$ \ and \ $\tau_6 \prec^\nit{CE} \! \tau_8$.
        \  Notice that $D$ is built from $D_2$ in Example \ref{ex:counter_q2}:  Tuples $R_2$ and $R_3$ in $D$ are in 1-to-1 correspondence with those in $S$ and $R$ in $D_2$; the rest are exogenous.
        {\boxtheorem}
    \end{example}

\vspace{-5mm}
\subsection{Non-Trivial Coincidences}\label{sec:single+}

\ignore{\comlb{If I understand correctly, in this section we may have multiple components, right? However, all the examples in this section have a single component. Clarify. Also, look at the start of the next subsection. It may lead to the conclusion that here a single component is assumed.}
\\ \comfa{All results for this section holds for queries with 1 or more components. If it's just a matter of examples, I could change them, but they will probably take out some additional space.}}

We now deal with BCQs that can be multi-component and have non-trivial coincidences.
Proposition \ref{lemma:coin_var}, the main result of this section, will help us extend results of Section \ref{sec:single} to self-join-free BCQs (SJF-BCQs), in Section \ref{sec:multiple}.

 \begin{proposition} \label{prop:coinc} \em (mapping coincident variables) \
        Let $\mc{Q}$ be a  SJF-BCQ for schema $\mc{S}$; $x,y \in \nit{Var}(\mc{Q})$ with $\nit{Atoms}(x) = \nit{Atoms}(y)$. Let $\mc{Q}^\prime$ be the BCQ  for  schema $\mc{S}^\prime$ obtained by decreasing the arity of  predicates of atoms in $\nit{Atoms}(x)$ by $1$, and replacing the joint occurrences of $x,y$ by a  fresh variable $v$. \ Then, \ for every instance $D$ for $\mc{S}$, there is an efficiently computable $f$ on $D$, such that $D^\prime := f(D)$ is an instance for $\mc{S}^\prime$, and for every $S \subseteq D$: \ $\mc{Q}[S] = \mc{Q}^\prime[f(S)]$.\boxtheorem
         \ignore{one can build an instance $D^\prime$ for $\mc{S}^\prime$ via a linear-time computable transformation $f$ of tuples in $D$, such that, for every $S \subseteq D$, \ $\mc{Q}[S] = \mc{Q}^\prime[f(S)]$. \boxtheorem}
    \end{proposition}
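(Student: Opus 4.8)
The plan is to give the transformation $f$ explicitly and then prove the claimed identity by turning satisfying homomorphisms for $(\mc{Q},S)$ into satisfying homomorphisms for $(\mc{Q}',f(S))$ and back. Write $\mc{A}:=\nit{Atoms}(x)=\nit{Atoms}(y)$. By the arity hypothesis in the statement, $x$ and $y$ occur exactly once each in every atom of $\mc{A}$, and $\mc{A}\neq\emptyset$ (else $x,y\notin\nit{Var}(\mc{Q})$), so the fresh variable $v$ really occurs in $\mc{Q}'$. For $A\in\mc{A}$ with relational predicate $R$, let $p^x_A\neq p^y_A$ be the positions of $x$ and $y$ in $A$. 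Self-join-freeness of $\mc{Q}$ means the predicates of the atoms in $\mc{A}$ are pairwise distinct, so passing from $\mc{S}$ to $\mc{S}'$ (reduce the arity of each such $R$ by $1$ by deleting position $p^y_A$; keep all other predicates unchanged) is well defined, and $\mc{Q}'$ is $\mc{Q}$ with each such atom $R(\cdots)$ rewritten to $R'(\cdots)$ by dropping the occurrence of $y$ and renaming the occurrence of $x$ to $v$.

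First I would define $f$ tuple-wise. On a tuple over a predicate not appearing in $\mc{A}$, $f$ is the identity. On a tuple $\tau=R(a_1,\dots,a_k)$ over a predicate $R$ of some $A\in\mc{A}$, set $f(\tau):=R'(b_1,\dots,b_{k-1})$, the tuple obtained from $\tau$ by deleting the entry $a_{p^y_A}$ and replacing the entry $a_{p^x_A}$ by the pair $\langle a_{p^x_A},a_{p^y_A}\rangle$. Then $D':=f(D)$ is an instance for $\mc{S}'$ whose active domain may contain such pairs (which can be coded injectively as fresh constants if a flat domain is desired), $f$ is injective on each relation and maps distinct predicates to distinct predicates, and it is computable in linear time because each tuple is processed in time bounded by the fixed maximal arity; also $f(S)=\{f(\tau):\tau\in S\}$ for every $S\subseteq D$.

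Next I would show that $S\models\mc{Q}$ iff $f(S)\models\mc{Q}'$; since $\mc{Q}[S],\mc{Q}'[f(S)]\in\{0,1\}$ this yields $\mc{Q}[S]=\mc{Q}'[f(S)]$. For the forward direction, from a homomorphism $h$ of $\mc{Q}$ into $S$ define $h'$ by $h'(w):=h(w)$ for $w\neq v$ and $h'(v):=\langle h(x),h(y)\rangle$; for every atom $A'$ of $\mc{Q}'$ one checks $h'(A')=f(h(A))\in f(S)$ --- trivially when $R\notin\mc{A}$, and in the deleted-position case because both tuples are $R'$ applied to the $h$-images of the positions other than $p^x_A,p^y_A$ together with the pair $\langle h(x),h(y)\rangle$ in the surviving position. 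For the backward direction, from a homomorphism $h'$ of $\mc{Q}'$ into $f(S)$ observe that in $\mc{Q}'$ the variable $v$ occurs only at positions that, in $f(S)$, always carry a pair, so $h'(v)=\langle c_1,c_2\rangle$ for some $c_1,c_2$; set $h(w):=h'(w)$ for $w\notin\{x,y\}$, $h(x):=c_1$, $h(y):=c_2$, and invert $f$ atom by atom. For $A\in\mc{A}$, injectivity of $f$ on $R$-tuples gives a unique $\tau\in S$ with $f(\tau)=h'(A')$, and unpacking shows $\tau=h(A)$, so $h(A)\in S$; the remaining atoms are immediate. Hence $h$ is a homomorphism of $\mc{Q}$ into $S$.

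The hard part is not the homomorphism transfer (routine once $f$ is fixed) but the position bookkeeping when $x$ or $y$ could occur several times in a single atom, or when one atom is shared by both variables; the ``arity decreases by $1$'' clause of the statement confines us to one occurrence of $x$ and one of $y$ per atom of $\mc{A}$, where the argument above goes through cleanly. I would also remark that the same pairing device covers the general case, by collapsing the entire block of positions occupied by $\{x,y\}$ in each atom into a single position coding the tuple of their values, with a correspondingly larger arity decrease.
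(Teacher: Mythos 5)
Your proof is correct and follows essentially the same route as the paper's: both encode the pair of constants occupying the $x$- and $y$-positions as a single fresh constant (your explicit pair $\langle c_1,c_2\rangle$ is exactly the paper's ``unique fresh constant for each unique combination''), and both rely on self-join-freeness to make the per-predicate rewriting well defined. If anything, your explicit two-way homomorphism transfer is more careful than the paper's argument, which leans mainly on the bijectivity of $f$.
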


 \vspace{-4mm}We illustrate the proposition and  its proof by means of  an example.

\vspace{-5mm}
   \begin{table}
            \centering
            {\footnotesize
            $\begin{tabu}{l|c|c|}
                \hline
                R & ~X~ & ~Y~\\
                \hline
                \underline{\tau_1} & a & b\\
                \underline{\tau_2} & a & c\\
                \tau_3 & a & d\\
                \tau_4 & b & b\\
                \cline{2-3}
            \end{tabu}$~~~
            $\begin{tabu}{l|c|c|c|}
                \hline
                S & ~Y~ & ~Z~ & ~X~\\
                \hline
                \tau_5 & b & a & a\\
                \tau_6 & b & b & a\\
                \tau_7 & c & a & a\\
                \tau_8 & d & a & a\\
                \tau_9 & b & a & b\\
                \tau_{10} & c & a & c\\
                \cline{2-4}
            \end{tabu}$~~~
            $\begin{tabu}{l|c|}
                \hline
                T & ~Z~  \\
                \hline
                \tau_{11} & a\\
                \tau_{12} & b\\
                \cline{2-2}
            \end{tabu}$~~~~~~~~~~~
            $\begin{tabu}{l|c|}
                \hline
                R^\prime & ~V~ \\
                \hline
                \underline{\tau_1^\prime} & c_1\\
                \underline{\tau_2^\prime} & c_2\\
                \tau_3^\prime & c_3\\
                \tau_4^\prime & c_4\\
                \cline{2-2}
            \end{tabu}$~~~
            $\begin{tabu}{l|c|c|}
                \hline
                S^\prime & ~V~ & ~Z~\\
                \hline
                \tau_5^\prime & c_1 & a\\
                \tau_6^\prime & c_1 & b\\
                \tau_7^\prime & c_2 & a\\
                \tau_8^\prime & c_3 & a\\
                \tau_9^\prime & c_4 & a\\
                \tau_{10}^\prime & c_5 & a\\
                \cline{2-3}
            \end{tabu}$~~~
            $\begin{tabu}{l|c|}
                \hline
                U & ~Z~  \\
                \hline
                \tau_{11}^\prime & a\\
                \tau_{12}^\prime & b\\
                \cline{2-2}
            \end{tabu}$
            }
                \vspace{2mm}
            \caption{ \ \  (a) Instance $D$. \hspace{1.5cm} (b) Instance $D^\prime$.}
            \label{tab:coincident_R} \vspace{-9mm}
        \end{table}

\vspace{-2mm}
     \begin{example} \label{ex:red} Consider \ $\mc{Q}\!: \  \exists x \exists y \exists x (R(x,y) \wedge S(y,z,x) \wedge T(z))$, and $D$ in Table \ref{tab:coincident_R}(a), with $D^\nit{ex} =\{\tau_1,\tau_2\}$. \ It holds $\nit{Atoms}(x)$ $ = \nit{Atoms}(y)$. \ Accordingly, we make the pair formed by $x$ and $y$ collapse into a single variable $v$, with  corresponding attribute $V$. \ The new schema is \ $\mc{S}^\prime = \{R^\prime(V), S^\prime(V,Z),$ $ U(Z)\}$, and the new query is \
$\mc{Q}^\prime\!: \ \exists v \exists z (R^\prime(v) \wedge S^\prime(v,z) \wedge U(z))$. \ With this query, we decrease the cardinality of the two-element coincidence, from $2$ for $\{x,y\}$, to $1$ for $\{v\}$.

Instance $D^\prime$ in Table \ref{tab:coincident_R}(b) is compliant with $\mc{S}^\prime$. It is obtained via  $f\!:D \to D^\prime$, sending  $f(\tau_i)$ to $\tau_i^\prime$, defined, for constants $e,f$, by: \ $f(R(e,f)) := R^\prime(c(e,f))$,  $f(S(e,f,g)) := S^\prime(c(e,g),f)$, and  $f(T(e)) := T(e)$, where  $c(e,f)$ is a unique fresh constant for the combination $(e,f)$, which may appear in more that one relation for $\mc{S}$, (similarly for the other new constants).
\   Moreover, \ignore{and since tuple $\tau_i^\prime$ will be exogenous only if $\tau_i$ is exogenous,} the set of exogenous tuples of $D^\prime$ is $D^{\prime^\nit{ex}}  = \{\tau^\prime_1,\tau^\prime_2\}$.

    \ignore{Here we have introduced the fresh constants  $c_i$ in the domain of schema $\mc{S}^\prime$. \ It holds: \  $\mc{Q}[S] = \mc{Q}^\prime[S^\prime]$, \  for every $S \subseteq D$, where  $S^\prime=\{f(\tau): \tau \in S\}$.
} The position of the new variable $v$ in relation $T^\prime$ does not change the outcome of the transformation. Whether we use $T^\prime(v,z)$ of $T^\prime(z,v)$, the new tuples assign the fresh constants to the intended position for $v$, and maintain the original constants in the position of $z$. In either case, the new query $\mc{Q}^\prime$ produces the same result as that of $\mc{Q}$ on any $S \subseteq D$ mapped to $D^\prime$.
    \boxtheorem
    \end{example}

 \vspace{-2mm}  For a SJF-BCQ, with each relation appearing only once, the variable  ordering is irrelevant. If a relation appears more than once, with different variable orders, collapsing variables can lead to inconsistent mappings.

  \begin{example}
        Consider $\mc{Q}_{\sf sj}\!:  \exists x \exists y (R(x,y) \land R(y,x))$, with a   self-join, and $D = \{R(a,b),R(c,c)\}$. Since $\atoms(x) = \atoms(y)$, we collapse $x,y$ into a single new variable $v$. With a new relation $R^\prime(\cdot)$ and $f\!: D \to D^\prime$ defined by $f(R(a,b)) = R^\prime(c_1)$ and $f(R(c,c)) = R^\prime(c_2)$, we obtain $\mc{Q}^\prime_{\sf sj}\!:\ \exists v(R^\prime(v) \land R^\prime(v))$, and  $D^\prime = \{R^\prime(c_1),R^\prime(c_2)\}$. \
        It holds: $\mc{Q}_{\sf sj}[\{R(a,b)\}] = 0 \neq \mc{Q}^\prime_{\sf sj}[\{f(R(a,b))\}] = \mc{Q}^\prime_{\sf sj}[\{R^\prime(c_1)\}] = 1$. The mapping fails to preserve query answering.
    \boxtheorem \end{example}

    \vspace{-3mm}
    \begin{remark} \label{rem:red}
    (a) \ {\em From now on, we will consider only SJF-BCQs.}

    \vspace{1mm}
    \noindent
    (b) \ For simplicity, Proposition \ref{prop:coinc} presents the case of two variables in a coincidence. However, the result can be generalized to all the variables in a coincidence by iteratively decreasing the coincidence until its size becomes $1$.

    \vspace{1mm}
    \noindent (c) \
    For a BCQ $\mc{Q}$ with a non-trivial coincidence, the \emph{reduced form} of $\mc{Q}$, denoted by $\mc{Q}^\nit{red}$, is the result of iteratively decreasing the number of coincident variables, until making each $V \in \nit{Coin}(\mc{Q})$ collapse into a singleton.  \ In Example \ref{ex:red}, $\mc{Q}^\nit{red}$ is $\mc{Q}^\prime$. \  $\mc{Q}^\nit{red}$ has no non-trivial coincidence; then $\nit{Coin}(\mc{Q}^\nit{red})$ and $\nit{Var}(\mc{Q}^\nit{red})$ are in 1-to-1 correspondence, and $\bigcup \! \nit{Coin}(\mc{Q}^\nit{red}) = \nit{Var}(\mc{Q}^\nit{red})$. \ A query with only trivial coincidences is its own reduced version.

\vspace{1mm}
    \noindent
(d)  $f$ (in Proposition \ref{prop:coinc}) assigns, to each combination of constants that occurs in the input instance, in positions of the coincident variables,  a fresh constant in a target instance.  The resulting schema and data transformation depend on the query.
\ $D^\nit{red}$ denotes the {\em reduced form} of $D$, obtained by  applying the one-step reduction $f^\nit{red}$, which is the iterative and  terminating application  of $f$ to $D$.

 \vspace{1mm}\noindent
    (e) \  The set of exogenous tuples is preserved, that is, a tuple in the target instance is exogenous iff its corresponding tuple in the original instance is exogenous.
    \boxtheorem
        \end{remark}

\vspace{-2mm}
    \begin{proposition} \label{lemma:coin_var} \em
        Let $\mc{Q}$ be a SJF-BCQ, possibly with non-trivial coincidences.

        \noindent (a) For every instance $D$ for (the schema of) $\mc{Q}$, \ {\sf CES} and {\sf Resp} are aligned for  $(\mc{Q},D)$ iff they are aligned for $(\mc{Q}^\nit{red}, D^\nit{red})$.

        \noindent (b) For every instance $D^\star$ for  $\mc{Q}^\nit{red}$\!,  there is an instance $D$ for $\mc{Q}$, such that {\sf CES} and {\sf Resp} are aligned for  $(\mc{Q}^\nit{red},D^\star)$ iff they are aligned for $(\mc{Q}, D)$.
        \boxtheorem
    \end{proposition}

    \ignore{
    \begin{proposition} \em
        \green{Let $\mc{Q}$ be a SJF BCQ, possibly with non-trivial coincidences and $\mc{Q}^\nit{red}$ its reduced version. Then, there exist $D$ and $D^\star$ for the schemas of $\mc{Q}$ and $\mc{Q}^\nit{red}$ respectively, such {\sf CES} and {\sf Resp} are aligned for  $(\mc{Q},D)$ iff they are aligned for $(\mc{Q}^\nit{red}, D^\star)$.}
    \end{proposition}
    }

     We illustrate Proposition \ref{lemma:coin_var}  to obtain alignment and non-alignment results.

    \begin{example}
        Consider  BCQs $\mc{Q}_1, \mc{Q}_2$, with their \emph{reduced versions},  $\mc{Q}_1^{\nit{red}}, \mc{Q}_2^\nit{red}$:

       \vspace{1mm}
        \noindent $\mc{Q}_1\!: \ \exists x \exists y \exists z(R(x,y,z)\wedge S(x,y,z)\wedge T(x,y,z))$, \hfill $\mc{Q}_1^{\nit{red}}\!: \  \exists v(R^\prime(v)\wedge S^\prime(v) \wedge T^\prime(v))$,\\
        $\mc{Q}_2\!:  \exists x \exists y \exists w(R(x,y,w)\wedge S(x,y)\wedge T(x,y))$, \hfill $\mc{Q}_2^\nit{red}\!\!:  \exists v^\prime \exists w (\!R^\prime(v^\prime,w)\wedge S^\prime(v^\prime)\wedge T^\prime(v^\prime))$.

       \vspace{1mm}
        From Proposition \ref{lemma:coin_var}(a), for any $D$ for $\mc{Q}_1$, the scores are aligned for $(\mc{Q}_1,D)$ iff they are aligned for $(Q_1^\nit{red}, D^\nit{red})$. By Proposition \ref{proposition:ces_resp_1},  {\sf CES} and {\sf Resp} are aligned for $(\mc{Q}_1^\nit{red}, D^\nit{red})$. It follows that the scores are always aligned for $(\mc{Q}_1,D)$. \

    Since $\mc{Q}_2^\nit{red}$ satisfies the conditions of Proposition \ref{proposition:ces_resp_2}, there is a  $D^\prime$ with {\sf CES}, {\sf Resp}  not aligned for $(\mc{Q}_2^\nit{red},D^\prime)$. By Proposition \ref{lemma:coin_var}(b), there is an instance $D$ for  $\mc{Q}$, with {\sf CES} and {\sf Resp} not aligned for $(\mc{Q}_2,D)$.

     To obtain $D$ from $D^\prime$ (illustrating the proof of  Proposition \ref{lemma:coin_var}(b) in Appendix B), it is easy to define and apply an inverse  $f^{-1}$ of $f$ (giving rise to an inverse of $f^\nit{red}$) in Remark \ref{rem:red}(d), which replaces each constant introduced by $f$ by a pair of identical constants that originally occupied the positions of two coincident variables.
        \boxtheorem
    \end{example}

    \ignore{\comlb{Please check. For me, this is the correct reasoning. Maybe a footnote should clarify how to obtain $D$ from $D^\nit{red}$, the inverse process. Notice that those two DBs are for two different schemas.}}

\ignore{
    \comlb{Right above in red: Proposition 2 does not entail that the counterexample instance has to be of the form $D^\nit{red}$ for some $D$. This requires more elaboration. }
    \comfa{There is no need for $D^{\prime^{\nit{red}}}$ to be the \textit{reduced} version of the database $D^\prime$. $D^{\prime^{\nit{red}}}$ just needs to be consistent with the query $\mc{Q}^\prime$: use the same relations with the right arities. \red{\underline{Proposition \ref{proposition:ces_resp_2} assumes a query in its \textit{reduced} form?????}}. The database $D^{\prime^{\nit{red}}}$ only refers to a database instance that is consistent with the query $\mc{Q}^{\prime^{\nit{red}}}$. I think it's fine as it is, but maybe I'm not seeing the mistake.} }

\vspace{-4mm}
   \subsection{General SJF-BCQs}\label{sec:multiple}

    Now we address the general case where queries may have multiple components and any kind of coincidences. \
    We start by stating the main result of this section.

    \begin{theorem} \em \label{theo:ces_resp_aligned}
        Let $\mc{Q}$ be an SJF-BCQ, \ with components $C_1,\ldots,C_n$, and their associated subqueries $\mc{Q}_{1}, \ldots, \mc{Q}_n$, as in Definition \ref{def:component_coinvar}. \
        It holds:

        \vspace{0.5mm} \noindent (a) \ When $n=1$ and $|\nit{Coin}(\mc{Q})| = 1$; and also when  $n \geq 2$ and, for every $i = 1,\ldots,n$, $|\nit{Atoms}(\mc{Q}_i)| = 1$ , \ it holds that {\sf CES} and  {\sf Resp} are aligned for $(\mc{Q},D)$ for every instance $D$  with or without exogenous tuples.

        \vspace{0.5mm} \noindent (b) \ In any other case, there is an instance $D$ with  exogenous tuples for which {\sf CES} and {\sf Resp} are not aligned.\footnote{This does not prevent the existence, for some queries in this class, of an instance without exogenous tuples where the scores are not aligned.} \boxtheorem
    \end{theorem}

\vspace{-4mm}
This is a {\em dichotomy result} that fully and syntactically classifies SJF-BCQs, with the caveat that case (b) requires exogenous tuples (see Section \ref{sub:ces-resp_noex}).

\vspace{-4mm}
        \begin{table}[h]
           \hspace*{6mm} {\footnotesize
            $\begin{tabu}{l|c|c|}
                \hline
                R_1 & ~X~ & ~Z~ \\
                \hline
                \underline{\tau_{1}} & a & a\\
                \tau_{2} & b & b\\
                \cline{2-3}
            \end{tabu}$
            ~~
            $\begin{tabu}{l|c|c|c|}
                \hline
                R_2 & ~X~ & ~Y~ & ~Z~\\
                \hline
                \tau_{3} & a & a & a\\
                \tau_{4} & a & b & a\\
                \tau_{5} & b & a & b\\
                \cline{2-4}
            \end{tabu}$
            ~~
            $\begin{tabu}{l|c|c|}
                \hline
                R_3 & ~X~ & ~Z~ \\
                \hline
                \underline{\tau_{6}} & a & a\\
                \underline{\tau_{7}} & b & b\\
                \cline{2-3}
            \end{tabu}$
         \hspace{1.8cm}
            $\begin{tabu}{l|c|c|}
                \hline
                \tau   & ~~{\sf CES}~~ & ~~{\sf Resp}~~\\
                \hline
                \tau_3 & ~~0.09375~~ & ~~1/3~~\\
                \tau_5 & ~~0.03125~~ & ~~1/2~~\\
                \cline{2-3}
            \end{tabu}$

            }\vspace{2mm}
            \caption{ \ \ \ \ \ (a) \ Instance $D$. \hspace{2.8cm}  (b) \ {\sf CES} and {\sf Resp}.}
            \label{tab:theorem_use_q}
        \end{table}

  \vspace{-9mm}  \begin{example} \
        The BCQ \ $
            \mc{Q}\!: \ \exists x \exists y \exists z (R_1(x,z) \land R_2(x,y,z) \land R_3(x,z))$ has one component. \ Furthermore, \  $|\nit{Coin}(\mc{Q})| = |\{
        \{x,z\}, \{y\} \}| = 2$.
        \  By Theorem \ref{theo:ces_resp_aligned}(b), there is an instance $D$ for which {\sf CES} and {\sf Resp} are not aligned for $(\mc{Q},D)$.
\   The proof of Proposition \ref{proposition:ces_resp_2} provides an algorithm to build  $D$ (as illustrated in Example \ref{ex:constr}), that  in  Table \ref{tab:theorem_use_q}(a), with underlined exogenous tuples.
    \ \ Table \ref{tab:theorem_use_q}(b) shows the non-alignment of {\sf CES} and {\sf Resp} for $(\mc{Q},D)$: \  $\tau_5 \prec^\nit{CE} \tau_3$ \ and \ $\tau_3 \prec^{\rho} \tau_5$.

     \vspace{1mm}    For  \ $\mc{Q}^\prime\!: \ \exists x \exists y \exists z \exists v \exists w(R_1(x,y) \land R_2(z) \land R_3(w,v))$, for which  $C_1^\prime = \{R_1(x,y)\}$, $C_2^\prime = \{R_2(z)\}$ and $C_3^\prime = \{R_3(w,v)\}$, Theorem
1(a) tells us that, for every  instance $D$, possibly with exogenous tuples, {\sf CES} and {\sf Resp} are aligned for $D$. {\boxtheorem}
    \end{example}

 Theorem \ref{theo:ces_resp_aligned}(a) follows from the results in Section \ref{sec:single} and Proposition \ref{lemma:coin_var}. Part (b) follows from  Propositions \ref{prop:multicomp_pos} and \ref{prop:q_2comp} right below.

    \begin{proposition} \label{prop:multicomp_pos}  \em
        Let $\mc{Q}$ be a SJF-BCQ with components $C_1,\ldots,C_n$, and associated queries  $\mc{Q}_i$, \ with $n\geq 2$. \
        If, for  $i = 1,\ldots,n$,  $|\nit{Atoms}(\nit{Q}_i)| = 1$,  \ {\sf CES} and {\sf Resp} are aligned for $(\mc{Q},D)$, for every instance $D$, with or without exogenous tuples.
   \boxtheorem \end{proposition}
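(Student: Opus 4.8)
By Remark~\ref{rem:red}(a) I may assume $\mc{Q}$ is self-join-free, so $\mc{Q}$ is a conjunction $A_1\wedge\cdots\wedge A_n$ of single atoms over pairwise distinct relation symbols $R_1,\ldots,R_n$; and since the $C_i$ are distinct components, the variables of the $A_i$ are pairwise disjoint, so the existential prefix distributes over the conjunction: for every $W\subseteq D$, $\mc{Q}[W]=\bigwedge_{i=1}^n \mc{Q}_i[W]$, where $\mc{Q}_i[W]=1$ iff $W$ contains a tuple of $R_i$ satisfying the equalities forced by the repeated variables of $A_i$ --- call such tuples \emph{good} for $R_i$, and the other $R_i$-tuples \emph{bad}. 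First I would note that every bad endogenous tuple is a \emph{dummy} tuple in the sense of Definition~\ref{def:swinging_worldNEW}(c): inserting it into any $S$ never creates a good tuple in any relation, so $\Delta(\mc{Q},S,\tau)=0$ for all $S$. By Proposition~\ref{prop:dummy}, iteratively deleting these tuples changes neither {\sf CES} nor {\sf Resp}, so I may henceforth assume every endogenous tuple is good. If $\mc{Q}[D]=0$, then $\rho(D,\mc{Q},\tau)=0$ for all $\tau$ and, by monotonicity, $\nit{CE}(D,\mc{Q},\tau)=\nit{BPI}(D,\mc{Q},\tau)=0$, so the scores are trivially aligned; thus I may assume $\mc{Q}[D]=1$, i.e. every $R_i$ holds a good tuple.

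For each $i$ let $m_i$ be the number of good endogenous tuples in $R_i$, and set $e_i=1$ if $R_i$ holds a good exogenous tuple and $e_i=0$ otherwise (so $e_i=0$ forces $m_i\ge 1$). Fix a good endogenous tuple $\tau$ living in relation $R_k$. The crux of the argument --- and the step I expect to be the main obstacle --- is the observation that, because $\tau$ belongs to the single relation $R_k$, deleting $\tau$ from any subinstance can change the truth of $\mc{Q}_k$ only, never of any $\mc{Q}_j$ with $j\ne k$. From this I would derive: if $e_k=1$, then $\tau$ is itself dummy and $\nit{CE}(D,\mc{Q},\tau)=\rho(D,\mc{Q},\tau)=0$; and if $e_k=0$, then a set $\Gamma\subseteq D^\nit{en}\setminus\{\tau\}$ is a contingency set for $\tau$ only if it contains every good endogenous tuple of $R_k$ other than $\tau$ --- otherwise $R_k$ still has a good tuple after deleting $\Gamma\cup\{\tau\}$, so $\mc{Q}$ stays true --- and that condition already suffices; hence the minimum contingency has size $m_k-1$ and $\rho(D,\mc{Q},\tau)=\tfrac1{m_k}$.

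Next I would compute $\nit{CE}(D,\mc{Q},\tau)=\nit{BPI}(D,\mc{Q},\tau)$ from (\ref{eq:banzhaf})--(\ref{eq:delta}). For $e_k=0$, one checks that $\Delta(\mc{Q},S,\tau)=1$ iff $S$ contains no good tuple of $R_k$ while, for every $j\ne k$ with $e_j=0$, $S$ contains a good tuple of $R_j$. Counting such $S\subseteq D^\nit{en}\setminus\{\tau\}$ one relation-block at a time, dividing by $2^{|D^\nit{en}|-1}=2^{(\sum_i m_i)-1}$, and then using the elementary identity $2^{m-1}(1-2^{-m})=\tfrac12(2^m-1)$, gives $\nit{CE}(D,\mc{Q},\tau)=\dfrac{2P}{2^{m_k}-1}$, where $P:=\prod_{j:\,e_j=0}(1-2^{-m_j})>0$ is a positive constant not depending on $k$.

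Finally I would conclude by comparing the two closed forms. For every endogenous tuple $\tau$ (in relation $R_{k(\tau)}$), both scores vanish exactly when $e_{k(\tau)}=1$, and otherwise they equal $\tfrac1{m_{k(\tau)}}$ and $\tfrac{2P}{2^{m_{k(\tau)}}-1}$ respectively --- two strictly decreasing functions of $m_{k(\tau)}$. Putting $g(\tau):=m_{k(\tau)}$ when $e_{k(\tau)}=0$ and $g(\tau):=\infty$ when $e_{k(\tau)}=1$, we obtain, for all $\tau,\tau'\in D^\nit{en}$, that $\rho(D,\mc{Q},\tau)\le\rho(D,\mc{Q},\tau')\iff g(\tau)\ge g(\tau')\iff\nit{CE}(D,\mc{Q},\tau)\le\nit{CE}(D,\mc{Q},\tau')$; hence $\preceq^\rho$ and $\preceq^{\nit{CE}}$ coincide, i.e. {\sf CES} and {\sf Resp} are aligned for $(\mc{Q},D)$. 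Bad exogenous tuples, if present, play no role anywhere above, since they make no $\mc{Q}_i$ true. The only delicate point is the characterization of contingency sets in the second paragraph; the counting in the third paragraph and the final comparison are routine.
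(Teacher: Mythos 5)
Your proposal is correct and follows essentially the same route as the paper's proof: both derive the closed forms $\rho=1/m_k$ and $\nit{CE}=2P/(2^{m_k}-1)$ with $P$ a constant independent of the relation, observe that both vanish exactly when the tuple's relation contains a (good) exogenous tuple, and conclude alignment from the fact that both are strictly decreasing in $m_k$. The only difference is cosmetic --- you handle non-unary atoms directly via the good/bad-tuple and dummy-tuple argument, whereas the paper first reduces to the unary query $\mc{Q}_n$ via Proposition~\ref{lemma:coin_var}; your counting and final comparison match the paper's exactly.
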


\vspace{-4mm}
Proposition \ref{prop:q_2comp} right below implies Theorem \ref{theo:ces_resp_aligned}(b), specifically to queries with more than one component, and appeals to the technique described in Remark \ref{rem:technique}, with which we will use the following counterexample.

 \vspace{-3mm} \begin{table}[h]
            \centering
            \hspace*{1cm}
            {\footnotesize
            $\begin{tabu}{l|c|}
                \hline
                R & ~X~ \\
                \hline
                \tau_{1}  & a \\
                \tau_{2}  & b \\
                \tau_{3} & c\\
                \cline{2-2}
            \end{tabu}$
            ~~~~
            $\begin{tabu}{l|c|}
                \hline
                S~  & ~Y~ \\\hline
                \tau_{4} & a\\
                \tau_{5} & b\\
                \cline{2-2}
            \end{tabu}$
            ~~~~
            $\begin{tabu}{l|c|}
                \hline
                T & ~Y~ \\
                \hline
                \tau_{6}  &  a \\
                \tau_{7}  &  b \\
                \cline{2-2}
            \end{tabu}$
            ~~~~~~~~~~~
            $\begin{tabu}{l|c|c|}
                \hline
                \tau   & ~~{\sf CES}~~ & ~~{\sf Resp}~~\\
                \hline
                \tau_{1} & ~~0.28125~~ & ~~1/3~~\\
                \tau_{4} & ~~0.125~~ & ~~1/2~~\\
                \cline{2-3}
            \end{tabu}$

            }\vspace{2mm}
            \caption{ \ \ \ \ (a) \ Instance $D^\star$.  \hspace{2.2cm} (b) \ {\sf CES} and {\sf Resp}.}
            \label{tab:counter_q2comp}
        \end{table}

  \vspace{-10mm}  \begin{example}  \label{ex:counter_q_2comp}
        For \ $\mc{Q} \!: \exists x \exists y (R(x) \land S(y) \land T(y))$, and $D^\star$ in Table \ref{tab:counter_q2comp}(a) without exogenous tuples,  \  {\sf CES} and {\sf Resp}  are not aligned for $(\mc{Q}, D^\star)$ (see Table \ref{tab:counter_q2comp}(b)).      {\boxtheorem}
    \end{example}

\vspace{-2mm}
    \begin{proposition} \em  \label{prop:q_2comp}
        Let $\mc{Q}$ be a SJF-BCQ with at least two components, of which at least one has two or more atoms. There is an instance $D$ with exogenous tuples where {\sf CES} and {\sf Resp} are not aligned. \boxtheorem
    \end{proposition}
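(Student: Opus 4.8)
The plan is to instantiate the counterexample-embedding technique of Remark~\ref{rem:technique}, taking as the base counterexample the query $\mc{Q}_0:\ \exists x\exists y\,(R(x)\wedge S(y)\wedge T(y))$ and the instance $D^\star$ of Example~\ref{ex:counter_q_2comp}, where $\tau_4\prec^{\nit{CE}}\tau_1$ while $\tau_1\prec^{\rho}\tau_4$. First I would recall that, by Remark~\ref{rem:red}(a), $\mc{Q}$ is self-join-free, and invoke Proposition~\ref{lemma:coin_var} to pass to $\mc{Q}^{\nit{red}}$; this is harmless for the hypothesis, since reduction of coincidences acts inside atoms and changes neither the component graph nor the number of atoms in any component, so $\mc{Q}$ may be assumed to have only trivial coincidences while still having components $C_1,\dots,C_n$ with $n\ge 2$ and, say, $|\nit{Atoms}(\mc{Q}_1)|\ge 2$. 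Since $C_1$ is connected with at least two atoms, it contains two adjacent atoms $A_S,A_T$; I fix a variable $y_0$ they share. I also pick an atom $A_R$ (with at least one variable) from some other component $C_2$, and a variable $x_0$ occurring in $A_R$. The atoms $A_R,A_S,A_T$ will play the roles of $R,S,T$, and $x_0,y_0$ the roles of $x,y$.

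Next I would build $D$. Fixing a constant $c'$ not occurring in $D^\star$, the \emph{core} of $D$ is a faithful copy of $D^\star$: each tuple of the $R$-relation of $D^\star$ becomes a tuple of $A_R$ carrying its value in the $x_0$-position and $c'$ in every other position; each tuple of the $S$-relation (resp.\ $T$-relation) becomes a tuple of $A_S$ (resp.\ $A_T$) carrying its value in the $y_0$-position and $c'$ elsewhere; core tuples inherit the endogenous/exogenous status of their originals. Every remaining atom of $\mc{Q}$ becomes exogenous \emph{padding}: an atom of $C_1$ containing $y_0$ (other than $A_S,A_T$) gets one tuple for each value occurring in the $S$- or $T$-relation of $D^\star$ (that value in the $y_0$-position, $c'$ elsewhere); an atom of $C_2$ containing $x_0$ gets one tuple for each value occurring in the $R$-relation of $D^\star$; and every other atom (atoms of $C_1$ not containing $y_0$, atoms of $C_2$ not containing $x_0$, and all atoms of the other components) gets the single all-$c'$ tuple.

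The heart of the argument is then the claim that $\mc{Q}[S]=\mc{Q}_0[S_0]$ for every $S$ with $D^{\nit{ex}}\subseteq S\subseteq D$, where $S_0\subseteq D^\star$ is the image of $S$ under the core-tuple bijection. Because the components of a self-join-free BCQ are variable-disjoint, $\mc{Q}[S]=1$ iff each $\mc{Q}_i$ is satisfiable on $S$; the components other than $C_1,C_2$ are always satisfied by their all-$c'$ tuples. Since each relation symbol occurs once, any satisfying assignment of $\mc{Q}_1$ (resp.\ $\mc{Q}_2$) is forced to send every variable other than $y_0$ (resp.\ $x_0$) to $c'$, and to send $y_0$ (resp.\ $x_0$) to a value occurring in the $S$-/$T$-relation (resp.\ $R$-relation) of $D^\star$; the padding atoms then impose no extra constraint. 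Hence $\mc{Q}_1$ is satisfiable on $S$ iff the $S$- and $T$-parts of $S_0$ share a value, and $\mc{Q}_2$ is satisfiable on $S$ iff the $R$-part of $S_0$ is nonempty, which together say exactly $\mc{Q}[S]=\mc{Q}_0[S_0]$. From this it follows that $D$ and $D^\star$ have the same number of endogenous tuples, and that the swinging sets and the contingency sets of a core tuple of $D$ correspond bijectively and size-preservingly to those of its original in $D^\star$; so the core tuples $\tau_1^{D},\tau_4^{D}$ matching $\tau_1,\tau_4$ have the same {\sf CES} and {\sf Resp} values as $\tau_1,\tau_4$ (Remark~\ref{rem:technique}(d)), whence $\tau_4^{D}\prec^{\nit{CE}}\tau_1^{D}$ and $\tau_1^{D}\prec^{\rho}\tau_4^{D}$. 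Since $D$ contains exogenous tuples (all padding tuples, together with the one exogenous core tuple inherited from $D^\star$), {\sf CES} and {\sf Resp} are not aligned for $(\mc{Q},D)$, as required.

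I expect the main obstacle to be precisely this correctness claim: verifying that the uniform $c'$-padding neither blocks a join needed to satisfy $\mc{Q}_1$ or $\mc{Q}_2$ nor creates a spurious satisfying assignment. This relies on self-join-freeness — each relation has a single column-shape, so in any satisfying assignment all non-distinguished variables collapse to $c'$ — and on the freshness of $c'$, which prevents $y_0$ or $x_0$ from accidentally matching padding tuples where they should not. A subsidiary delicate point is a component $C_1$ with more than two atoms or with internal paths not running through $y_0$: one must check that filling those atoms with all-$c'$ tuples is consistent with the positions already forced to $c'$ in $A_S,A_T$, which holds because everything in $C_1$ other than $y_0$ is uniformly $c'$. (Degenerate configurations, e.g.\ where every component other than $C_1$ is a single variable-free atom, would need a separate, easy adaptation of the base counterexample; this is a minor gap in the generic plan above.)
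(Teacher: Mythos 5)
Your proposal is correct and follows essentially the same route as the paper: both instantiate the embedding technique of Remark~\ref{rem:technique} on the counterexample $(\mc{Q},D^\star)$ of Example~\ref{ex:counter_q_2comp}, selecting one atom from one component and two $y$-sharing atoms from a component with at least two atoms, and padding all remaining positions and atoms with a fresh constant $c'$ so that core tuples keep the {\sf CES} and {\sf Resp} values of their originals. Your write-up is in fact more explicit than the paper's about why the padding preserves $\mc{Q}[S]=\mc{Q}_0[S_0]$ (via self-join-freeness and the freshness of $c'$), which is a welcome addition rather than a deviation.
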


   \vspace*{-5mm} \begin{example} \ \label{ex:prop16} We illustrate Proposition \ref{prop:q_2comp} with
        $\mc{Q}\!\!:  \exists x \exists y \exists z \exists v \exists w (R_1(x,y)\land R_2(z) \land R_3(z) \land R_4(z) \land R_5(w,v))$ that  has components: $C_1 = \{R_1(x,y)\}$, $C_2 = \{R_2(z),
        R_3(z),$ $ R_4(z)\}$, $C_3 = \{R_5(w,v)\}$. \ Since $C_2$ has three atoms, by Proposition \ref{prop:q_2comp},  an instance $D$ can be built where the scores are not aligned, as follows: \ Select components $C_2$ and $C_3$. \ From $C_2$  select atoms $R_2(z), R_3(z)$; and  from $C_3$, atom $R_5(w,v)$. Tuples from  $R_2,R_3, R_5$ are endogenous; and the rest, exogenous (underlined). \  Table \ref{tab:q_2comp_ex}(a) shows instance $D$. \ Table \ref{tab:q_2comp_ex}(b) shows that
    {\sf CES} and {\sf Resp}  are not aligned for $(\mc{Q},D)$.

\vspace{-4mm}
 \begin{table}
           \hspace*{3mm} {\footnotesize
            $\begin{tabu}{l|c|c|}
                \hline
                R_1 & ~A~ & ~B~\\
                \hline
                \underline{\tau_{1}}  & c^\prime & c^\prime\\
                \cline{2-3}
            \end{tabu}$
            ~~
            $\begin{tabu}{l|c|}
                \hline
                R_2~  & ~A~ \\\hline
                \tau_{2} & a\\
                \tau_{3} & b\\
                \cline{2-2}
            \end{tabu}$
            ~~
            $\begin{tabu}{l|c|}
                \hline
                R_3~  & ~A~ \\\hline
                \tau_{4} & a\\
                \tau_{5} & b\\
                \cline{2-2}
            \end{tabu}$~~
            $\begin{tabu}{l|c|}
                \hline
                R_4~  & ~A~ \\\hline
                \underline{\tau_{6}} & a\\
                \underline{\tau_{7}} & b\\
                \cline{2-2}
            \end{tabu}$~~
            $\begin{tabu}{l|c|c|}
                \hline
                R_5 & ~A~ & ~B~\\
                \hline
                \tau_{8}  & a & c^\prime\\
                \tau_{9}  & b & c^\prime\\
                \tau_{10} & c & c^\prime\\
                \cline{2-3}
            \end{tabu}$~~~~~
            $\begin{tabu}{l|c|c|}
                \hline
                \tau   & ~~{\sf CES}~~ & ~~{\sf Resp}~~\\
                \hline
                \tau_{2} & ~~0.125~~ & ~~1/2~~\\
                \tau_{8} & ~~0.28125~~ & ~~1/3~~\\
                \cline{2-3}
            \end{tabu}$

            }\vspace{-1mm}
            \caption{ \ \ \ (a) \ Instance $D$. \hspace{3cm} (b) \ {\sf CES} and {\sf Resp}.}
            \label{tab:q_2comp_ex}
        \end{table}

\vspace{-5mm}In relation to the scores for  subqueries $\mc{Q}_1, \mc{Q}_2, \mc{Q}_3$ corresponding to $C_1, C_2, C_3$,   Proposition \ref{proposition:ces_resp_1} tells us they are are always aligned for $\mc{Q}_2$. By Proposition \ref{lemma:coin_var}  and Remark \ref{rem:red},  the scores are always aligned for $\mc{Q}_1$ and $\mc{Q}_3$ iff they are aligned for their reduced forms, which in both cases is of the form $\mc{Q}^\nit{red}\!: \exists v (R(v))$. By Proposition \ref{proposition:ces_resp_1}, the scores are aligned for $\mc{Q}^\nit{red}$. We can see that  alignment for the subqueries does not entail alignment for original query. \ However, Corollary \ref{coro:multi_single_alignment} below tells us that the inverse does hold. \boxtheorem
    \end{example}

\vspace{-7mm}
\begin{figure}[hbt!]
    \centering
        \includegraphics[width=7.5cm]{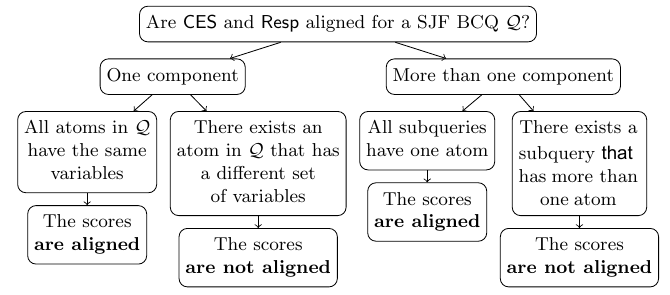}
        \vspace{-3mm}
    \caption{Query Classification \ (Theorem \ref{theo:ces_resp_aligned})}
    \label{fig:placeholder}
\end{figure}

 \vspace{-5mm} With Theorem \ref{theo:ces_resp_aligned} we have a full characterization of  the space of SJF-BCQs, as shown in Figure \ref{fig:placeholder}.
\ The next result follows from Propositions \ref{proposition:ces_resp_1},  \ref{prop:coinc} and  \ref{prop:multicomp_pos}.

    \begin{corollary} \em \label{coro:multi_single_alignment}
        Let $\mc{Q}$ be a SJF-BCQ with at least two components. If {\sf CES} and {\sf Resp} are aligned for $(D,\mc{Q})$, $D$ any instance, they are aligned for $D$ and  the subqueries built from its components. \boxtheorem
    \end{corollary}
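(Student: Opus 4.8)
The plan is to establish the corollary by a short syntactic argument built on the characterization of Theorem~\ref{theo:ces_resp_aligned} (equivalently, on Propositions~\ref{prop:multicomp_pos} and~\ref{prop:q_2comp}) together with the single-component results of Section~\ref{sec:single}. Let $\mc{Q}$ be a multi-component BCQ, with components $C_1,\ldots,C_n$ ($n\geq 2$) and associated subqueries $\mc{Q}_1,\ldots,\mc{Q}_n$, and assume {\sf CES} and {\sf Resp} are aligned for $\mc{Q}$, i.e.\ for every instance compatible with $\mc{Q}$.

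First I would show that every component is a single atom. This is the contrapositive of Proposition~\ref{prop:q_2comp}: if some $\mc{Q}_k$ had $|\nit{Atoms}(\mc{Q}_k)|\geq 2$, then, since $n\geq 2$, $\mc{Q}$ would satisfy the hypothesis of that proposition, yielding an instance $D$ with exogenous tuples on which {\sf CES} and {\sf Resp} are not aligned for $(\mc{Q},D)$ --- contradicting alignment for $\mc{Q}$. (Conversely, Proposition~\ref{prop:multicomp_pos} shows that when every component is a single atom the scores are indeed aligned for $\mc{Q}$, so ``every component single-atom'' is exactly the aligned case here.) Hence $|\nit{Atoms}(\mc{Q}_i)|=1$ for all $i$.

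Next I would verify that a single-atom subquery is always aligned. Each $\mc{Q}_i$ is by construction a single-component BCQ (its associated graph is one node), and all of its variables occur in the unique atom of $C_i$, so the sets $\nit{Atoms}(v)$ coincide for all $v\in\nit{Var}(\mc{Q}_i)$ and $|\nit{Coin}(\mc{Q}_i)|=1$; thus $\mc{Q}_i$ falls under case (a) of Theorem~\ref{theo:ces_resp_aligned} with $n=1$, and {\sf CES} and {\sf Resp} are aligned for $(\mc{Q}_i,D')$ for every instance $D'$. Equivalently, one may route through Section~\ref{sec:single}: the reduced form $\mc{Q}_i^\nit{red}$ is $\exists v\,R_i(v)$, i.e.\ $\mc{Q}_{R_1}$ of Proposition~\ref{proposition:ces_resp_1}, which is always aligned, and Proposition~\ref{lemma:coin_var} lifts this back to $\mc{Q}_i$. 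This gives the corollary.

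I do not foresee a genuine obstacle: given Theorem~\ref{theo:ces_resp_aligned} the statement is close to a repackaging. The only step needing a little care is the second one --- making sure that a single atom with possibly several distinct variables still lands in the positive case --- which is why the argument goes through coincidences (Proposition~\ref{lemma:coin_var}) and the unary base case (Proposition~\ref{proposition:ces_resp_1}) rather than quoting a single-atom statement directly.
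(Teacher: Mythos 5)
Your proposal is correct and follows essentially the same route as the paper's own proof: it derives single-atom components from the contrapositive of Proposition~\ref{prop:q_2comp} (the paper cites Propositions~\ref{prop:multicomp_pos} and~\ref{prop:q_2comp} for the same step), and then handles each single-atom subquery by reducing its coincident variables via Proposition~\ref{lemma:coin_var} to the unary case of Proposition~\ref{proposition:ces_resp_1}. No gaps to report.
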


  \vspace{-7mm}
    \subsection{Alignment in the Absence of Exogenous Tuples}
    \label{sub:ces-resp_noex}

Notice that  the  non-alignment result in Theorem \ref{theo:ces_resp_aligned}(b) holds for instances with exogenous tuples. Until now, we do not know what happens in that case when instances do not have them. This is what we investigate now.
     \ Actually, we will show that exogenous tuples are required for non-alignment for two classes of queries that fall under Theorem \ref{theo:ces_resp_aligned}(b), those in Propositions \ref{prop:ces_resp_2_noex} and \ref{prop:ces_reps_qrnsm_noexo} below: \ For them, {\sf CES} and {\sf Resp} are always aligned.
     \  {\em In this section, we assume that instances do not have exogenous tuples.}

    \begin{proposition} \em \label{prop:ces_resp_2_noex}
        For a single-component BCQ $\mc{Q}$ with  $|\nit{Atoms}(\mc{Q})| \leq 2$, \ $|\nit{Coin}(\mc{Q})|$ $ \leq 3$, {\sf CES} and {\sf Resp} are aligned for every  $D$ (w/o exogenous tuples). \boxtheorem
    \end{proposition}

\vspace{-1cm}

 \begin{table}[h]
          \hspace*{8mm}  {\footnotesize
            $\begin{tabu}{l|c|c|}
                \hline
                R & ~X~ & ~Y~\\
                \hline
                \tau_{1}  & a & a\\
                \tau_{2}  & a & b\\
                \tau_{3} & b & a\\
                \cline{2-3}
            \end{tabu}$~~~$\begin{tabu}{l|c|c|}
                \hline
                S & ~X~ & ~Z~\\
                \hline
                \underline{\tau_{4}} & a & c_1\\
                \tau_{5} & b & c_2\\
                \cline{2-3}
            \end{tabu}$~~~~~~~~
            $\begin{tabu}{l|c|c|}
                \hline
                \tau   & ~~{\sf CES}~~ & ~~{\sf Resp}~~\\
                \hline
                \tau_{1} & ~~0.375~~ & ~~1/3~~\\
                \tau_{3} & ~~0.125~~ & ~~1/2~~\\
                \cline{2-3}
            \end{tabu}$
            ~~~~~~~~
            $\begin{tabu}{l|c|c|}
                \hline
                \tau   & ~~{\sf CES}~~ & ~~{\sf Resp}~~\\
                \hline
                \tau_{1} & ~~0.1875~~ & ~~1/3~~\\
                \tau_{3} & ~~0.3125~~ & ~~1/2~~\\
                \cline{2-3}
            \end{tabu}$

            }\vspace{2mm}
            \caption{ \  (a) \ Instance $D$.  \hspace{0.2cm} (b) \  Scores  in $D$. \hspace{0.7cm} (c) \  Scores  in $D^\prime$.}
            \label{tab:qrs}
        \end{table}

\vspace{-1cm}
    \begin{example} \label{ex:new}
    Consider  $\mc{Q_{\sf RS}} \!: \ \exists x \exists y \exists z (R(x,y) \land S(x,z))$ of Example \ref{ex:comp_ces_shapley}. By Proposition \ref{prop:ces_resp_2_noex}, {\sf CES} and {\sf Resp} are aligned for every instance without exogenous tuples. \
    In contrast, if exogenous tuples are  allowed,  Proposition \ref{proposition:ces_resp_2} allows us to build an instance $D$, that in  Table \ref{tab:qrs}(a), with exogenous $\tau_4$, where these scores are not aligned (as shown in Table \ref{tab:qrs}(b)). \
    However, for $D^\prime = D$, with $(D^\prime)^\nit{ex} = \emptyset$, the scores  are now aligned (see Table \ref{tab:qrs}(c)). \boxtheorem \end{example}

  \vspace{-2mm}  \begin{proposition} \label{prop:ces_reps_qrnsm_noexo} \em For the query \ $
        \mc{Q}_{R_{1},S_{m}}\!\!\!: \  \exists \bar{x} \exists \bar{y} (R_1(\bar{x},\bar{y}) \wedge S_1(\bar{x}) \wedge \ldots \wedge S_m(\bar{x}))$,
        with $\bar{x}$ and $\bar{y}$ are non-empty, mutually exclusive strings of variables,
        {\sf CES} and {\sf Resp} are  aligned for every instance without exogenous tuples. \boxtheorem
    \end{proposition}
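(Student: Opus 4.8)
The plan is to normalise the query, decompose the instance into independent blocks, and then compare the two scores within blocks and across blocks. Since $\mc{Q}_{R_1,S_m}$ is self-join-free and its only coincidences are $\bar{x}$ (occurring in every atom) and $\bar{y}$ (occurring only in $R_1$), Proposition~\ref{lemma:coin_var} reduces the claim to the query $\mc{Q}\!:\exists x\exists y\,(R_1(x,y)\wedge S_1(x)\wedge\cdots\wedge S_m(x))$ over exogenous-free instances, and by Proposition~\ref{prop:dummy} we may assume $D$ has no dummy tuples. Then $D$ is $\mc{Q}$-separable (Definition~\ref{def:q-separable}) via the partition $\{D_a\}_a$, where $D_a$ collects the tuples $R_1(a,\cdot)$ together with $S_1(a),\ldots,S_m(a)$: every minimal satisfying set $\{R_1(a,b),S_1(a),\ldots,S_m(a)\}$ lies inside one block, and the absence of dummies forces each block to contain all $m$ tuples $S_1(a),\ldots,S_m(a)$ and some number $\ell_a\ge 1$ of tuples $R_1(a,\cdot)$, so $\mc{Q}[D_a]=1$.

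Next I would compute the in-block scores (in an exogenous-free instance $\nit{CE}(D,\mc{Q},\tau)=|\nit{Swin}(D,\mc{Q},\tau)|/2^{|D|-1}$, by $\nit{CE}=\nit{BPI}$ and~(\ref{eq:banzhaf})--(\ref{eq:delta})). Deleting $S_j(a)$ falsifies $\mc{Q}$ in $D_a$, so $\rho(D_a,\mc{Q},S_j(a))=1$; its swinging sets in $D_a$ are exactly the sets containing $S_1(a),\ldots,S_m(a)$ and a nonempty subset of the $R_1$-tuples, so $\nit{CE}(D_a,\mc{Q},S_j(a))=(2^{\ell_a}-1)/2^{m+\ell_a-1}$. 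For $R_1(a,b)$ the only minimal contingency is the set of the remaining $\ell_a-1$ tuples $R_1(a,\cdot)$, so $\rho(D_a,\mc{Q},R_1(a,b))=1/\ell_a$, and its unique swinging set is $\{R_1(a,b),S_1(a),\ldots,S_m(a)\}$, so $\nit{CE}(D_a,\mc{Q},R_1(a,b))=1/2^{m+\ell_a-1}$. Thus inside each block both scores rank the $S$-tuples above the $R_1$-tuples, with equality exactly when $\ell_a=1$, and by Proposition~\ref{prop:scores_q-separable} the same comparison governs same-block pairs of $D$, so $\preceq^{\nit{CE}}$ and $\preceq^{\rho}$ never disagree on them.

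For cross-block pairs I would push everything through Proposition~\ref{prop:scores_q-separable}. Each block is falsified by one deletion (as $\ell_a\ge 1$), so $r(D\smallsetminus D_k)=N-1$ with $N$ the number of blocks, independent of $k$; hence $\rho(D,\mc{Q},\tau)=\rho(D_k,\mc{Q},\tau)/(1+(N-1)\rho(D_k,\mc{Q},\tau))$ is a single strictly increasing function of the in-block responsibility. On the CES side, $\nit{CE}(D,\mc{Q},\tau)=\nit{CE}(D_k,\mc{Q},\tau)\cdot\prod_{j\ne k}(1-P_{D_j}(\mc{Q}))$ by~(\ref{eq:indep_probability}), with $P_{D_j}(\mc{Q})=(2^{\ell_j}-1)/2^{m+\ell_j}$; factoring out the global constant $P=\prod_j(1-P_{D_j}(\mc{Q}))$, a cross-block CES comparison reduces to comparing $2(2^{\ell}-1)P/(2^{\ell}(2^m-1)+1)$ for an $S$-tuple in a block with $\ell$ many $R_1$-tuples against $2P/(2^{\ell}(2^m-1)+1)$ for an $R_1$-tuple. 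Three cases remain: $S$ vs.\ $S$, where $\rho$ ties the tuples so nothing must be verified; $R_1$ vs.\ $R_1$, where both scores order the tuples by decreasing block parameter $\ell$, identically; and $S$ (block $k$) vs.\ $R_1$ (block $k'$), where $\rho$ always makes the $R_1$-tuple $\preceq^{\rho}$ the $S$-tuple, strictly iff $\ell_{k'}\ge 2$, and CES does the same because $2^{\ell_k}(2^m-1)+1\le(2^{\ell_k}-1)\bigl(2^{\ell_{k'}}(2^m-1)+1\bigr)$, which follows from $(2^{\ell_k}-1)(2^{\ell_{k'}}-1)\ge 1$. Collecting the cases: whenever $\rho$ strictly separates two endogenous tuples, $\nit{CE}$ separates them the same way, so the induced preorders admit no strictly reversed pair, i.e.\ they are compatible, and hence {\sf CES} and {\sf Resp} are aligned for $(\mc{Q},D)$ for every exogenous-free $D$.

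The main obstacle is the last case. Unlike the responsibility scaling $N-1$, which does not see which block $\tau$ lives in, the CES scaling $\prod_{j\ne k}(1-P_{D_j}(\mc{Q}))$ does depend on the block, and one must show that this block-dependent factor times the in-block CES value cannot overturn the order fixed by responsibility; this is precisely where the closed forms and the elementary inequality $(2^{\ell_k}-1)(2^{\ell_{k'}}-1)\ge 1$ are needed. Care is also required at degenerate blocks with $\ell_a=1$, where $R_1$-tuples become counterfactual causes and tie with $S$-tuples under both scores, and one should note that CES may refine a responsibility-tie between two $S$-tuples sitting in blocks of different sizes; accordingly, as elsewhere in the paper, ``aligned'' must be read as compatibility of the induced preorders (a constant score being aligned with every score), rather than as order-isomorphism (cf.\ Definition~\ref{def:rank_alignment}).
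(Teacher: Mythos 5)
Your proof is correct and follows essentially the same route as the paper's: reduce to the single-variable form via Proposition~\ref{lemma:coin_var}, strip dummy tuples, partition the instance into $\mc{Q}$-separable blocks indexed by the constant in the $\bar{x}$-position, compute closed forms for {\sf Resp} and {\sf CES} within a block, and lift to the whole instance via Proposition~\ref{prop:scores_q-separable}, settling the remaining cross-block pairs with an elementary inequality. The only differences are cosmetic -- you factor out the global product $\prod_j(1-P_{D_j}(\mc{Q}))$ rather than reducing each cross-block comparison to a two-block instance, you treat $m=1$ uniformly instead of delegating it to Proposition~\ref{prop:ces_resp_2_noex}, and your in-block values for the $S$-tuples are the correct ones (the paper's case-(1) displays contain typos that your computation silently repairs).
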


   \vspace{-1cm}

        \begin{table}
           \centering
            {\footnotesize

$\begin{tabu}{|l|c|c|c|c|c|c|c|c|c|c|c|c|c|}\hline
R & \tau_1 & \tau_2 &\tau_3 &\tau_4 &\tau_5 &\tau_6 &\tau_7 &\tau_8 &\tau_9 &\tau_{10} &\tau_{11}  &
            \tau_{12} &\tau_{13}\\ \hline
X & a&b&b&b&b&b&c&c&c&c&c&c&c\\ \hline
Y & a&a&b&c&d&e&a&b&c&d&e&f&g\\ \hline
\end{tabu}$ \vspace{2mm}

          \ignore{++  $\begin{tabu}{l|c|c|}
                \hline
                R & ~X~ & ~Y~\\
                \hline
                \tau_1 & a & a\\
                \tau_2 & b & a\\
                \tau_3 & b & b\\
                \tau_4 & b & c\\
                \tau_5 & b & d\\
                \tau_6 & b & e\\
                \tau_7 & c & a\\
                \tau_8 & c & b\\
                \tau_9 & c & c\\
                \tau_{10} & c & d\\
                \tau_{11} & c & e\\
                \tau_{12} & c & f\\
                \tau_{13} & c & g\\
                \cline{2-3}
            \end{tabu}$~~~++}
            $\begin{tabu}{l|c|}
                \hline
                S & ~X~\\
                \hline
                \tau_{14} & a \\
                \tau_{15} & b \\
                \tau_{16} & c \\
                \cline{2-2}
            \end{tabu}$~~~
            $\begin{tabu}{l|c|c|}
                \hline
                T & ~Z~ & ~W~ \\
                \hline
                \tau_{17} & a & a\\
                \tau_{18} & a & b\\
                \tau_{19} & a & c\\
                \tau_{20} & b & a\\
                \cline{2-3}
            \end{tabu}$~~~
            $\begin{tabu}{l|c|}
                \hline
                U ~  & ~Z~ \\\hline
                \tau_{21} & a \\
                \tau_{22} & b \\
                \cline{2-2}
            \end{tabu}$
            ~~~~~~~~~~~~
            $\begin{tabu}{l|c|c|}
                \hline
                \tau   & ~~{\sf CES}~~ & ~~{\sf Resp}~~\\
                \hline
                \tau_{1} & ~~0.0751~~ & ~~1/3~~\\
                \tau_{17} & ~~0.0754~~ & ~~1/4~~\\
                \tau_{21} & ~~0.5284~~ & ~~1/2~~\\
                \cline{2-3}
            \end{tabu}$

            }
            \caption{ \ignore{\ \ \ \ \ \ (a) \ Instance $D$. \hspace{2.4cm} (b) \ {\sf CES} and {\sf Resp}.}}
            \label{tab:counter_q2compex}
        \end{table}

\vspace{-8mm} Example \ref{ex:prop16} showed that the inverse of Corollary \ref{coro:multi_single_alignment} does not hold for a particular instance with  exogenous tuples. The following example shows that the inverse still does not hold
     without exogenous tuples.

    \begin{example} \label{ex:q2comp_counterex} For
    $\mc{Q}\!: \exists x\exists y\exists z \exists w (R(x,y) \land S(x) \land T(z,w) \land U(z))$, and the instance (without exogenous tuples) in Table \ref{tab:counter_q2compex},  {\sf CES} and {\sf Resp} are not aligned for $(D,\mc{Q})$.
\ $\mc{Q}$ has the subqueries  $\mc{Q}_a\!:  \exists x \exists y (R(x,y) \land S(x))$ \ and $\mc{Q}_b\!:  \exists z \exists w (T(z,w) \land U(z))$,  associated to its components. \ By Proposition \ref{prop:ces_reps_qrnsm_noexo}, {\sf CES} and {\sf Resp} are aligned for  $(\mc{Q}_a,D^\prime)$ and $(\mc{Q}_b,D^\prime)$, for every  $D^\prime$ without exogenous tuples.
     {\boxtheorem}
    \end{example}

\vspace{-5mm}
    \section{Alignment of {\sf Shapley} with {\sf CES} and {\sf Resp}}\label{sec:ces-shap}

Until now we have not consider the alignment of  {\sf Shapley}  with other scores. \   Example \ref{ex:comp_ces_shapley} showed that it is not aligned with {\sf CES} or {\sf Resp} for the particular combination $(\mc{Q}_{\sf RS},D^\star)$, with $\mc{Q_{\sf RS}} \!: \exists x \exists y \exists z (R(x,y) \land S(x,z))$. \ Using $(\mc{Q}_{\sf RS},D^\star)$ for the technique in Remark \ref{rem:technique}, we can generalize this non-alignment result to a broader class of BCQs. \ In this section, we  require instances to have exogenous tuples.

    \begin{proposition}\em
        \label{prop:ces_shap_not_aligned}
        Let $\mc{Q}$ be a BCQ containing at least two atoms $A_R$ and $A_S$, with predicates $R,S$, resp., such that: \ (a) $\nit{Var}(A_R) \cap \nit{Var}(A_S) \neq \emptyset$, \ (b) $\nit{Var}(A_R) \not\subseteq \nit{Var}(A_S)$, and \ (c) $\nit{Var}(A_S) \not\subseteq \nit{Var}(A_R)$.\footnote{These conditions tell us that the BCQ is {\em non-hierarchical} \cite{suciu}.} \  There is $D$ with exogenous tuples for which {\sf Shapley} is not aligned with any of {\sf CES} or  {\sf Resp} on $(\mc{Q},D)$. \boxtheorem
    \end{proposition}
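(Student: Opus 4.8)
The plan is to apply exactly the embedding technique of Remark \ref{rem:technique}, using Example \ref{ex:comp_ces_shapley} as the seed counterexample. Recall that in that example, with $\mc{Q}_{\sf RS}\!:\exists x\exists y\exists z(R(x,y)\land S(x,z))$ and the all-endogenous instance $D^\star$ of Table \ref{tab:ce_shapley}, the tuples $\tau_1,\tau_4,\tau_6$ satisfy $\tau_6\prec^{\nit{CE}}\tau_4\prec^{\nit{CE}}\tau_1$ and $\tau_6\prec^{\rho}\tau_4\prec^{\rho}\tau_1$, whereas $\tau_4\prec^{\nit{Sh}}\tau_6\prec^{\nit{Sh}}\tau_1$, so {\sf Shapley} disagrees with both {\sf CES} and {\sf Resp}. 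Observe that $\mc{Q}_{\sf RS}$ is, up to renaming, the smallest query satisfying hypotheses (a)--(c): the atom $R(x,y)$ plays the role of $A_R$, the atom $S(x,z)$ the role of $A_S$, the shared variable is $x$ (so (a) holds), $y\in\nit{Var}(A_R)\setminus\nit{Var}(A_S)$ (so (b) holds), and $z\in\nit{Var}(A_S)\setminus\nit{Var}(A_R)$ (so (c) holds). The idea is to show that any $\mc{Q}$ with two such atoms ``contains'' $\mc{Q}_{\sf RS}$ and hence inherits its non-alignment behaviour.

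First I would fix, for a given $\mc{Q}$ satisfying (a)--(c), a shared variable $u\in\nit{Var}(A_R)\cap\nit{Var}(A_S)$, a variable $v\in\nit{Var}(A_R)\setminus\nit{Var}(A_S)$, and a variable $w\in\nit{Var}(A_S)\setminus\nit{Var}(A_R)$. Next I would build the instance $D$ from $D^\star$ as prescribed by Remark \ref{rem:technique}(c)--(d): for each tuple $\tau$ of $D^\star$ (which populates relations $R$ and $S$ of $\mc{Q}_{\sf RS}$), create a corresponding tuple in the appropriate relation of $\mc{Q}$, placing the values that $D^\star$ assigned to the $x$-, $y$-, $z$-positions into the positions of $u$, $v$, $w$ respectively, and filling every remaining attribute position of every relation of $\mc{Q}$ with a single fixed fresh constant $c^\star$. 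The tuples in $D$ coming from $D^\star$ are declared endogenous; all other tuples needed to make the ``frame'' of $\mc{Q}$ satisfied -- namely one tuple $A(c^\star,\dots,c^\star)$ per atom $A\notin\{A_R,A_S\}$ whose variables are all unified with $c^\star$ -- are declared exogenous. One must then verify, by a routine homomorphism argument, that for every $S\subseteq D$, $\mc{Q}[S]=\mc{Q}_{\sf RS}[S_0]$, where $S_0$ is the image of $S$ under the natural correspondence: the exogenous ``frame'' tuples are always present, so a subinstance satisfies $\mc{Q}$ iff its $R$- and $S$-part (in the $\mc{Q}_{\sf RS}$ sense) already satisfies $\mc{Q}_{\sf RS}$, because the constant $c^\star$ glues the frame together consistently and the self-join-free structure prevents any unintended matches. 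This is the point where care is needed, and it is the step I expect to be the main obstacle: one has to argue that the extra attribute positions filled with $c^\star$, together with any variables of $A_R$ or $A_S$ other than $u,v,w$ (which also get unified to $c^\star$ on the endogenous tuples), never create spurious satisfying homomorphisms nor destroy the intended ones -- essentially that the ``lineage'' of $\mc{Q}$ over $D$ coincides with that of $\mc{Q}_{\sf RS}$ over $D^\star$. As in Remark \ref{rem:technique}(d), this equivalence of lineages immediately gives that {\sf CES}, {\sf Resp} and {\sf Shapley} of each endogenous tuple in $D$ equal those of the corresponding tuple in $D^\star$ (the exogenous frame tuples contribute nothing to $\Delta$ and do not enlarge $D^\nit{en}$).

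Finally, with the lineage equivalence in hand, the conclusion is immediate: letting $\sigma_1,\sigma_4,\sigma_6$ be the endogenous tuples of $D$ corresponding to $\tau_1,\tau_4,\tau_6$ of $D^\star$, we get $\sigma_6\prec^{\nit{CE}}\sigma_4\prec^{\nit{CE}}\sigma_1$, $\sigma_6\prec^{\rho}\sigma_4\prec^{\rho}\sigma_1$, and $\sigma_4\prec^{\nit{Sh}}\sigma_6\prec^{\nit{Sh}}\sigma_1$, so {\sf Shapley} is not aligned with {\sf CES} (the pair $\sigma_4,\sigma_6$ is ordered one way by $\prec^{\nit{CE}}$ and the other way by $\prec^{\nit{Sh}}$) and, by the same pair, not aligned with {\sf Resp} either. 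Hence $D$ witnesses the claimed non-alignment for $(\mc{Q},D)$, which proves the proposition. The only genuinely new work beyond Example \ref{ex:comp_ces_shapley} and Remark \ref{rem:technique} is the bookkeeping in the lineage-preservation argument for arbitrary $\mc{Q}$, together with the observation that hypotheses (a)--(c) are precisely what is needed to carve a copy of $\mc{Q}_{\sf RS}$'s variable pattern out of the two atoms $A_R,A_S$.
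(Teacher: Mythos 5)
Your overall strategy is the same as the paper's: embed the seed counterexample $(\mc{Q}_{\sf RS},D^\star)$ of Example \ref{ex:comp_ces_shapley} into an arbitrary $\mc{Q}$ satisfying (a)--(c) via the technique of Remark \ref{rem:technique}, and conclude that the corresponding endogenous tuples keep their scores. However, there is a genuine gap in your construction of the exogenous ``frame''. You populate every atom $A\notin\{A_R,A_S\}$ with a single tuple $A(c^\star,\dots,c^\star)$. This destroys the intended homomorphisms whenever such an atom contains one of the designated variables $u$, $v$ or $w$. Concretely, take $\mc{Q}\!:\exists x\exists y\exists z(R(x,y)\land S(x,z)\land T(x))$, which satisfies (a)--(c) with $A_R=R(x,y)$, $A_S=S(x,z)$ and shared variable $x$. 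Your endogenous $R$- and $S$-tuples carry the constants $a$ and $b$ of $D^\star$ in the $x$-position, but your frame tuple is $T(c^\star)$; any satisfying homomorphism must map $x$ to $a$ or $b$, so $\mc{Q}$ is false on every subinstance of your $D$, all three scores are identically $0$, and the scores are trivially aligned --- the counterexample evaporates. The paper's own illustrating query contains exactly such atoms ($R_2(y,z)$ and $R_4(v)$), so this is not a corner case.

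The paper avoids this by treating the atoms in $\nit{Atoms}(u)\cup\nit{Atoms}(v)\cup\nit{Atoms}(w)$ differently from the rest: for each atom other than $A_R,A_S$ that contains $v$ (resp.\ $w$), it creates one \emph{exogenous} tuple per tuple of $R$ (resp.\ $S$) in $D^\star$, copying the values in the $u$- and $v$- (resp.\ $u$- and $w$-) positions and padding the rest with $c^\prime$; for each atom containing $u$ but neither $v$ nor $w$, it creates two exogenous tuples carrying the constants $a$ and $b$ in the $u$-position; only atoms sharing none of $u,v,w$ receive the single all-$c^\prime$ tuple. With that refinement the lineage of $\mc{Q}$ over $D$ does coincide with that of $\mc{Q}_{\sf RS}$ over $D^\star$ restricted to the endogenous tuples, and the rest of your argument (score preservation and the final ordering of $\sigma_1,\sigma_4,\sigma_6$) goes through as you describe. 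You correctly identified where the danger lies, but the construction you actually propose does not resolve it.
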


  \vspace{-2mm}  Query $\mc{Q}_{\sf RS}$ belongs to the class in Proposition \ref{prop:ces_shap_not_aligned}. \
    The next example illustrates  Proposition \ref{prop:ces_shap_not_aligned}; the instance is built according to the proof of the proposition in Appendix B.

 \begin{table}
        {\footnotesize
        $\begin{tabu}{l|c|c|c|}
            \hline
            R_1~  & ~X~ & ~Y~ & ~Z~ \\\hline
            \tau_1 & c_1 & a & a \\
            \tau_2& c_2 & b & b\\
            \tau_3& c_3 & b & b\\
            \cline{2-4}
        \end{tabu}$~~
        $\begin{tabu}{l|c|c|}
            \hline
            R_2 & ~Y~ & ~Z~\\
            \hline
            \underline{\tau_4} & a & a\\
            \underline{\tau_5} & b & b\\
            \cline{2-3}
        \end{tabu}$~~
        $\begin{tabu}{l|c|c|c|c|}
            \hline
            R_3 & ~Y~ & ~Z~ & ~W~ & ~V~\\
            \hline
            \tau_6 & a & a & c_4 & c^\prime\\
            \tau_7 & a & a & c_5 & c^\prime\\
            \tau_8 & b & b & c_6 & c^\prime\\
            \tau_9 & b & b & c_7 & c^\prime\\
            \tau_{10} & b & b & c_8 & c^\prime\\
            \tau_{11} & b & b & c_9 & c^\prime\\
            \cline{2-5}
        \end{tabu}$~~
        $\begin{tabu}{l|c|}
            \hline
            R_4 & ~V~ \\
            \hline
            \underline{\tau_{12}} & c^\prime\\
            \cline{2-2}
        \end{tabu}$~~~~
        $\begin{tabu}{l|c|c|c|}
            \hline
            \tau   & {\sf CES} & {\sf Resp} & {\sf Shapley}\\
            \hline
            \tau_1 & ~57/256~ & ~1/3~ & ~400/2520~\\
            \tau_6 & ~19/256~ & ~1/4~& ~151/2520~\\
            \tau_8 & ~15/256~ & ~1/5~& ~169/2520~\\
            \cline{2-4}
        \end{tabu}$
        }
        \vspace{1mm}\caption{ \ \ (a) \ Instance $D$. \hspace{3cm} (b) \ Scores.}
    \label{tab:ces_resp_shap_propalignment_ex} 
    \end{table}

   \begin{example} \
     $\mc{Q}\!: \exists x \exists y \exists z \exists v \exists w(R_1(x,y,z) \land R_2(y,z) \land R_3(y,z,w,v) \land R_4(v))$ belongs to the class in   Proposition \ref{prop:ces_shap_not_aligned}.
\ Table \ref{tab:ces_resp_shap_propalignment_ex}(a) shows instance $D$ where neither {\sf CES} nor  {\sf Resp} are aligned with {\sf Shapley}.   Underlined tuples are exogenous.
\boxtheorem
\end{example}

 \section{Conclusions} \label{sec:concl}

  The comparison of attribution scores for tuples participating in  query answering in DBs, such {\sf Resp}, {\sf Shapley}, {\sf BPI}, {\sf CES}, has not received much attention.  We have provided a first analysis of their alignment, i.e. of wether  they induce  compatible rankings of tuples. \ We have identified classes of self-join-free BCQs for which {\sf Resp} and {\sf CES} are aligned, and others for which they are not. Our syntactic test is applicable to any SJF-BCQ, obtaining a dichotomy result. Exogenous tuples becomes crucial. An analysis of queries with self-joins is still open.

\ignore{\comfa{\textbf{NEW} \ I think the best way to illustrate Theorem \ref{theo:ces_resp_aligned} is through a diagram, rather than a table. Here it's one proposal: }}

\ignore{
, as follows: \ 1. For a single-component BCQ, {\sf Resp} and {\sf CES} are aligned if the query has exactly one coincident set of variables; otherwise, they are not. \ 2. For a multi-component BCQ, the scores are aligned if each component contains only one atom; otherwise, they are not.
\ When we obtain score alignment with this test,  they are aligned for every instance, regardless of the presence of exogenous tuples. When we obtain non-alignment, there is an instance with exogenous tuples where the scores are not aligned. \ We can see that exogenous tuples become crucial. This is clearly illustrated by Example \ref{ex:new}. \ignore{For  the query $\mc{Q}_{\sf RS}$ in Example  \ref{ex:new}, there is an instance with exogenous tuples for which {\sf Resp} and {\sf CES} are not aligned (see Proposition \ref{proposition:ces_resp_2}). However, if an instance does not have exogenous tuples, these scores are aligned ().}}

We have exhibited a broad class of queries for which {\sf CES} (and also {\sf Resp}) and {\sf Shap} are not aligned. Our results require the presence of exogenous tuples.  We do not know what happens in their absence.

\  Questions about the alignment of  {\sf Shapley} and {\sf BPI} (and indirectly, {\sf CES} in our case) are still open in  Game Theory \cite{freixas}, where commonly no distinction between exogenous and endogenous players is made.

Aligned scores may have very {\em different score distributions} on the tuples.  Obtaining  results  about the {\em comparison of score distributions} is an open problem.

\ignore{
\comlb{The larger portion in red right below: This could be removed when we have somewhere in the paper the general test (to come).}\\
\comfa{Working on it. Although the general test needs some technical definitions (Coincident variables, Components, etc) I think is worth adding something at the beginning of the section 5.}}

\ignore{
\comfa{
    Here I'm avoiding the use of technical nomenclature (e.g. $\nit{Coin}(\mc{Q})$) so it can be inserted in the Introduction or in this section. However, it still uses some concepts like ``sub-queries'', which are not introduced until later in the paper. }}

It is not obvious how to address this important and practical question about  which score to use. Appealing to the characterizing properties of the scores seems to be unavoidable. Both {\sf Shapley} and {\sf BPI} have them, and those of {\sf BPI} is inherited by {\sf CES}. There is also an axiomatization for the {\em generalized} {\sf CES} score  \cite{clear25}.  In \cite{izza}, there is a recent axiomatization of Responsibility for explainable machine learning.   \ These general properties should shed some light on the score distribution problem as well.
\ Properties of these scores by themselves will not solve the problem of applicability in databases. We need, an agreement on what is a good score-based explanation for query answering, which is still  open.

\vspace{3mm}
\noindent {\bf Acknowledgements:} \ Felipe Az\'ua has been supported by  the Millennium Institute for Foundational Research on Data (IMFD, Chile). Leopoldo Bertossi has been financially supported by the  IMFD, and NSERC-DG 2023-04650. Part of this work was done while Felipe Az\'ua was visiting the ``Laboratory of Informatics,
  Modelling and Optimization of the Systems" (LIMOS), at U. Clermont-Ferrand, France. He appreciates their support and hospitality.


\newpage
\bibliographystyle{plain}

\ignore{XXXXXXXXXXXXXXXXXXXXXXXXXXXXXXXXXXXX

XXXXXXXXXXXXXXXXXXXXXXXXXXXXX}

    \appendix

\newpage

\begin{center}\large

    {\bf \underline{Appendix to:} \vspace{3mm}

``Causality-Based Scores Alignment  in Explainable Data Management"

\vspace{1mm}
by

\vspace{1mm}
Felipe Az\'ua and Leopoldo Bertossi}

\end{center}

\vspace{2mm}
\section{Basic Notions and Properties of the Scores } \label{sec:comparing_scores}

The contents of this section will be needed for the proofs in Appendix \ref{sec:appendix}. However, it is also of interest by itself.   We start by introducing some notions.

    \begin{definition} \label{def:swinging_worldNEW} \em Let $D = D^\nit{en} \cup D^\nit{ex}$, $\mc{Q}$ a MBQ, and  $D^\nit{ex} \subseteq W \subseteq D$.

        \vspace{1mm}\noindent
        (a) \ $W$ is a \ {\em swinging set} \ for  $\tau \in W$    if $\mc{Q}[W] = 1$, and $\mc{Q}[W \smallsetminus \{\tau\}] = 0$. \ $\nit{Swin}(D,\mc{Q},\tau)$ denotes the set of swinging sets of $\tau$.

        \vspace{1mm}\noindent
        (b) \  $W$ is a \ {\em minimal satisfying set} \  if $\mc{Q}[W] = 1$ and, for every $\tau \in W$,  $W \in \nit{Swin}(D,\mc{Q},\tau)$. \ $\nit{MSS}(D,\mc{Q})$ denotes the class of minimal satisfying sets.

        \vspace{1mm}\noindent
        (c) \ $\tau \in D^\nit{en}$ is a \ \emph{dummy tuple} \ for $\mc{Q}$ \ if, for every $S \subseteq D^\nit{en}$\!, \ $\Delta(\mc{Q},S,\tau) = 0$ \ (as defined in  (\ref{eq:delta})).
        \boxtheorem
        \end{definition}

      \vspace{-5mm}  If $W \in \nit{MSS}(D,\mc{Q})$, then,  for every $\tau \in W$, \ $\mc{Q}[W \smallsetminus \{\tau\}] = 0$. \ Sometimes, when the query is clear from the context, we  write $\nit{Swin}(D,\tau)$ and $\nit{MSS}(D)$.

    \begin{example}  (Ex. \ref{ex_paths} cont.)  \label{ex:continuation_ex_1}
        $\nit{Swin}(D,\mc{Q},\tau_6) = \{\{\tau_4,\tau_5,\tau_6\},\{\tau_2,\tau_4,\tau_5, \tau_6\},\{\tau_3,\tau_4,$ $\tau_5, \tau_6\}\}$.
       \ $W^\prime = \{\tau_4,\tau_5,\tau_6\}$  is a \emph{minimal satisfying set}, and $W^\prime \in \nit{Swin}(D,\mc{Q},\tau_4) \cap \nit{Swin}(D,\mc{Q},\tau_5) \cap\  \nit{Swin}(D,\mc{Q},\tau_6)$. \ Here,  $\nit{MSS}(D,\mc{Q}) = \{\{\tau_1\},$ $ \{\tau_2, \tau_3\},W^\prime\}$,  with each world in it building exactly a path from $a$ to $b$.
        \ignore{A \emph{minimal alternating set} \ is $W^\star= \{\tau_1,\tau_2,\tau_4\}$. \ Here,
        $\nit{MAS}(D,\mc{Q}_1) = \{\{\tau_1,\tau_2,$ $\tau_4\}, \{\tau_1,\tau_3,\tau_4\}, \{\tau_1,\tau_2,\tau_5\},$   $\{\tau_1,\tau_3,\tau_5\},\{\tau_1,\tau_2,\tau_6\},\{\tau_1,\tau_3,\tau_6\}\}$.} \ $D$ has no dummy tuples. \ignore{However, if we had the extra endogenous tuple \ $\tau_7\!: E(f,g)$ \  \red{in  Example \ref{ex_paths1}}, it would be a dummy tuple.}
        \boxtheorem
    \end{example}

\vspace{-5mm}
    \begin{definition} \em \label{def:q-separable}
       For a MBQ $\mc{Q}$,  instance  $D$ is \emph{$\mc{Q}$-separable} if there is a partition $\{D_1, \ldots,$ $ D_s\}$ of $D$, such that  \
       $\nit{MSS}(D,\mc{Q}) = \bigcup_{i=1,\ldots,s} \nit{MSS}(D_i,\mc{Q})$. \boxtheorem
    \end{definition}

\vspace{-5mm}
    \begin{proposition} \em  \label{prop:world_q-separable}
    Consider a MBQ $Q$ and a $\mc{Q}$-separable instance $D$  with partition $\{D_1,$ $ \cdots , D_s\}$. \ For every \  $\emptyset \neq W \subseteq D$, \   $\mc{Q}[W] = Q[W_1] \vee \cdots \vee Q[W_s]$, where $\{W_1,\ldots,W_s\}$ is a partition of $W$, with  $W_i \subseteq D_i, \ i = 1,\ldots,s$ (assuming each $\mc{Q}[W_i]$ is a  binary value).
    \end{proposition}

 \noindent {\bf Proof: \ }
        We prove this by contradiction. Assume $\mc{Q}[W] = 1$ and $\mc{Q}[W_i] = 0$ for $i=1,\ldots,s$. It follows that there exists some $W^\star \in \nit{MSS}(D,\mc{Q})$, such  that $W^\star \subseteq W$, Furthermore, since $D$ is $\mc{Q}$-separable, there exists some value $k \in \{1,\ldots,s\}$ such $W^\star \in \nit{MSS}(D_k,\mc{Q})$. If this is true, then $\mc{Q}[W_k] = 1$, which is a contradiction. \boxtheorem

If an  instance is $\mc{Q}$-separable,  it can be decomposed into  subinstances that are independent from each other for the evaluation of the query.
\  As a consequence, a $\mc{Q}$-separable instance $D$ with partition $\{D_1,\ldots,D_s\}$, as in Proposition \ref{prop:world_q-separable}, whose elements have associated inherited  $U(\frac{1}{2})$-TIDs (see Section \ref{sec:back}.4), it holds:

\vspace{-4mm}
  \begin{equation} \label{eq:indep_probability}
        P_D(\mc{Q}) \ = \ 1 \ - \prod_{i = 1,\ldots,s} (1 - P_{D_i}(\mc{Q})),
   \end{equation}

   \vspace{-5mm}
    \noindent something we will use often in Sections \ref{sec:alig} and \ref{sec:ces-resp}.

\vspace{-3mm}
    \begin{proposition}\!\!\!\! \em
        \label{prop:scores_q-separable} \
        Let $\mc{Q}$ be a MBQ, and $D$ a $\mc{Q}$-separable instance with partition $\{D_1,\ldots$  $,D_s\}$. \  If  $\tau \in D_k$:
        \ignore{\begin{align}
            \rho(D,\mc{Q},\tau) \ & = \ \frac{\rho(D_k,\mc{Q},\tau)}{1 \ + \ \rho(D_k,\mc{Q},\tau) \times r(D \smallsetminus D_k)}, \\
            \nit{CE}(D,\mc{Q},\tau) & \ = \ \nit{CE}(D_k,\mc{Q},\tau) \times (1 - P_{D \smallsetminus D_k}(\mc{Q})),
        \end{align}}
        $\rho(D,\mc{Q},\tau) \ =  \frac{\rho(D_k,\mc{Q},\tau)}{1 \ + \ \rho(D_k,\mc{Q},\tau) \times r(D \smallsetminus D_k)}$, and $
           \nit{CE}(D,\mc{Q},\tau) \ = \ \nit{CE}(D_k,\mc{Q},\tau) \times (1 - P_{D \smallsetminus D_k}(\mc{Q}))$,
        \ where \ $r(D \smallsetminus D_k) := \min\{|\Gamma| ~:~ \Gamma \subseteq (D \smallsetminus D_k) \text{ and } \mc{Q}[D \smallsetminus \Gamma] = 0\}$\ignore{; and $P_{D \smallsetminus D_k}(\mc{Q})$ the probability of $\mc{Q}$ over the $U(\frac{1}{2})$-TID of the relation $(D \smallsetminus D_k)$}.
    \end{proposition}

\vspace{-4mm}
    \noindent {\bf Proof:  \ } Let $\mc{Q}$ be a MBQ, $D$ a $\mc{Q}$-separable database instance with partition $(D_1,\ldots,D_s)$ and $\tau \in D_k$. We first prove the proposition for {\sf Resp}. By definition, {\sf Resp} for tuple $\tau$ is:

    \vspace{-5mm}
        \[
        \rho(D,\mc{Q},\tau) = \left(1 + \min_{_{\Gamma \in \mbox{\scriptsize \nit{Cont}}(D,\mc{Q},\tau)}} |\Gamma| \right)^{\!\!-1}.
        \]

        \vspace{-5mm}
        Notice that, by Proposition \ref{prop:world_q-separable}, each $\Gamma \in \nit{Cont}(D,\mc{Q},\tau)$ can be written as $\Gamma = \Gamma^\prime \cup \Gamma^\star$, where $\Gamma^\prime \in \nit{Cont}(D_k,\mc{Q},\tau)$, $\Gamma^\star \subseteq (D \smallsetminus D_k)$ and $\Gamma^\prime \cap \Gamma^\star = \emptyset$. Now, since the minimum size $\Gamma$ is required, $\Gamma^\prime$ and $\Gamma^\star$ must have minimum cardinality. This is achieved by minimizing both sets: (a) for $\Gamma^\prime$, $\min_{\Gamma^\prime \in \nit{Cont}(D_k,\mc{Q},\tau)} |\Gamma^\prime|$, and (b) for $\Gamma^\star$, $\min_{\Gamma^\star \subseteq D \smallsetminus D_k} |\Gamma^\star| = \min\{|\Delta| ~:~ \Delta \subseteq (D \smallsetminus D_k) \text{ and } \mc{Q}[D \smallsetminus \Delta] = 0\} = r(D \smallsetminus D_k)$. By algebraic manipulation, the result from the proposition is obtained:

        \vspace{-8mm}
        \begin{align*}
            \rho(D,\mc{Q},\tau) & = \frac{1}{\displaystyle 1 + \min_{\Gamma^\prime \in \nit{Cont}(D_k,\mc{Q},\tau)} |\Gamma^\prime| + r(D \smallsetminus D_k)}, \\
            & = \frac{1}{\dfrac{1}{\rho(D_k,\mc{Q},\tau)} + r(D \smallsetminus D_k)}
             = \frac{\rho(D_k,\mc{Q},\tau)}{1 + \rho(D_k,\mc{Q},\tau) \times r(D \smallsetminus D_k)}.
        \end{align*}

\vspace{-5mm}
        Now we prove the proposition for {\sf CES}. Since $D$ is $\mc{Q}$- separable, we can compute each probability according to Equation (\ref{eq:indep_probability}), as:

        \vspace{-4mm}\[
        P_D(\mc{Q} | \nit{do}(\cdot)) \ = \ 1 \ - \prod_{i = 1,\ldots,s} (1 - P_{D_i}(\mc{Q}  ~|~ \nit{do}(.))),
        \]

        \vspace{-4mm}
        \noindent where $\nit{do}(.)$ is replaced by $\nit{do}(\tau \In)$ and $\nit{do}(\tau \Out)$. Since instance $D_i$, with $i \neq k$, does not contain $\tau$, \ $P_{D_i}(\mc{Q} | do(\tau \In)) = P_{D_i}(\mc{Q} | do(\tau \Out)) = P_{D_i}(\mc{Q})$. Then, by simply algebraic manipulation, the following is obtained:

        \begin{align*}
            \nit{CE}(D,\mc{Q},\tau) & = P_D(\mc{Q}~|~ \nit(do)(\tau \In)) - P_D(\mc{Q}~|~ \nit(do)(\tau \Out)) \\
            & = \left(1 - \prod_{i=1,\ldots,s} \left(1 - P_{D_i}(\mc{Q} ~|~ \nit{do}(\tau \In))\right) \right)  \\
            & ~~~~~~ - \left(1 - \prod_{i=1,\ldots,s} \left(1 - P_{D_i}(\mc{Q} ~|~ \nit{do}(\tau \Out))\right) \right) \\
            & = P_{D_k}(\mc{Q} ~|~ \nit{do}(\tau \In)) \times \left(1 - \prod_{i\neq k} \left(1 - P_{D_i}(\mc{Q})\right) \right) \\
            & ~~~~~~ - P_{D_k}(\mc{Q} ~|~ \nit{do}(\tau \Out)) \times \left(1 - \prod_{i\neq k} \left(1 - P_{D_i}(\mc{Q})\right) \right) \\
            & = \nit{CE}(D_k,\mc{Q},\tau)\times \left(1 - \prod_{i\neq k} \left(1 - P_{D_i}(\mc{Q})\right) \right). \hspace{2.5cm}\blacksquare
        \end{align*}

    Proposition \ref{prop:scores_q-separable} makes it easy to compare two tuples from the same subinstance. For example, for a MBQ $\mc{Q}$, a  $\mc{Q}$-separable instance $D$ with partition $\{D^\prime,D^\star\}$, and two tuples $\tau,\tau^\prime \in D^\prime$: \  $\rho(D,\mc{Q},\tau) < \rho(D,\mc{Q},\tau^\prime)$ iff $\rho(D^\prime,\mc{Q},\tau) < \rho(D^\prime,\mc{Q},\tau^\prime)$; and $\nit{CE}(D,\mc{Q},\tau) < \nit{CE}(D,\mc{Q},\tau^\prime)$ iff $\nit{CE}(D^\prime,\mc{Q},\tau) < \nit{CE}(D^\prime,\mc{Q},\tau^\prime)$.

    \begin{example} \ (Ex. \ref{ex:continuation_ex_1} cont.) \  \label{ex:q-separable}
        Consider instance $D^\star := D \cup D^\prime$, with $D^\prime = \{\tau_7\!\!: E(a,f), \  \tau_8\!\!:E(f,b)\}$. \ The tuples in $D^\prime$ form a new path from $a$ to $b$ that does not intersect with any of the previous ones. It is easy to check that \ $
            \nit{MSS}(D^\star,\mc{Q}) = \nit{MSS}(D,\mc{Q}) \cup \nit{MSS}(D^\prime,\mc{Q})$.
            \ Now, {\sf Resp} and {\sf CES} for $\tau_1$ can be computed  using Proposition \ref{prop:scores_q-separable}:
       \ignore{\begin{align*}
            \rho(D^\star,\mc{Q},\tau_1) & = \frac{\rho(D,\mc{Q},\tau_1)}{1 \ + \ \rho(D,\mc{Q},\tau_1) \times r(D^\prime)} = \ \frac{1/3}{1 \  + \ (1/3) \times 1} = 1/4, \\
            \nit{CE}(D^\star,\mc{Q},\tau_1) & = \nit{CE}(D,\mc{Q},\tau_1) \times P_{D^\prime}(\mc{Q}) \ \ = \ \frac{21}{32} \times \frac{1}{4} = \frac{21}{128} \ \approx \ 0.164.
        \end{align*}}
\ $\rho(D^\star,\mc{Q},\tau_1) = \frac{\rho(D,\mc{Q},\tau_1)}{1 \ + \ \rho(D,\mc{Q},\tau_1) \times r(D^\prime)} = \ \frac{1/3}{1 \  + \ (1/3) \times 1} = 1/4$; and
            $\nit{CE}(D^\star,\mc{Q},\tau_1) = \nit{CE}(D,\mc{Q},\tau_1) \times P_{D^\prime}(\mc{Q}) \ \ = \ \frac{21}{32} \times \frac{1}{4} = \frac{21}{128} \ \approx \ 0.164$.
        \    Here, $r(D^\prime) = 1$, because,   in order to make  $\mc{Q}$ false in $D^\prime$, it suffices to remove only one tuple,  either $\tau_7$ or $\tau_8$.
        \boxtheorem
    \end{example}

    \vspace{1mm}  The following proposition will allow us to often assume that an instance has no dummy tuples.

    \begin{proposition} \label{prop:dummy} \em
        Let $\mc{Q}$ be a MBQ, and $D$ an instance containing a dummy tuple  $\tau_d$ for $\mc{Q}$. \ For every $\tau \in (D^\nit{en} \smallsetminus \{\tau_d\})$, it holds: \
        (a) \ $\rho(D,\mc{Q},\tau) \ = \ \rho(D \smallsetminus \{\tau_d\},\mc{Q},\tau)$. \
        (b) \ $\nit{CE}(D,\mc{Q},\tau) \ = \  \nit{CE}(D \smallsetminus \{\tau_d\},\mc{Q},\tau)$. \
        (c) \  $\nit{Shapley}(D,\mc{Q},\tau) \ = \ \nit{Shapley}(D \smallsetminus \{\tau_d\},\mc{Q},\tau)$.
    \end{proposition}

        \noindent {\bf Proof:  \ }
        The proof for (a) is trivial, since, as the tuple $\tau_d$ does not change the value of the query, it does not belong to any contingency set of $\tau$ for the answer of $\mc{Q}$. Therefore, the {\sf Resp} score stays  the same with or without $\tau_d$.

        For (b), we can compute the CES for a given tuple using the definition of the {\sf BPI}. \
        Notice that, for each $S \subseteq (D \smallsetminus \{\tau_d\})$, $\Delta(\mc{Q},S,\tau) = 1$ iff $\Delta(\mc{Q},S \cup \{\tau_d\},\tau) = 1$. Therefore, adding the tuple $\tau_d$ to $D \smallsetminus \{\tau_d\}$, the number of $\Delta(\mc{Q},S \cup \{\tau_d\},\tau)$, which are equal to one, is doubled. This change is countered by the increasing in the number of endogenous tuples by one, leaving the CES unchanged for all endogenous tuples.

        For (c), we recall a property of Shapley Value in relation with dummy tuples: for a given BCQ $\mc{Q}$ and an instance $D$, if $\tau$ is a dummy tuple, then $\nit{Shapley}(D,\mc{Q},\tau) = 0$ \cite{shapley1953original}. It follows that, if we remove the tuples from the instance, the Shapley value does not change. \boxtheorem

    \ignore{+++++++++++++++++++++++
    Now, as discussed in  Example \ref{ex_paths}, we can compare the (original, non-generalized version of the) CES with the last two scores we just introduced.

    \begin{example} (ex. \ref{ex_paths} cont.)
    \label{ex:paths_detailed}
    Consider the database instance $D$ containing relation $E$ in Table \ref{tab:ce_ex}, with all  tuples considered endogenous, and query $\mc{Q}_1$. The results of the computation of the CES, \emph{Responsibility} and Shapley Value are shown in Table \ref{tab:my-label}.

    \begin{table}[]
        \centering
        \begin{tabular}{c|c|c|c}
             ~~Tuples~~& ~~CES / BPI~~ & ~~\emph{Responsibility}~~ & ~~Shapley Value~~\\
             \hline
             $\tau_1$               & ~~0.65625~~ & 1/3 & 0.5833 \\
             $\tau_2, \tau_3$        & ~~0.21875~~ & 1/3 & 0.1333 \\
             ~~$\tau_4,\tau_5, \tau_6~~$ & ~~0.09375~~ & 1/3 & 0.05 \\
        \end{tabular}\vspace{2mm}
        \caption{CES, \emph{Responsibility} and Shapley Value for each tuple in $D$.}
        \label{tab:my-label}
    \end{table}

    We can see, in this example, that the less informative score is \emph{Responsibility}, which assigns 1/3 for all the tuples in $D$, despite the fact that the numbers of tuples of each path is different. \ Proposition \ref{prop:resp_mss_disjoint} explains this result: (a) All paths are disjoint, and (b) all tuples belongs to one path. This makes \emph{responsibility} the same for all tuples. If, for instance, one tuple belonged to two or more paths, then \emph{Responsibility} would be different for some tuples.

    Now, we can compute CES and the Shapley Value using the set of swinging worlds or the set of contingencies by using  equations (\ref{eq:ces_shapley_sw}) and (\ref{eq:ces_shapley_cont}), respectively. For example, if we compute the swinging worlds and contingencies for $\tau_4$, we obtain:
    \begin{eqnarray*}
        \nit{Swin}(\mc{Q}_1,\tau_4) \ &=& ~\{ \{\tau_3,\tau_4,\tau_5,\tau_6\}, \{\tau_2,\tau_4,\tau_5,\tau_6\}, \{\tau_4,\tau_5,\tau_6\}\} , \ \text{ and }\\
        \nit{Cont}(\mc{Q}_1,\tau_4) \ &=& ~\{ \{\tau_1,\tau_2\}, \{ \tau_1,\tau_3 \},\{\tau_1,\tau_2, \tau_3 \} \}.
    \end{eqnarray*}
    Then, for CES and Shapley Value for $\tau_4$ we obtain:
    \begin{eqnarray*}
        \nit{CE}^\nit{UI}(D,\mc{Q}_1,\tau_4) &=& \frac{|\nit{Swin}(\mc{Q}_1,\tau_4)|}{2^{|D^\nit{en}| - 1}} = \frac{|\nit{Cont}(\mc{Q}_1,\tau_4)|}{2^{|D^\nit{en}| - 1}} = \frac{3}{32} = 0.09375\\
        &&\\
        \nit{Shapley}(D,\mc{Q}_1,\tau_4) &=& \sum_{W \in \nit{Swin}(\mc{Q}_1,\tau_4)} \frac{(|W| - |D^\nit{ex}| - 1)! \cdot (|D| - |W|)!}{|D^\nit{en}|!}\\
        &=& \frac{(4 - 1)! \cdot (6 - 4)!}{6!} + \frac{(4 - 1)! \cdot (6 - 4)!}{6!} + \frac{(3 - 1)! \cdot (6 - 3)!}{6!}\\
        &=& 0.05 \\
        &&\\
        \nit{Shapley}(D,\mc{Q}_1,\tau_4) &=& \displaystyle \sum_{\Gamma \in \nit{Cont}(\mc{Q}_1,\tau_4)} \frac{(|D^\nit{en}| - |\Gamma|- 1)!\cdot|\Gamma|!}{|D^\nit{en}|!}\\
        &=& \frac{(6 - 2 - 1)! \cdot (2)!}{6!} + \frac{(6 - 2 - 1)! \cdot (2)!}{6!} + \frac{(6 - 3 - 1)! \cdot (3)!}{6!}\\
        &=& 0.05.
    \end{eqnarray*}
    In this example, CES and the Shapley Value return different scores, but produce the same qualitative {\em rankings} for the tuples, i.e. they are equally ordered (according to their scores). \  We will see in Example \ref{ex:comp_ces_shapley}, that this may not always be the case.
    \boxtheorem\end{example}
+++++++++++++++++++++++++++++++++++++++++++}

    \section{Proofs of Main Results}\label{sec:appendix}

   \begin{remark}\label{rem:last} \ To simplify the presentation, we will assume that the query atoms do not contain constants. This is justified as follows: \ (a) For positive results (alignment), alignment must hold for every instance. Including constants would only restrict the set of instances under consideration. \ (b) For negative results (non-alignment), any counterexample that uses constants can be transformed. More precisely, an atom with a constant can be replaced by an atom of lower arity, with the database restricted to tuples where the corresponding position has the same constant. This preserves the counterexample. \boxtheorem

\vspace{2mm}
        \noindent {\bf Proof of Proposition \ref{proposition:ces_resp_1}. \ } Consider the BCQ $\mc{Q}_{R_n}$, and an instance $D$ without dummy tuples (according to Proposition \ref{prop:dummy}). For a given constant $c$, $D$ must contain all tuples of the form $R_i(c)$, with $i =
        1,\ldots,n$. \
        By the structure of $D$, it is clear that is $\mc{Q}_{R_n}$-separable with partition $\{D_1,\ldots,D_s\}$, where each $D_j$, for $j\in\{1,\ldots,s\}$, is the collection of tuples of all relations with the same constant. \
       All the $R_i$ in $D$ have the same number of tuples, say $r$. \ \
        Two cases arise:
        \begin{enumerate}[(i)]
            \item There is a subinstance $D_x$ with $x \in \{1,\ldots,s\}$ where all tuples are exogenous. Then,  for every $\tau_x \in D_x$, $\tau \in D^\nit{ex}$. It follows that $\rho(D,\mc{Q}_{R_n},\tau) = \nit{CE}(D,\mc{Q}_{R_n},\tau) = 0$, for all remaining tuples $\tau \in D^\nit{en}$, because $\mc{Q}_{R_n}[D_x] = 1$. Thus, the scores are aligned.
            \item There is no subinstance with the previous condition. If this is the case, $\rho(D,\mc{Q},\tau) = \frac{1}{1 + r}$ for any $\tau \in D^\nit{en}$. Since {\sf Resp} is constant, it follows that the scores are aligned.{\boxtheorem}
        \end{enumerate}

\noindent {\bf Proof of Proposition \ref{proposition:ces_resp_2}. \ }  Let $x,y \in \nit{Var}(\mc{Q})$, such that $\nit{Atoms}(y) \subsetneqq \nit{Atoms}(x)$. Now, select two atoms $R_x,R_y$ from $\nit{Atoms}(\mc{Q})$, such that $R_x \in (\nit{Atoms}(x) \smallsetminus \nit{Atoms}(y))$ and $R_y \in \nit{Atoms}(y)$.\
        We build an instance $D$ from instance $D_2$ in Example \ref{ex:counter_q2}, as follows:

        \vspace{1mm} \noindent (a) For each atom $U_R \in \nit{Atoms}(y)$ and for each tuple $\tau_R$ from the relation $R$ of $D_2$, we create a tuple from $U_R$ by putting in the $x$ and $y$'s position the value of $x$ and $y$ in the tuple $\tau_S$, and replacing by a constant $c^\prime$ the rest of variables. Only the tuples created from $R_y$ are endogenous.

        \vspace{1mm} \noindent (b) For each atom $U_S \in (\nit{Atoms}(x) \smallsetminus \nit{Atoms}(y))$ and for each tuple $\tau_S$ from the relation $S$ of $D_2$, we create a tuple from $U_S$ by putting in the $x$ and $y$'s position the value of $x$'s position the value of $x$ in the tuple $\tau_R$, and replacing by a constant $c^\prime$ the rest of variables. Only the tuples created from $R_x$ are endogenous, with exception of the tuple with a constant $a$ in the position of $x$, which will be exogenous.

        \vspace{1mm} \noindent (c) For each atom $U \not\in \nit{Atoms}(x)$, a tuple is created by replacing all variables in $U$ with a constant $c^\prime$. All tuples created in this way will be exogenous.

        It follows that {\sf CES} and {\sf Resp} are not aligned for $(\mc{Q},D)$. \boxtheorem

\vspace{2mm}
   \noindent {\bf Proof of Proposition \ref{prop:coinc}. \ }
        Let $\mc{Q}$, $\mc{Q}^\prime$, $D$ and $D^\prime$ as in the statement of Proposition \ref{prop:coinc}. We refer to them as \emph{original} and \emph{target}, query or database, resp. Denote with $x,y$ the coincident variables, and with $v \not\in \nit{Var}(\mc{Q})$ the new variable.

        Now, $D^\prime$ is built according to $\mc{Q}^\prime$, in the following way: (1) First, add all tuples from $D$ that are not from the relations of the atoms in $\nit{Atoms}(x)$. (2) Then, for each atom $U \in \nit{Atoms}(\mc{Q})$, introduce a fresh atom $U^\prime$, whose  predicate's arity is that of the predicate of $U$ minus 1, and it has the same variables as the atom $U$, but $x$ and $y$, which are replaced by a single variable $v$.
        (3) Finally, for each tuple $\tau$ from the relations of the atoms in $\nit{Atoms}(\mc{Q})$, create the tuple $\tau^\prime$ according to the following:
       \begin{itemize}
           \item[(a)] Start from the new atom $U^\prime$, and replace each variable in it, but $v$, with the constant found in its position in the atom $U$ from the tuple $\tau$.
           \item[(b)] Replace the variable $v$ in $U^\prime$ with a unique fresh constant for each unique combination of constants of $\tau$ in the $x$ and $y$'s position in  atom $U$.
       \end{itemize}

    Now, consider the transformation $f: D \to D^\prime$: \ For a tuple $\tau \in D$,
        \[
        f(\tau) \ := \ \begin{cases}
            \tau^\prime &,\text{ if $\tau$ as a ground atom has the same predicate as some}\\ &  \ \  U \in \nit{Atoms}(x),\\
            \tau &, \text{ otherwise. }
        \end{cases}
        \]
       Here, $\tau^\prime$ is newly created tuple starting from $\tau$.

        By applying this transformation, any homomorphism that maps the atom $U$ to the tuple $\tau$ has its corresponding unique homomorphism $h^\prime$ that maps  the new atom $U^\prime$ to the corresponding tuple $f(\tau)$.

        We need to prove that $\mc{Q}[S] = \mc{Q}^\prime[S^\prime]$, for every $S \subseteq D$, and $S^\prime = \{f(\tau) : \tau \in S\}$. We do this by showing that every tuple $\tau$ is mapped to a unique tuple in $\tau^\prime$, which is equivalent to showing that $f$ is bijective.

        Consider two tuples $\tau,\tau^\prime \in D$. If $(a,b)$ and $(a^\prime,b^\prime)$ are pairs of constants in the positions of $x$ and $y$ for $\tau$ and $\tau^\prime$, respectively, $f(\tau) = f(\tau^\prime)$ only if $\tau = \tau^\prime$, which means that $f$ is injective. This occurs even when $\tau$ is not from a relation appearing in  $\nit{Atoms}(x)$, because $f(\tau) = \tau$ in this case. \
        Furthermore, by the way $D^\prime$ is created, every tuple in it will have its correspondent tuple in $D$, implying that $f$ is surjective. Then, $f$ is a bijection.

        Since $\mc{Q}[S] = \mc{Q}^\prime[S^\prime]$, for every $S \subseteq D$, the bijection also holds with a non-empty set of exogenous tuples, because each subset of the original (and of the target) instance now needs to contain $D^\nit{ex}$ ($D^{\prime^\nit{ex}}$, resp.). {\boxtheorem}

\vspace{2mm}
\noindent {\bf Proof of Proposition \ref{lemma:coin_var}. \ }
        \noindent (a) Appealing to Proposition \ref{prop:coinc}, let $f^\nit{red}$ be the transformation function that takes in $D$ and outputs $D^\nit{red}$. Since $\mc{Q}[S] = \mc{Q}^\nit{red}[S^\nit{red}]$, for every $S \subseteq D$, and $S^\nit{red} = \{f^\nit{red}(\tau) : \tau \in S\}$, it follows that, for every $\tau \in D^\nit{en}$, \  $\nit{CE}(D,\mc{Q},\tau) = \nit{CE}(D^\nit{red},\mc{Q}^\nit{red},f^\nit{red}(\tau))$ and $\rho(D,\mc{Q},\tau) = \rho(D^\nit{red},\mc{Q}^\nit{red},f^\nit{red}(\tau))$. \ Then, {\sf CES} and {\sf Resp} are aligned for $(\mc{Q},D)$ iff they are aligned for $(\mc{Q}^\nit{red},D^\nit{red})$.

\vspace{1mm}
        \noindent (b) Let $(x,y)$ be a pair of variables of the SJF-BCQ $\mc{Q}^\prime$,  such that: (i) $\nit{Atoms}(x) = \nit{Atoms}(y)$, and (ii) the result of mapping $(x,y)$ to a variable $v$ according to Proposition \ref{prop:coinc} results in $\mc{Q}^\nit{red}$. Consider the  function $f$ that, for a DB instance for (the schema of) $\mc{Q}^\prime$, assigns to each pair of constants in the positions of the variables $x$ and $y$, a unique fresh constant. 
        
        Let us define and denote with $f^{-1}$ an inverse of $f$ that maps each constant of a DB instance $D^\nit{red}$ for $\mc{Q}^\nit{red}$ to a pair of constants with the same value. Now, given any particular $D^\nit{red}$ for the query $\mc{Q}^\nit{red}$, we can build a an instance $D^\prime$ (for $\mc{Q}^\prime$) by applying $f^{-1}$ to each of the tuples in the instance $D^\nit{red}$. In this case, the function $f$ maps each pair of (equal) constants in the position of the variables $x$ and $y$ (of the query $\mc{Q}^\prime$) to the same and single constant. \
        By Proposition \ref{prop:coinc}, $\mc{Q}^\prime[S^\prime] = \mc{Q}^\nit{red}[S^\nit{red}]$ for every $S^\prime \subseteq D^\prime$ and $S^\nit{red} = \{f(\tau): \tau \in S^\prime\}$; and, therefore, for every $\tau \in D^{\prime^\nit{en}}$, $\nit{CE}(D^\prime,\mc{Q}^\prime,\tau) = \nit{CE}(D^\nit{red},\mc{Q}^\nit{red},f(\tau))$ and $\rho(D^\prime,\mc{Q}^\prime,\tau) = \rho(D^\nit{red},\mc{Q}^\nit{red},f(\tau))$. 
        
        It follows that {\sf CES} and {\sf Resp} are always aligned for $(\mc{Q}^\prime,D^\prime)$ iff they are aligned for $(\mc{Q}^\nit{red}, D^\nit{red})$. \
        Given that the $Q^\nit{red}$ is the reduced version of the query $\mc{Q}$, we can iterate the just described process, obtaining an instance $D$ for $\mc{Q}$ (by the concatenation of all applications of $f^{-1}$). This  gives rise to an inverse $f^{\nit{red}^{-1}}$. From the construction of each intermediate instance, it follows that, given $D^\nit{red}$ for $\mc{Q}^\nit{red}$, the scores are aligned for $(\mc{Q},D)$ iff they are aligned for $(\mc{Q}^\nit{red},D^\nit{red})$, where $D = f^{\nit{red}^{-1}}(D^\nit{red})$.
        \boxtheorem

\vspace{2mm}
\noindent {\bf Proof of Proposition \ref{prop:multicomp_pos}. \ }
        Consider $\mc{Q}_n\!: \ \exists x_1 \cdots \exists x_n(R_1(x_1) \land \ldots R_n(x_n))$, and an instance $D$. \  $\mc{Q}_n$ has $n$ components, and it is the reduced version of any query with more that one component, with one atom each. Denote with $r_i$ the number of tuples in relation $R_i$ of $D$. Consider $\mc{Q}_i\!: \ \exists x_i (R_i(x_i))$, for  $i = 1,\ldots,n$.  \ {\sf Resp} and {\sf CES} for an endogenous tuple $\tau \in D^\nit{en}$ from  $R_i$, with no exogenous tuples, are given by:
        $$\rho(D,\mc{Q},\tau) = \frac{1}{r_i}, \ \ \ \ \ \ \mbox{and}  \ \ \ \ \
            \nit{CE}(D,\mc{Q},\tau) = \frac{2}{2^{r_i} - 1} \cdot \prod_{i=1,\ldots,n} P(\mc{Q}_i).$$
        Furthermore, $\rho(D,\mc{Q},\tau) =  \nit{CE}(D,\mc{Q},\tau) = 0$ if $R_i$ has exogenous tuples. \
        Notice that: (1) for  {\sf CES}, the product $\prod_{i=1,\ldots,n }P(\mc{Q}_i)$ is constant for all tuples, and (2) all tuples from the same relation will have the same {\sf Resp} and {\sf CES}. \\

        Now, consider two endogenous tuples from any two different relations, say $\tau_j$ and $\tau_k$ from $R_j$ and $R_k$, respectively. Then, the following two cases arise:
        \begin{enumerate}[(i.)]
            \item No exogenous tuples exists from relations $R_i$ and $R_j$. It follows that $r_i \leq r_j$ iff $\rho(D,\mc{Q},\tau_i) \geq \rho(D,\mc{Q},\tau_j)$ iff $\nit{CE}(D,\mc{Q},\tau_i) \geq \nit{CE}(D,\mc{Q},\tau_j)$.
            \item There exists an exogenous tuple in relation $R_i$ and/or $R_j$. If this is the case, one or both tuples will have score $0$.
        \end{enumerate}
        Then, by Proposition \ref{lemma:coin_var}, since the scores are aligned for $\mc{Q}_{n}$ for any instance, the scores are aligned for any BCQ with $n$ components, where each component has a single atom. \boxtheorem

\vspace{2mm}
        \noindent {\bf Proof of Proposition \ref{prop:q_2comp}. \ }
        First, consider two components, $C_1$ and $C_2$, from the query, such that $|C_2| \geq 2$.
\ From $C_1$ select one atom, denoted with $A_R$. It will be assumed that $A_R \in \nit{Atoms}(x)$. Similarly, from $C_2$ select two atoms, $A_S$ and $A_T$, such that, for a variable $y$, $A_S, A_T \in \nit{Atoms}(y)$.

We build an instance $D$ from $D^\star$ in Example \ref{ex:counter_q_2comp},  as follows:

        \vspace{1mm} \noindent (a) For each atom $U_x \in \nit{Atoms}(x)$ and for each tuple $\tau_R$ from the relation $R$ of $D^\star$, we create a tuple from $U_R$ by putting in the $x$'s position the value of $x$ in the tuple $\tau_R$, and replacing by a constant $c^\prime$ the rest of variables. Only the tuples created from $A_R$ are endogenous.

        \vspace{1mm} \noindent (b) For each atom $U_S,U_T \in \nit{Atoms}(y)$ and for each tuple $\tau_S$ and $\tau_T$ from the relations $S$ and $T$ of $D^\star$, we create a tuple from $U_S$ and $U_T$ by putting in the $y$'s position the value of $y$ in the tuple $\tau_S$ and $\tau_T$, respectively, and replacing by a constant $c^\prime$ the rest of variables. Only the tuples created from $A_S$ and $A_T$ are endogenous, the rest will be exogenous.

        \vspace{1mm} \noindent (c) For each atom $U \not \in (\nit{Atoms}(x) \cup \nit{Atoms}(y))$,  a tuple is created by replacing all variables in $U$ with a constant $c^\prime$. All tuples created in this way will be exogenous.

        By construction of $D$,  each endogenous tuple in it has its corresponding tuple in $D^\star$, and  {\sf CES} and {\sf Resp} of each are the same as in $D^\star$. It follows that both scores are not aligned for $(\mc{Q},D)$. {\boxtheorem}

    \vspace{2mm}
  \noindent {\bf Proof of Theorem \ref{theo:ces_resp_aligned}.} \
        (a) Consider a BCQ with a single component $\mc{Q}$ and its reduced version $\mc{Q}^\nit{red}$. \ Since $|\nit{Coin}(\mc{Q})| = |\nit{Var}(\mc{Q}^\nit{red})|$, Propositions \ref{proposition:ces_resp_1} and  \ref{proposition:ces_resp_2} apply to $\mc{Q}^\nit{red}$. \ By Proposition \ref{lemma:coin_var}, they  also apply to $\mc{Q}$. It follows that, if $\mc{Q}$ has a single component, then the claim of the theorem holds.

        \noindent (b) Now, for a query $\mc{Q}$ with at least two components, Proposition \ref{prop:multicomp_pos} and Proposition \ref{prop:q_2comp} hold. It follows that, if $\mc{Q}$ has more than one component, the claim of Theorem \ref{theo:ces_resp_aligned} holds. \boxtheorem

\vspace{2mm}
     \noindent {\bf Proof of Corollary \ref{coro:multi_single_alignment}. \ }
        Assume that {\sf CES} and {\sf Resp} are aligned for $(\mc{Q},D)$, where $D$ is an arbitrary instance. \ By Propositions \ref{prop:multicomp_pos} and \ref{prop:q_2comp}, all the components of $\mc{Q}$ have a single atom.

        By Proposition \ref{lemma:coin_var}, the two scores are aligned for a single-atom query $\mc{Q}_i$  iff they are aligned for its reduced version $\mc{Q}^\nit{red}_i$, which is a single-atom and single-variable query. \ By Proposition \ref{proposition:ces_resp_1}, the scores are aligned for each query $\mc{Q}^\nit{red}_i$, which proves the statement. {\boxtheorem}

\vspace{1mm}
    \noindent {\bf Proof of Proposition \ref{prop:ces_resp_2_noex}. \ }
        The case $|\nit{Atoms}(\mc{Q})| = 1$ is included  in Proposition \ref{proposition:ces_resp_1}.\
        We first consider a query $\mc{Q}$ with a single component and without non-trivial sets of coincident variables. We will extend the result for such a query using Proposition \ref{lemma:coin_var}.

        Consider the case $|\nit{Atoms}(\mc{Q})| = 2$, and $|\nit{Var}(\mc{Q})| = 2$ or $3$. We proceed by showing that  {\sf CES} and {\sf Resp} are aligned for any pair $(\mc{Q}_{\sf RS},D)$, where $D$ is an instance with or without exogenous tuples, and $\mc{Q}_{\sf RS}$ (also used in Example \ref{ex:comp_ces_shapley}) is the following query:
        \begin{equation}
            \mc{Q}_{\sf RS}\!: \  \exists x \exists y \exists z(R(x,y) \land S(x,z)),
        \end{equation}
        Notice that $|\nit{Atoms}(\mc{Q}_{\sf RS})| = 2$, and $|\nit{Var}(\mc{Q}_{\sf RS})| = 3$. Moreover, if  {\sf CES} and {\sf Resp} are aligned for this case, the scores will be also aligned for a query $\mc{Q}^\prime$, such that $|\nit{Atoms}(\mc{Q}^\prime)| = 2$ and $|\nit{Var}(\mc{Q}^\prime)| = 2$.

        Consider an  instance $D$ without dummy tuples. Then, two scenarios arise:

        \noindent (1) All tuples in $D$ share the same constant in the $x$'s position of the atoms in $\mc{Q}_{\sf RS}$; or

        \noindent (2) $D$ contains at least two tuples where the constants in the $x$'s position of each tuple are different.

        We first consider case (1). Let $\tau_R$ ($r$) and $\tau_S$ ($s$) be a tuple (number of tuples) from (in) relations $R$ and $S$, respectively. Notice that all tuples $\tau_R$ have the same score ({\sf Resp} and {\sf CES}). The same holds for all tuples $\tau_S$. This implies that obtaining the alignment only requires comparing the relative ordering of $\tau_R$ and $\tau_S$. We now compute {\sf Resp} and {\sf CES} for each tuple:

        \vspace{-5mm}

        \begin{align*}
            \rho(D,\mc{Q}_{\sf RS},\tau_{R}) &= \frac{1}{r}, \ \ \ \ \
            \nit{CE}(D,\mc{Q}_{\sf RS},\tau_{R}) = P_D(\mc{Q}_{\sf RS}) \times \frac{2}{2^r - 1}, \\
            \rho(D,\mc{Q}_{\sf RS},\tau_{S}) &= \frac{1}{s}, \ \ \ \ \
            \nit{CE}(D,\mc{Q}_{\sf RS},\tau_{S}) = P_D(\mc{Q}_{\sf RS}) \times \frac{2}{2^s - 1}.
        \end{align*}

        \vspace{-2mm}
        It holds: \  $r >,<,= s$ \ iff \ $\rho(D,\mc{Q}_{\sf RS},\tau_{R}) <,>,= \rho(D,\mc{Q}_{\sf RS},\tau_{S})$ \ iff \linebreak $\nit{CE}(D,\mc{Q}_{\sf RS},\tau_{R}) <,>,= \nit{CE}(D,\mc{Q}_{\sf RS},\tau_{S})$. Therefore, the scores are aligned for these tuples.

        Now, for case (2), $D$ becomes $\mc{Q}$-separable with the partition $\{D_{c_1},\ldots,D_{c_s}\}$, where each $D_{c_i}$ contains all the tuples that have the constant $c_i$ in $x$'s position of the atoms in $\mc{Q}_{\sf RS}$.\
        Consider two tuples $\tau_i$ and $\tau_j$ in $D_{c_i}$ and $D_{c_j}$, resp., with $i,j \in \{1,\ldots,s\}$. \ By Proposition \ref{prop:scores_q-separable}, ${\sf sc}(D,\mc{Q}_{\sf RS},\tau_i) >,<,= {\sf sc}(D,\mc{Q}_{\sf RS},\tau_j)$ iff ${\sf sc}(D_{c_i} \cup D_{c_j},\mc{Q}_{\sf RS},\tau_i) >,<,= {\sf sc}((D_{c_i} \cup D_{c_j},\mc{Q}_{\sf RS},\tau_j)$, with ${\sf sc}(\cdot)$ being {\sf Resp} or {\sf CES}. It follows that we only need to show that the scores are aligned for $D_{c_i} \cup D_{c_j}$.

        Two cases arise: (a) $i = j$, implying that $D_{c_i} \cup D_{c_j} = D_{c_i} = D_{c_j}$, and (b) $i \neq j$. Case (a) was already shown for case (1) above. \
        For (b), let $D_{c_i,c_j} = D_{c_i} \cup D_{c_j}$. Additionally, denote $\tau_{R,i}$ ($r_i$) and $\tau_{S,i}$ ($s_i$) a tuple (number of tuples) in $D_{c_i}$ from (in) relations $R$ and $S$. Similarly, $\tau_{R,j}$ ($r_j$) and $\tau_{S,j}$ ($s_j$) denote a tuple and the number of tuples in $D_{c_j}$ from (in) relations $R$ and $S$.

        Since all tuples from the same relation and database have the same score, it suffices to show that the scores are aligned for any combination of two tuples from $\{\tau_{R,i}, \tau_{S,i}, \tau_{R,j}, \tau_{S,j}\}$: (i) $\tau_{R,i}$ and $\tau_{S,i}$, (ii) $\tau_{R,i}$ and $\tau_{R,j}$ , (iii) $\tau_{R,i}$ and $\tau_{S,j}$ , (iv) $\tau_{S,i}$ and $\tau_{R,j}$ , (v) $\tau_{S,i}$ and $\tau_{S,j}$ and (vi) $\tau_{R,j}$ and $\tau_{S,j}$.

        Notice that comparisons (i) and (vi) were shown in case (1) above; and, therefore, we only need to show the remaining cases.
\ We now compute the {\sf Resp} and {\sf CES} for each tuple:

\vspace{-6mm}
        \begin{align*}
            \rho(D_{c_i,c_j},\mc{Q}_{\sf RS},\tau_{R,i}) &= \frac{1}{r_i + \min\{r_j,s_j\}}, \\
            \nit{CE}(D_{c_i,c_j},\mc{Q}_{\sf RS},\tau_{R,i}) &= P_{D_{c_i}}(\mc{Q}_{\sf RS}) \times \frac{1}{2^{r_i}-1} \times (1 - P_{D_{c_j}}(\mc{Q}_{\sf RS})), \\
            \rho(D_{c_i,c_j},\mc{Q}_{\sf RS},\tau_{R,j}) &= \frac{1}{r_j + \min\{r_i,s_i\}}, \\
            \nit{CE}(D_{c_i,c_j},\mc{Q}_{\sf RS},\tau_{R,j}) &= P_{D_{c_j}}(\mc{Q}_{\sf RS}) \times \frac{1}{2^{r_j}-1} \times (1 - P_{D_{c_i}}(\mc{Q}_{\sf RS})),
            \end{align*}

            \begin{align*}
            \rho(D_{c_i,c_j},\mc{Q}_{\sf RS},\tau_{S,i}) &= \frac{1}{s_i + \min\{r_j,s_j\}}, \\
            \nit{CE}(D_{c_i,c_j},\mc{Q}_{\sf RS},\tau_{S,i}) &= P_{D_{c_i}}(\mc{Q}_{\sf RS}) \times \frac{1}{2^{s_i}-1} \times (1 - P_{D_{c_j}}(\mc{Q}_{\sf RS})), \\
            \rho(D_{c_i,c_j},\mc{Q}_{\sf RS},\tau_{S,j}) &= \frac{1}{s_j + \min\{r_i,s_i\}}, \\
            \nit{CE}(D_{c_i,c_j},\mc{Q}_{\sf RS},\tau_{S,j}) &= P_{D_{c_j}}(\mc{Q}_{\sf RS}) \times \frac{1}{2^{s_j}-1} \times (1 - P_{D_{c_i}}(\mc{Q}_{\sf RS})),
        \end{align*}

        \vspace{-2mm}\noindent with $P_{D_{c_i}}(\mc{Q}_{\sf RS}) = \frac{(2^r_i - 1) \times (2^s_i - 1)}{2^{r_i} \times 2^{s_i}}$ and $P_{D_{c_j}}(\mc{Q}_{\sf RS}) = \frac{(2^r_j - 1) \times (2^s_j - 1)}{2^{r_j} \times 2^{s_j}}$. \
        For simplicity, we will assume $r_i < s_i$ and $r_j < s_j$, but all the steps can be achieved by establishing different relations, say $r_i > s_i$ and $r_j > s_j$. \
        This initial assumption will simplify the amounts $\min\{r_i, s_i\}$ and $\min\{r_j, s_j\}$ to $r_i$ and $r_j$, respectively.

        Now, we show that the scores are aligned for each of the remaining cases.

        For (ii),  $\rho(D_{c_i,c_k},\mc{Q}_{\sf RS},$ $\tau_{R,i}) = \rho(D_{c_i,c_k},\mc{Q}_{\sf RS},\tau_{R,j})$, and therefore, the scores will be always aligned for these tuples. For (iii), we have that $\rho(D_{c_i,c_k},\mc{Q}_{\sf RS},\tau_{R,i}) > \rho(D_{c_i,c_k},\mc{Q}_{\sf RS},\tau_{S,j})$, which is directly obtained by comparing the given expressions. Therefore, we need to show that $\nit{CE}(D_{c_i,c_k},\mc{Q}_{\sf RS},\tau_{R,i}) > \nit{CE}(D_{c_i,c_k},\mc{Q}_{\sf RS},\tau_{S,j})$. This can be obtained by simple algebraic manipulation:
        \begin{align*}
            \nit{CE}(D_{c_i,c_k},\mc{Q}_{\sf RS},\tau_{R,i}) &> \nit{CE}(D_{c_i,c_k},\mc{Q}_{\sf RS},\tau_{S,j})? \\
            \frac{(2^{s_i} - 1)}{2^{r_i} \times 2^{s_i}} \times \frac{(2^{r_j} + 2^{s_j} - 1)}{2^{r_j} \times 2^{s_j}} &> \frac{(2^{r_j} - 1)}{2^{r_j} \times 2^{s_j}} \times \frac{(2^{r_i} + 2^{s_i} - 1)}{2^{r_i} \times 2^{s_i}}? \\
            (2^{s_i} - 1) \times (2^{r_j} + 2^{s_j} - 1) &> (2^{r_j} - 1) \times (2^{r_i} + 2^{s_i} - 1)? \\
            1 + \frac{2^{s_j}}{2^{r_j} - 1} &> 1 + \frac{2^{r_i}}{2^{s_i}-1}?
        \end{align*}
       The last inequality is true, by our first assumption that $r_i < s_i$ and $r_j < s_j$. It follows that, for these two tuples, the scores will be always aligned.

        Now, for (iv), notice that it is equivalent to (iii), subject to replacement of tuples: $\tau_{R,j}$ instead of $\tau_{R,i}$, and $\tau_{S,i}$ instead of $\tau_{S,j}$.

        Finally, for (v), $\rho(D_{c_i,c_k},\mc{Q}_{\sf RS},\tau_{S,i})>\rho(D_{c_i,c_k},\mc{Q}_{\sf RS},\tau_{S,j})$ iff $s_i - r_i < s_j - r_j$, which is obtained directly by comparing the {\sf Resp} expressions.\footnote{In the same way, we can show that $s_i - r_i > s_j - r_j$ iff $\rho(D_{c_i,c_k},\mc{Q}_{\sf RS},\tau_{S,i}) < \rho(D_{c_i,c_k},\mc{Q}_{\sf RS},\tau_{S,j})$ iff $\nit{CE}(D_{c_i,c_k},\mc{Q}_{\sf RS},\tau_{S,i}) < \nit{CE}(D_{c_i,c_k},\mc{Q}_{\sf RS},\tau_{S,j})$.} Now, we check that $\nit{CE}(D_{c_i,c_k},\mc{Q}_{\sf RS},\tau_{S,i})>\nit{CE}(D_{c_i,c_k},\mc{Q}_{\sf RS},\tau_{S,j})$ iff $s_i - r_i < s_j - r_j$, which can be verified by algebraic manipulation of the {\sf CES} expressions:
\begin{align*}
            \nit{CE}(D_{c_i,c_k},\mc{Q}_{\sf RS},\tau_{S,i}) &> \nit{CE}(D_{c_i,c_k},\mc{Q}_{\sf RS},\tau_{S,j})? \\
            \frac{(2^{r_i} - 1)}{2^{r_i} \times 2^{s_i}} \times \frac{(2^{r_j} + 2^{s_j} - 1)}{2^{r_j} \times 2^{s_j}} &> \frac{(2^{r_j} - 1)}{2^{r_j} \times 2^{s_j}} \times \frac{(2^{r_i} + 2^{s_i} - 1)}{2^{r_i} \times 2^{s_i}}? \\
            (2^{r_i} - 1) \times (2^{r_j} + 2^{s_j} - 1) &> (2^{r_j} - 1) \times (2^{r_i} + 2^{s_i} - 1)? \\
            1 + \frac{2^{s_j}}{2^{r_j} - 1} &> 1 + \frac{2^{s_i}}{2^{r_i}-1}? \\
            2^{s_j - r_j} - \frac{1}{2^{s_j}} &< 2^{s_i - r_i} - \frac{1}{2^{s_i}}? \\
            \frac{1}{2^{s_i}} - \frac{1}{2^{s_j}} &< 2^{s_i - r_i} - 2^{s_j - r_j}?
        \end{align*}
        In the last inequality, the RHS is greater than one, while the LHS is smaller than one. Therefore, the scores are aligned for (v). It follows that, since the scores are aligned for any arbitrary pair of tuples in $D$, the scores are aligned for any possible instance. \boxtheorem

\vspace{2mm}
        \noindent {\bf Proof of Proposition \ref{prop:ces_reps_qrnsm_noexo}. \ }
        By Proposition \ref{lemma:coin_var}, we can consider the reduced form of the query, as in Remark \ref{rem:red}, which will be denoted with
        $\mc{Q}^\nit{red}_{R_{1},S_{m}}\!: \  \bar{\exists} (R_1(x,y) \wedge S_1(x) \wedge \ldots \wedge S_m(x))$. Notice that the case  $m = 1$ is already covered by Proposition \ref{prop:ces_resp_2_noex}. From now on, we will assume $m \geq 2$.
\ Now, consider an instance $D$ without dummy tuples. Two cases arise:

        \noindent (1) All tuples in $D$ share the same constant in the $x$'s position of the atoms in $\mc{Q}^\nit{red}_{R_{1},S_{m}}$; or

        \noindent (2) $D$ contains at least two tuples where the constants in the $x$'s position of each tuple are different.

        We first deal with case (1). \ Let $\tau_R$ and $\tau_{s_i}$ a tuple from relation $R$ and $S_i$, respectively, with $i \in \{1,\ldots,m\}$, and $r$ the total number of tuples in the relation $R$. Note that all tuples $\tau_R$ have the same scores ({\sf Resp} and {\sf CES}). The same holds for all tuples $\tau_{S_i}$ with $i \in \{1,\ldots,m\}$. This implies that obtaining alignment only requires comparing the relative ordering of $\tau_R$ and $\tau_{S_i}$. We now compute {\sf Resp} and {\sf CES} for each tuple:
        
        \begin{align*}
            \rho(D,\mc{Q}^\nit{red}_{R_{1},S_{m}},\tau_R) &= \frac{1}{1+r}, \ \ \ \ \
            \nit{CE}(D,\mc{Q}^\nit{red}_{R_{1},S_{m}},\tau_R) = 2^{-(r+m-1)}, \\
            \rho(D,\mc{Q}^\nit{red}_{R_{1},S_{m}},\tau_{S_i}) &= \frac{1}{2}, \ \ \ \ \
            \nit{CE}(D,\mc{Q}^\nit{red}_{R_{1},S_{m}},\tau_{S_i}) = 2^{-(r+m-1)}.
        \end{align*}
        Since $ \nit{CE}(D,\mc{Q}^\nit{red}_{R_{1},S_{m}},\tau_R) = \nit{CE}(D,\mc{Q}^\nit{red}_{R_{1},S_{m}},\tau_{S_i})$,  the scores are aligned.

        Now, for case (2), we have that $D$ is $\mc{Q}^\nit{red}_{R_{1},S_{m}}$-separable with partition $\{D_{c_1},$ $\ldots,D_{c_s}\}$, with each $D_{c_i}$ containing all tuples for which the constant $c_i$ appears in the $x$'s position. Denote with $\tau$ and $\tau^\prime$ two arbitrary tuples from $D$. Two scenarios arise:

        \noindent (a) $\tau,\tau^\prime \in D_{c_j}$ for a given $i \in \{c_1,\ldots,c_s\}$. Then, the scores will be aligned for $D$ iff they are aligned $D_{c_j}$ (see Proposition \ref{prop:scores_q-separable}). By case (1) above, this holds, and, therefore, the scores are aligned for these pair of tuples.

        \noindent
        (b) $\tau \in D_{c_j}$ and $\tau^\prime \in D_{c_k}$, with $j,k \in \{1,\ldots,s\}$, $j \neq k$. Let $\tau_{R,j}$ ($r_j$), $\tau_{R,k}$ ($r_k$), $\tau_{S_i,j}$ and $\tau_{S_i,k}$ be a tuple (number of tuples) from relations $R$ in $D_{c_j}$, $R$ in $D_{c_k}$, $S_i$ in $D_{c_j}$ and $S_i$ in $D_{c_k}$, respectively.

        In order to show that the scores are aligned, we do it for the following individual cases: (i) $\tau_{R,j}$ and $\tau_{R,k}$, (ii) $\tau_{R,j}$ and $\tau_{S_i,j}$, (iii) $\tau_{R,j}$ and $\tau_{S_i,k}$, (iv) $\tau_{R,k}$ and $\tau_{S_i,j}$, (v) $\tau_{R,k}$ and $\tau_{S_i,k}$; and  (vi) $\tau_{S_i,j}$ and $\tau_{S_i,k}$. Notice that cases (ii) and (v) were shown in case (1) above. Additionally, cases (iii) and (iv) are the same, subject to tuple permutation. Therefore, we only need to show (i), (iii) and (vi). We now compute the {\sf Resp} and {\sf CES} for each of them:
        
        \vspace{-5mm}
        \begin{align*}
            \rho(D,\mc{Q}^\nit{red}_{R_{1},S_{m}},\tau_{R,j}) &= \frac{1}{1+r_j}, \\
            \nit{CE}(D,\mc{Q}^\nit{red}_{R_{1},S_{m}},\tau_{R,j}) &= P_{D_{c_j}}(\mc{Q}^\nit{red}_{R_{1},S_{m}}) \times \frac{2}{2^{r_j} - 1} \times \left( 1 - P_{D_{c_k}}\left(\mc{Q}^\nit{red}_{R_{1},S_{m}}\right) \right), \\ 
            \rho(D,\mc{Q}^\nit{red}_{R_{1},S_{m}},\tau_{R,k}) &= \frac{1}{1+r_k}, \\
            \nit{CE}(D,\mc{Q}^\nit{red}_{R_{1},S_{m}},\tau_{R,k}) &= P_{D_{c_k}}(\mc{Q}^\nit{red}_{R_{1},S_{m}}) \times \frac{2}{2^{r_k} - 1} \times \left( 1 - P_{D_{c_j}}\left(\mc{Q}^\nit{red}_{R_{1},S_{m}}\right) \right), \\  
            \rho(D,\mc{Q}^\nit{red}_{R_{1},S_{m}},\tau_{S_i,j}) &= \frac{1}{2}, \\
            \nit{CE}(D,\mc{Q}^\nit{red}_{R_{1},S_{m}},\tau_{S_i,j}) &= P_{D_{c_j}}(\mc{Q}^\nit{red}_{R_{1},S_{m}}) \times 2 \times \left( 1 - P_{D_{c_k}}\left(\mc{Q}^\nit{red}_{R_{1},S_{m}}\right) \right), \\ 
            \rho(D,\mc{Q}^\nit{red}_{R_{1},S_{m}},\tau_{S_i,k}) &= \frac{1}{2}, \\
            \nit{CE}(D,\mc{Q}^\nit{red}_{R_{1},S_{m}},\tau_{S_i,k}) &= P_{D_{c_k}}(\mc{Q}^\nit{red}_{R_{1},S_{m}}) \times 2 \times \left( 1 - P_{D_{c_j}}\left(\mc{Q}^\nit{red}_{R_{1},S_{m}}\right) \right), 
        \end{align*}
        with $P_{D_{c_j}}\left(\mc{Q}^\nit{red}_{R_{1},S_{m}}\right) = \frac{(2^{r_j} - 1)}{2^{r_j + m}}$ and  $P_{D_{c_k}}\left(\mc{Q}^\nit{red}_{R_{1},S_{m}}\right) = \frac{(2^{r_k} - 1)}{2^{r_k + m}}$.

        \vspace{2mm}For (i), it is clear that $r_j >,<,= r_k$ \ iff \ $\rho(D,\mc{Q}^\nit{red}_{R_{1},S_{m}},\tau_{R,j})) <,>,=$ \linebreak $ \rho(D,\mc{Q}^\nit{red}_{R_{1},S_{m}},\tau_{R,k})$ iff $\nit{CE}(D,\mc{Q}^\nit{red}_{R_{1},S_{m}},\tau_{R,j}) <,>,= \nit{CE}(D,\mc{Q}^\nit{red}_{R_{1},S_{m}},\tau_{R,k})$.

        \vspace{2mm}For (iii), if $r_j = 1$, then $\rho(D,\mc{Q}^\nit{red}_{R_{1},S_{m}},\tau_{R,j}) =  \rho(D,\mc{Q}^\nit{red}_{R_{1},S_{m}},\tau_{S_i,k}) = 1/2$, and therefore, the scores are aligned. \ If $r_j > 1$, we have that $\rho(D,\mc{Q}^\nit{red}_{R_{1},S_{m}},\tau_{R,j}) <  \rho(D,\mc{Q}^\nit{red}_{R_{1},S_{m}},\tau_{S_i,k})$. So, we need to show that $\nit{CE}(D,\mc{Q}^\nit{red}_{R_{1},S_{m}},\tau_{R,j}) \leq $ \linebreak $  \nit{CE}(D,\mc{Q}^\nit{red}_{R_{1},S_{m}},\tau_{S_i,k})$. \ In fact:
        \begin{align*}
            \nit{CE}(D,\mc{Q}^\nit{red}_{R_{1},S_{m}},\tau_{R,j}) &\leq  \nit{CE}(D,\mc{Q}^\nit{red}_{R_{1},S_{m}},\tau_{S_i,k})? \\
            \frac{1}{2^{r_j + m -1}} \times \left( 1 - \frac{2^{r_k} - 1}{2^{r_k + m}} \right) &\leq  \frac{2^{r_k} - 1}{2^{r_k + m -1}} \times \left( 1 - \frac{2^{r_j} - 1}{2^{r_j + m}} \right)? \\
            2^{r_k + m} - 2^{r_k} + 1 &\leq (2^{r_k} - 1) \times (2^{r_j} - 2^{r_j} + 1)? \\
            \frac{2^{r_k + m}}{2^{r_k} - 1} &\leq 2^{r_j + m} - 2^{r_j} + 2?
        \end{align*}
        Now, since the RHS is strictly increasing for $r_j$, it suffices to show the previous inequality with $r_j = 1$:

        \vspace{-8mm}\begin{align*}
             \frac{2^{r_k + m}}{2^{r_k} - 1} &\leq 2^{1 + m} - 2^{1} + 2? \\
             2^{r_k + m} &\leq 2^{1 + m} \times (2^{r_k} - 1)? \\
             2^{r_k - 1} &\leq \times 2^{r_k} - 1?
        \end{align*}
        The last inequality  is always true for $r_k \in \mbb{N}$.

        Finally, for (vi),  {\sf Resp} for $\tau_{S_i,j}$ and $\tau_{S_i,k}$ are the same, and thus, the scores are aligned. \boxtheorem

        \ignore{Consider the instance $D_{R_1,S_m}$  in Table \ref{tab:ces_resp_qr1sm}.

        Denote with $R_a$ and $R_b$ the set of tuples of the form $R_1(a,x)$ and $R_1(b,x)$, respectively, where $a$ and $b$ are constants and $x$ is any constant. Similarly, denote $S_a$ and $S_b$ the set of tuples of the form $S_i(a)$ and $S_i(b)$, respectively, where $1\leq i\leq m$. Note $|R_a| = r_a$, $|R_b| = r_b$ and $|S_a| = |S_b| = m$. Moreover, the sets $(R_a,R_b,S_a,S_b)$ forms a partition of $D$.\
        Also note that $\rho(\tau) = \rho(\tau^\prime)$ for any pair of tuples that belongs to the same set $R_a,R_b,S_a,S_b$. Therefore, we need to prove that $\tau \preceq^{\rho} \tau^\prime$ iff $\tau \preceq^\nit{CE} \tau^\prime$ for: \
        (a) $\tau \in R_a$ and $\tau^\prime \in S_a$; \
        (b) $\tau \in R_b$ and $\tau^\prime \in S_b$; \
        (c) $\tau \in R_a$ and $\tau^\prime \in R_b$; \
        (d) $\tau \in S_a$ and $\tau^\prime \in S_b$; \
        (e) $\tau \in R_a$ and $\tau^\prime \in S_b$; \
        (f) $\tau \in R_b$ and $\tau^\prime \in S_b$.

        \begin{table}
            \centering
            {\scriptsize
            $\begin{tabu}{l|c|c|}
                \hline
                R_1~  & ~~A~~ & ~~B~~ \\\hline
                     & a & c_{1}\\
                     & \vdots   & \vdots\\
                 & a &  c_{r_a}\\
                & b &  c_{r_a+1}\\
                & \vdots & \vdots\\
                 & b &  c_{r_a + r_b}\\
                \cline{2-3}
            \end{tabu}$~~~~~~~~~~~~
            $\begin{tabu}{l|c|}
                \hline
                S_1 & ~~A~~ \\
                \hline
                 & a \\
                 & b \\
                \cline{2-2}
            \end{tabu}$ ~
            $\begin{tabu}{l|c|}
                \hline
                S_2 & ~~A~~ \\
                \hline
                 & a \\
                 & b \\
                \cline{2-2}
            \end{tabu}$
            ~ \ldots ~
            $\begin{tabu}{l|c|}
                \hline
                S_m & ~~A~~ \\
                \hline
                 & a \\
                 & b \\
                \cline{2-2}
            \end{tabu}$
            }
                \vspace{2mm}
            \caption{\ \ Instance $D_{R_1,S_m}$ with relations $R_1,S_1,S_2,\ldots,S_m$.}
            \label{tab:ces_resp_qr1sm}
        \end{table}
        Now, we compute {\sf Resp} and {\sf CES} for each tuple in $D$:
        \[
            \rho(D_{R_1,S_m},\mc{Q}_{R_1,S_m} , \tau) = \begin{cases}
                1/(r_a + 1) & \text{ , if } \tau \in R_a\\
                1/(r_b + 1) & \text{ , if } \tau \in R_b\\
                1/2 & \text{ , if } \tau \in S_a,S_b
            \end{cases}
        \]
        \[
            \nit{CE}(D_{R_1,S_m},\mc{Q}_{R_1,S_m}, \tau) = \begin{cases}
                \frac{(2^{r_b}\cdot(2^m - 1) + 1)}{2^{r_a + r_b + 2m}} & \text{ , if } \tau \in R_a\\
                \frac{(2^{r_a}\cdot(2^m - 1) + 1)}{2^{r_a + r_b + 2m}} & \text{ , if } \tau \in R_b\\
                \frac{(2^{r_a} - 1)\cdot(2^{r_b}\cdot(2^m - 1) + 1)}{2^{r_a + r_b + 2m}} & \text{ , if } \tau \in S_a \\
                \frac{(2^{r_b} - 1)\cdot(2^{r_a}\cdot(2^m - 1) + 1)}{2^{r_a + r_b + 2m}} & \text{ , if } \tau \in S_b \\
            \end{cases}
        \]
        Notice that $\rho(\tau) \leq \rho(\tau^\prime)$, and $\nit{CE}(\tau) \leq \nit{CE}(\tau^\prime)$ for any $\tau \in R_a$ and $\tau^\prime \in S_a$. The same holds for $\tau \in R_b$ and $\tau^\prime \in S_b$. This proves case (a) and (b). \
        Moreover, $\rho(\tau) = \rho(\tau^\prime)$ for $\tau \in R_a$ and $\tau^\prime \in R_b$, and thus, case (c) holds.

        Consider $\tau \in S_a$ and $\tau^\prime \in S_b$. Then, $\rho(\tau) < \rho(\tau^\prime)$ iff $r_a > r_b$. Similarly, $\nit{CE}(\tau) < \nit{CE}(\tau^\prime)$ iff $r_a > r_b$. Note that if $r_a = r_b$, then $\rho(\tau) = \rho(\tau^\prime)$ and $\nit{CE}(\tau) = \nit{CE}(\tau^\prime)$.
        This proves case (d).

        Consider $\tau \in R_a$ and $\tau^\prime \in S_b$. Then, if $r_a = 1$, $\rho(\tau) = \rho(\tau^\prime)$, and thus, the exact value of {\sf CES} is not important. Consider now $r_a > 1$, then $\rho(\tau) < \rho(\tau^\prime)$. Then, we must show that $\nit{CE}(\tau) \leq \nit{CE}(\tau^\prime)$. This latter holds iff $(2^{m} - 1)\cdot(2^{r_a + r_b} - 2^{r_a} - 2^{r_b}) - 2 \geq 0$. If we substitute $r_b = m = 1$, which are the minimum values that both could take, the following must holds: $2^{r_b}\cdot 3 - 6 \geq 0$, which is always true for $r_b \geq 1$. This proves case (e).
\  Case (f) is analogous to case (e), but considering $\tau \in R_b$ and $\tau^\prime \in S_a$.}

\vspace{2mm}
        \noindent {\bf Proof of Proposition \ref{prop:ces_shap_not_aligned}. \ }
       We will construct an instance  $D$, such that, with $\mc{Q}$, we will  be able to recreate the pair $(\mc{Q}_{\sf RS},D^\star)$ of Example \ref{ex:comp_ces_shapley}.

       W.l.o.g, assume that $|\nit{Var}(A_R) \cap \nit{Var}(A_S)| = |\{x\}| = 1$. \ We also consider two different variables $y \in \nit{Var}(R)$, and $z \in \nit{Var}(A_S)$. We build $D$, starting from $D^\star$, as follows:

        \vspace{1mm} \noindent (a) For each atom $U_y \in \nit{Atoms}(y)$ and for each tuple $\tau_R$ from the relation $R$ of $D^\star$, we create a tuple from $U_y$ by putting in the $x$ and $y$'s position the value of $x$ and $y$ in the tuple $\tau_R$, and replacing by a constant $c^\prime$ the rest of variables. Only the tuples created from $A_R$ are endogenous.

        \vspace{1mm} \noindent (b) For each atom $U_z \in \nit{Atoms}(z)$ and for each tuple $\tau_S$ from the relation $S$ of $D^\star$, we create a tuple from $U_z$ by putting in the $x$ and $z$'s position the value of $x$ and $z$ in the tuple $\tau_S$, and replacing by a constant $c^\prime$ the rest of variables. Only the tuples created from $A_S$ are endogenous.

        \vspace{1mm} \noindent (c) For each atom $U_x \in \nit{Atoms}(x) \smallsetminus (\nit{Atoms}(y) \cup \nit{Atoms}(z))$, we create two tuples from $U_z$ by putting in the $x$'s position the constants $a$ and $b$ respectively, and replacing by a constant $c^\prime$ the rest of variables. All tuples created this way will be exogenous.

        \vspace{1mm} \noindent (d) For each atom $U \not \in (\nit{Atoms}(x) \cup \nit{Atoms}(y) \cup \nit{Atoms}(z))$, a tuple is created by replacing all variables in $U$ with a constant $c^\prime$. All tuples created this way will be exogenous.

        By construction, each endogenous tuple in $D$ has its corresponding tuple in $D^\star$, and the its values for {\sf CES}, {\sf Resp}, and {\sf Shapley} are the same. It follows that the scores are not aligned for $(D,\mc{Q})$. {\boxtheorem}
\end{remark}

\ignore{+++++
\subsection{Summary and Discussion\ignore{on  Sections \ref{sec:ces-resp} and \ref{sub:ces-resp_noex}}}

\red{In this section, we summarize some of the results obtained in this section for BCQs, to provide a higher-level picture.  We also point to cases of queries that still require further investigation. }

For databases that allow to have exogenous tuples, given a BCQ

\comlb{En un par de lugares aqui abajo haces referencia a la Prop. \ref{lemma:coin_var}, que tu quieres que sea un lemma tecnico. Ademas, esa prop. no entrega ningun resultado de (no)alineacion). Es la referencia equivocada?}

  \ignore{ In order to have a global -and also more accurate view- of the results obtained so far in this section, we summarize them next; highlighting, in particular, some open cases. }

\noindent {\bf 1. \ Positive Results for Alignment.}

\comfa{Hacer la explicación con una consulta en particular.
Está es una de las clases? Si
Si la respuesta es alignment, sirve para cualquier base de datos,
Si la respuesta es non-alignment, existe una BBDD \textbf{con} tuplas exogenas en donde no están alineados.
Adicionalmente, las siguientes dos clases de queries cambian si es que se permiten o no las tuplas exogenas.}

\vspace{1mm} \noindent
        (a) \ Propositions \ref{proposition:ces_resp_1} and Lemma \ref{lemma:coin_var}: \  If a query $\mc{Q}$ has a single component, and a single coincidence,\ignore{$|\nit{Coin}(\mc{Q})|$ $ = 1$} \ {\sf CES} and {\sf Resp} are aligned for every possible instance, with or without exogenous tuples.

             \vspace{1mm}
            \noindent  (b) \ Proposition \ref{prop:multicomp_pos}: \  If a query $\mc{Q}$  has two or more components, and each component has only one atom, \ {\sf CES} and {\sf Resp}  are aligned for every possible instance, with or without exogenous tuples.

            \vspace{1mm} \noindent  (c) \ Propositions \ref{prop:ces_resp_2_noex} and \ref{prop:ces_reps_qrnsm_noexo}: \  If a query $\mc{Q}$  has a single component, and at most two atoms; \ignore{with $|\nit{Atoms}(\mc{Q})| \leq 2$} or is the query $\mc{Q}_{R_1,S_m}$ in Proposition \ref{prop:ces_reps_qrnsm_noexo}, \ {\sf CES} and {\sf Resp}  are aligned for every possible instance that does not have exogenous tuples.

\vspace{2mm}
        \noindent {\bf 2. \ Negative Results (Non-Alignment).}

        \vspace{1mm} \noindent
        (a) \ Propositions \ref{proposition:ces_resp_2} and  Pr\ref{lemma:coin_var}: \  If a query $\mc{Q}$ has a single component, and at least two coincidences, \ignore{$|\nit{Coin}(\mc{Q})|$ $ \geq 2$,} \  there is an instance $D$, with exogenous tuples, such that {\sf CES} and {\sf Resp} are not aligned for $(\mc{Q},D)$.

           \vspace{1mm} \noindent
            (b) \  Proposition \ref{prop:q_2comp}: \  If a query $\mc{Q}$ has at least two components, one of them with at least two atoms, \  there is an instance $D$, with exogenous tuples, such the {\sf CES} and {\sf Resp} are not aligned for $(\mc{Q},D)$.

   \vspace{2mm} \noindent {\bf 3. \ Remarks.}

    \vspace{1mm} \noindent
        (a) \ \red{BLA.}

       \comlb{Decir y explicar aqui, como se sugiere en el punto que sigue, que toda BCQ puede clasificada en algunos de los casos 1.(a)-2.(b). Es cierto? }

\comlb{Creo que lo  que esta en verde abajo habria que eliminarlo. No lo veo correcto.}

        \vspace{1mm}
        \noindent (b) \
  \red{On the basis of the syntactic structure of a query, we are able to determine if the scores are aligned or not on instances that are allowed to have exogenous tuples.}

 \green{The positive result (alignment) applies to instances that are allowed to have exogenous tuples. }

 However, for the negative result (non-alignment) in Theorem \ref{theo:ces_resp_aligned},  counter-example instances with exogenous tuples are required. We use them in our proofs, but the requirement is more essential: \  The non-alignment result cannot be extended to the case  where instances have only endogenous tuples.

\comlb{Quiero reemplazar lo que tienes (en verde, dificil de entender) por lo que escribi aqui en rojo. Sin embargo, en la parte subrayada, que tome de lo que tienes en verde, echo de menos lo de ``single component". Y, ademas, de donde se desprende la generalizacion? Chequea, por favor.}

 \vspace{1mm} \noindent
        (c) \ \red{Consider the query $\mc{Q_{\sf RS}} \!: \exists x \exists y \exists z (R(x,y) \land S(x,z))$ (as in Example \ref{ex:comp_ces_shapley}). \ It has a single component, two atoms and two coincidences. Accordingly, it falls in the query classes in 1.(c) and 2.(a). Accordingly, {\sf CES} and {\sf Resp} for it are aligned on every instance without exogenous tuples, and not aligned on some instance with exogenous tuples. \ Actually, the same holds for \underline{every BCQ with exactly two atoms and two or} more coincidences.}

       \green{
        If the instances in item 1(c) are not constrained to only have endogenous tuples, then (by Proposition \ref{proposition:ces_resp_2}) for any query in a particular sub-class of BCQs, there exists and instance, with exogenous tuples, where the scores are not aligned. This particular sub-class of queries corresponds to all queries with exactly two atoms and two or more coincident sets of variables. \
        For instance, consider the query $\mc{Q}_{\sf RS}$ (see Example \ref{ex:comp_ces_shapley}). If exogenous tuples are allowed, we can build an instance, with exogenous tuples, such the scores are not aligned (see Proposition \ref{proposition:ces_resp_2}). However, if the instances only have endogenous tuples, then the scores are always aligned (see Proposition  \ref{prop:ces_resp_2_noex}).}

\comlb{Please, make this below more precise. No es claro que donde sale el claim. Podria referirte a las clases sintacticas e inclusion. }

        \vspace{1mm} \noindent
        (d) \
        \green{The class of queries where the scores are always aligned \red{increases} when exogenous tuples became not allowed. For instance, the class of queries from Proposition \ref{proposition:ces_resp_1} is included in the class of queries from Proposition \ref{prop:ces_resp_2_noex}.}

\vspace{2mm}
     \noindent {\bf 4. \ Absence of Exogenous Tuples.}

   \noindent For instances without exogenous tuples, a complete characterization of the scores alignment, in terms of the structure of the BCQ,   remains open. In particular, the following cases of queries have to be investigated in relation to scores alignment (on instances without exogenous tuples):

    \vspace{1mm} \noindent
        (a) \  \emph{Single-component queries:} \ Queries with three or more atoms. \ In this direction,  one could start analyzing the class of non-hierarchical queries \red{(every one or two-atom BCQ is hierarchical; see Section \ref{sec:back})}. Actually, their reduced versions  are all included in the syntactic class in Proposition \ref{proposition:ces_resp_2}. \ \red{As a consequence, for each such ``reduced" query,  there  is an instance with exogenous tuples where the scores are not aligned.}

       \comlb{So???? Quieres decir que lo que esta en rojo ``sugiere" que la query original, antes de la reduccion, se comporta igual que la reducida? Es decir, que: ``para toda NH query con 3 o mas atomos hay una instancia con exogenous tuples donde no estan alineadas"? \ Por que? Un teorema?}

       \comlb{En verde: example de que? Lo que viene es distinto de lo que esta inmediatamente arriba.}

       \green{For example,}  for
        the non-hierarchical, single-component, three-atom BCQ \
          $\mc{Q}_{\sf RST}\!: \ \exists x \exists y(R(x) \wedge S(x,y) \wedge T(y))$, it was shown in Example \ref{ex:comp_ces_resp_non_hierarchical}, that there is an instance, {\em without exogenous tuples}, where the scores are not aligned.

        We conjecture that, for each non-hierarchical BCQ, there is an instance without exogenous tuples, where {\sf CES} and {\sf Resp} are not aligned. \red{This makes sense considering that  every other non-hierarchical query ``contains"  the query $\mc{Q}_{\sf RST}$ \cite{????}}. \  A possible line of attack could be the use of the technique  described in Remark \ref{rem:technique}, but without appealing to exogenous tuples.

        \comlb{Give some support, reference, for the claim above. Recuerdo que eso lo usamos en alguna parte.}

        \vspace{1mm} \noindent
        (b) \
 \emph{Multi-component queries:}
\ A characterization for the alignment in terms of the structure of the query is still open, except for the case in 1.(b) (associated to  Proposition \ref{prop:multicomp_pos}).

\ignore{\vspace{1mm} \noindent
        (c) \ \emph{ Alignment preservation for multi-component queries:}}

       As shown with Example \ref{ex:q2comp_counterex},  alignment is not preserved from the sub-queries associated to  components of a query to the query.  \ However, there could be a subclass of multi-component queries for which the alignment is preserved.

       \comlb{I do not understand what your want to say with the part in green. Corollary 1 says that alignment is preserved from the query to its subqueries. Here above we are discussing the other direction: from the subqueries to the query.}

       \green{This fact, as shown in Corollary \ref{coro:multi_single_alignment} for instances with exogenous tuples, will immediately imply that the scores are aligned for any sub-query built from its components.}
+++++}

\end{document}